\DeclareSIUnit{\var}{var}
\DeclareSIUnit{\rad}{rad}
\DeclareSIUnit{\umdrehung}{U}
\definecolor{brown}{RGB}{153, 51,0} 
\definecolor{orange}{RGB}{255, 127.5,0}
\definecolor{darkgray}{RGB}{127.5, 127.5,127.5}
\definecolor{cyan}{RGB}{0,255,255} 
\definecolor{white}{RGB}{255,255,255} 
\definecolor{green}{RGB}{0,255,0} 
\definecolor{darkgreen}{RGB}{0,100,0} 
\definecolor{darkred}{RGB}{100,0,0} 
\definecolor{mycyan}{RGB}{0,255,255} 
\definecolor{mygray}{RGB}{204,204,204} 
\definecolor{TUMblue1}{RGB}{0,82,147}%
\definecolor{black}{RGB}{0,0,0}%
\definecolor{TUMcyan}{RGB}{100,160,200}%
\definecolor{TUMorange}{RGB}{227,114,34}%
\definecolor{TUMgrey}{RGB}{156,157,159}%
\definecolor{TUMgreen}{RGB}{162,173,0}%
\definecolor{matlabgreen}{rgb}{0, 0.5, 0}
\definecolor{matlabmagenta}{rgb}{0.75, 0, 0.75}
\definecolor{tumblue}{RGB}{0,101,189} 		
\definecolor{tumblack}{rgb}{0,0,0} 		
\definecolor{tumwhite}{rgb}{1,1,1} 		
\definecolor{tumdarkgray}{rgb}{0.2,0.2,0.2}	
\definecolor{tumgray}{rgb}{0.5,0.5,0.5}
\definecolor{tumlightgray}{rgb}{0.8,0.8,0.8}
\definecolor{tumdarkblue}{RGB}{0,51,89} 	
\definecolor{tumblue2}{RGB}{0,82,147}		
\definecolor{tumblue3}{RGB}{0,115,207}		
\definecolor{tumblue4}{RGB}{100,160,200}	
\definecolor{tumlightblue}{RGB}{152,198,234}  	
\definecolor{tumyellow}{RGB}{255,180,0}
\definecolor{tumorange}{RGB}{227,128,0}
\definecolor{tumred}{RGB}{229,52,24}
\definecolor{tumgreen}{RGB}{145,172,107}
\definecolor{tumlightgreen}{RGB}{181,202,130}
\definecolor{plot_color_my_slow}{rgb}{0.15,0.09,0.93}
\definecolor{plot_color_my_fast}{rgb}{0.28,0.54,0.29}
\definecolor{plot_color_sogi}{rgb}{0.43,0.93,0.97}
\definecolor{plot_color_anf}{rgb}{0,0,0}
\definecolor{plot_color_ref}{rgb}{0.94,0.17,0.08}
\newcommand{\grayline}{\protect\tikz{\protect\draw[very thick,gray] (0,-0.5ex)(0,0)--(3.3ex,0);}}
\newcommand{\reddashedline}{\tikzset{external/export next=false}\tikz{\protect\draw[very thick,dashed,red] (0,-0.5ex)(0,0)--(3.3ex,0);}}
\newcommand{\blueline}{\protect\tikz{\protect\draw[very thick,blue] (0,-0.5ex)(0,0)--(3.3ex,0);}}
\newcommand*{\bluedashdottedline}{\tikzset{external/export next=false}\tikz{\protect\draw[very thick,blue,dashdotted] (0,-0.5ex)(0,0)--(4ex,0);}}
\newcommand*{\bluedottedline}{\tikzset{external/export next=false}\tikz{\protect\draw[very thick,blue,dotted] (0,-0.5ex)(0,0)--(4ex,0);}}
\newcommand*{\bluedashedline}{\tikzset{external/export next=false}\tikz{\protect\draw[very thick,dashed,blue] (0,-0.5ex)(0,0)--(4ex,0);}}
\newcommand*{\cyanline}{\tikzset{external/export next=false}\tikz{\protect\draw[very thick,cyan] (0,-0.5ex)(0,0)--(4ex,0);}}
\newcommand*{\cyandashdottedline}{\tikzset{external/export next=false}\tikz{\protect\draw[very thick,dashdotted,cyan] (0,-0.5ex)(0,0)--(4ex,0);}}
\newcommand*{\cyandashedline}{\tikzset{external/export next=false}\tikz{\protect\draw[very thick,dashed,cyan] (0,-0.5ex)(0,0)--(4ex,0);}}
\newcommand*{\cyandottedline}{\tikzset{external/export next=false}\tikz{\protect\draw[very thick,dotted,cyan] (0,-0.5ex)(0,0)--(4ex,0);}}
						\newcommand{\mv}[1]{\boldsymbol{#1}}
						\newcommand{\mm}[1]{\boldsymbol{#1}}
						\newcommand{\ddtsmall}{\tfrac{\textrm{d}}{\textrm{d}t}}
						\newcommand{\N}{\mathbb{N}}
						\newcommand{\Q}{\mathbb{Q}}
						\newcommand{\R}{\mathbb{R}}
						\newcommand{\C}{\mathbb{C}}
						\newcommand{\Rzp}{\mathbb{R}_{\geq 0}}
						\newcommand{\Rzpos}{\mathbb{R}_{\geq 0}}
						\newcommand{\Rpos}{\mathbb{R}_{> 0}}
						\newcommand{\Cneg}{\mathbb{C}_{< 0}}
						\newcommand{\fset}[1]{\mathcal{#1}}			
						\newcommand{\norm}[1]{\left\lVert #1 \right\rVert}
\DeclareMathOperator{\diag}{diag}
\DeclareMathOperator{\blockdiag}{blockdiag}
\DeclareMathOperator{\rank}{rank}
\DeclareMathOperator{\sat}{sat}
\newcommand{\Linf}{\fset{L}^{\infty}}
\newcommand{\Czero}{\fset{C}}
\newcommand{\musteq}{\stackrel{!}{=}}
\newcommand*{\Iss}{\mathbb{I}_{\textrm{ss}}} 
\newcommand*{\esnorm}[1]{\|#1\|_{\infty}}
\newcommand{\dx}[1]{\textrm{d}#1}
\newcommand{\ve}[1]{\boldsymbol{#1}}
\newcommand{\br}[1]{\left(#1\right)}
\newcommand{\bs}[1]{\left[#1\right]}
\newcommand{\bc}[1]{\left\{#1\right\}}
\newcommand{\sine}[1]{\sin\!\br{#1}}
\newcommand{\cosine}[1]{\cos\!\br{#1}}
\newcommand{\artan}[1]{\arctan\!2\!\br{#1}}
\newcommand{\qqquad}{\qquad\quad}
\newcommand{\yhout}{\widehat{y}}
\newcommand{\yhp}[1]{\widehat{y}_{#1}}
\newcommand{\qhp}[1]{\widehat{q}_{#1}}
\newcommand{\JMAT}{\ve{J}}
\newcommand{\J}{\ve{J}}
\newcommand{\Jsub}[1]{\ve{J}_{#1}}
\newcommand{\Jbar}{\overline{\ve{J}}}
\newcommand{\AMAT}{\ve{A}}
\newcommand{\Asubt}[1]{\ve{A}_{#1}}
\newcommand{\Rsub}[1]{\ve{R}_{#1}}
\newcommand{\lvec}{\ve{l}}
\newcommand{\cyvec}{\ve{c}}
\newcommand{\xsogi}{\widehat{\ve{x}}}
\newcommand{\xsogip}[1]{\widehat{\ve{x}}_{#1}}
\newcommand{\omegah}{\widehat{\omega}}
\newcommand{\omegahp}[1]{\widehat{\omega}_{#1}}
\newcommand{\ey}{e_y}
\newcommand{\xsogissp}[1]{\xsogi_{#1,\infty}}
\newcommand{\eyssp}[1]{e_{#1,\infty}}
\newcommand{\kgainp}[1]{k_{#1}}
\newcommand{\ggainp}[1]{g_{#1}}
\newcommand{\Ytra}[1]{\mathcal{\widehat{Y}}_{#1}\!}
\newcommand{\Qtra}[1]{\mathcal{\widehat{Q}}_{#1}\!}
\newcommand{\Eytra}{\mathcal{E}_{\mathrm{y}}\!\!\:}
\newcommand{\AYtra}[1]{A_{\Ytra{#1}}}
\newcommand{\AQtra}[1]{A_{\Qtra{#1}}}
\newcommand{\AEytra}{A_{\Eytra}}
\newcommand{\PYtra}[1]{\Phi_{\Ytra{#1}}\!}
\newcommand{\PQtra}[1]{\Phi_{\Qtra{#1}}\!}
\newcommand{\PEytra}{\Phi_{\Eytra}\!}
\newcommand{\lamvec}{\ve{\lambda}}
\newcommand{\lamvecp}[1]{\ve{\lambda}_{#1}}
\newcommand{\lamyp}[1]{\lambda_{\mathrm{y},#1}}
\newcommand{\lamqp}[1]{\lambda_{\mathrm{q},#1}}
\newcommand{\SMAT}{\ve{S}}
\newcommand{\Ssub}[1]{\ve{S}_{#1}}
\newcommand{\cpol}{\ve{p}_{\AMAT}}
\newtheorem{theorem}{Theorem}[section]
\newtheorem{proposition}[theorem]{Proposition}
\newtheorem{remark}[theorem]{Remark}
\newtheorem*{remark*}{Remark}
\newtheorem*{example*}{Example}
\newtheorem*{examples*}{Examples}
\newtheorem*{assumption*}{Assumption}
\newtheorem*{definition*}{Definition}
\newtheorem*{definitions*}{Definitions}
\pgfplotsset{compat=newest}
\pgfplotsset{plot coordinates/math parser=false}
\newlength\figureheight
\newlength\figurewidth
\tikzset{set voltage source graphic = voltage source IEC graphic}
\tikzset{voltage source IEC graphic/.style={circuit symbol lines, circuit symbol size = width 3 height 3, shape = generic circle IEC, fill = white,
    /pgf/generic circle IEC/before background={
				\draw[-] (0, -1pt) -- (0, 1pt);
    }
}}
\tikzstyle{sum} = [draw, circle, inner sep = 0pt, minimum width=0.2cm, fill = white]
\begin{document}

\title{Modified second-order generalized integrators with modified frequency locked loop for fast harmonics estimation of distorted single-phase signals \newline (LONG VERSION)}

\author{C.M.~Hackl$^{\ddagger,\star}$ and M.~Landerer$^{\dagger}$}

\renewcommand{\thefootnote}{\fnsymbol{footnote}}

\footnotetext[3]{C.M.~Hackl is with the Munich University of Applied Sciences (MUAS) and head of the ``Control of Renewable Energy Systems (CRES)'' research group at Technical University of Munich (TUM), Germany (e-mail: christoph.hackl@hm.edu).}
\footnotetext[2]{M. Landerer is with the research group ``Control of renewable energy systems'' (CRES) at the Munich School of Engineering (MSE), Technical University of Munich (TUM), Germany (e-mail: markus.landerer@tum.de).}
\footnotetext[1]{Authors are in alphabetical order and contributed equally to the paper. Corresponding author is C.M. Hackl (christoph.hackl@hm.edu).}
\renewcommand{\thefootnote}{\arabic{footnote}}

\maketitle
\thispagestyle{empty}

\begin{abstract}
This paper proposes \emph{modified Second-Order Generalized Integrators} (mSOGIs) for a fast estimation of all harmonic components of arbitrarily distorted single-phase signals such as voltages or currents in power systems. The estimation is based on the internal model principle leading to an overall observer system consisting of parallelized mSOGIs. The observer is tuned by pole placement. For a constant fundamental frequency, the observer is capable of estimating all harmonic components with prescribed settling time by choosing the observer poles appropriately. For time-varying fundamental frequencies, the harmonic estimation is combined with a \emph{modified Frequency Locked Loop} (mFLL) with gain normalization, sign-correct anti-windup and rate limitation. The estimation performances of the proposed parallelized mSOGIs with and without mFLL are illustrated and validated by measurement results. The results are compared to standard approaches such as parallelized standard SOGIs (sSOGIs) and adaptive notch filters (ANFs).
\end{abstract}

\begin{IEEEkeywords}
 Second-Order Generalized Integrator, Frequency Locked Loop, amplitude estimation, phase estimation, frequency estimation, 
\end{IEEEkeywords}

\textbf{***A shorter version of this paper has been submitted to and accepted for publication in IEEE Transactions on Power Electronics (DOI:~10.1109/TPEL.2019.2932790; for more details see \cite{2019_Hackl_mSOGIswithmFLLforfastharmonicsestimationofdistortedsingle-phasesignals}) *** \\ This long version includes (i) more detailed explanations, (ii) more simulation and measurement results and (iii) a thorough theoretical analysis in its Appendix.}

\tableofcontents

\subsection*{Notation}
$\N, \R, \C, \Q$:~natural, real, complex and rational numbers. 
For the following, let $n,m \in\N$. 
$\mv{x} := (x_1, \dots, x_n)^\top\in \R^{n}$:~column vector (where $:=$ means ``is defined as'' and $^\top$ means ``transposed'').
$\mv{0}_n := (0,\,0,\,,\dots, 0)^\top \in \R^n$:~zero vector.
$\mv{1}_n := (1,\,1,\,,\dots, 1)^\top \in \R^n$:~vector of ones.
$\norm{\mv{x}} := {\small \sqrt{\mv{x}^{\top}\mv{x}}}$: Euclidean norm  of $\mv{x}$.
$\mm{A} \in \R^{n \times m}$:~real (non-square) matrix. 
$\diag(\mv{a}) \in \R^{n \times n}$:~diagonal matrix with diagonal entries taken from vector $\mv{a} = (a_1, \dots, a_n)^\top \in \R^n$.
$\mm{O}_{n \times m} \in \R^{n \times m}$:~zero (non-square) matrix. 
$\mm{I}_n := \diag(\mv{1}_n) \in \R^{n \times n}$:~identity matrix.
$\blockdiag(\mm{A}_1,\dots,\mm{A}_n) \in \R^{nm \times nm}$:~block diagonal matrix with matrix entries $\mm{A}_i \in \R^{m \times m}$, $i \in \{1,\dots,n\}$.
$\chi_{\mm{A}}(s):=\det[s\, \mm{I}_n - \mm{A}]$, characteristic polynomial of $\mm{A} \in \R^{n \times n}$. $\arctan\!2\big(x,y\big)$ 	 2-argument arctangent, for $x,y\in \R$, defined as\\
\begin{eqnarray} 
		\arctan\!2 \colon   \R^2 \setminus \{(0,0)\} & \to & (-\pi,\pi],   \quad
		(x,y)  \mapsto   \arctan\!2\big(x,y\big) := 
		\begin{cases}
		\arctan\!\big(\tfrac{y}{x}\big) &, \; x >0 \wedge y \in \R \\\\
		\arctan\!\big(\tfrac{y}{x}\big)+\pi &, \; x <0 \wedge y >0 \\
		\pm\pi & , \; x <0 \wedge y =0\\
		\arctan\!\big(\tfrac{y}{x}\big)-\pi &, \; x <0 \wedge y <0 \\
		+\tfrac{\pi}{2} & ,\; x = 0 \wedge y > 0\\ 
		-\tfrac{\pi}{2} & ,\; x = 0 \wedge y < 0.\\
		\end{cases}
		\label{eq:[N]definition of atan2}
\end{eqnarray}
%

\section{Introduction}

\subsection{Motivation and literature review}
In view of the increasing number of decentralized generation units with power electronics based grid connection and the decreasing number of large-scale generators, the overall inertia in the grid is diminishing. This results in a faster transient response and higher harmonic distortion of physical quantities (such as currents or voltages) of the power system~\cite{2018_Milano_FoundationsandChallengesofLow-InertiaSystemsInvitedPaper}. Fast frequency fluctuations endanger stability of the power grid. Significant harmonic distortions of voltages and currents can deteriorate power quality and lead to damage or even destruction of grid components. To be capable of taking appropriate countermeasures such as (i) improving stability and quality and (ii) compensating for such deteriorated operation conditions, it is crucial to detect and estimate fundamental and higher harmonic components of the considered quantities in real time as fast and accurate as possible. Hence, grid state estimation became of particular interest to the research community in the last years and has been studied extensively (see e.g.~\cite{2006_Rodriguez_AdvancedGridSynchronizationSystemforPowerConvertersunderUnbalancedandDistortedOperatingConditions,
	2007_Mojiri_Time-DomainSignalAnalysisUsingAdaptiveNotchFilter,
	2017_Chilipi_AdaptiveNotchFilter-BasedMultipurposeControlSchemeforGrid-InterfacedThree-PhaseFour-WireDGInverter,
	2010_Fedele_StructuralpropertiesoftheSOGIsystemforparametersestimationofabiasedsinusoid, 
	2011_Rodriguez_MultiresonantFLLsforGridSynchronizationofPowerConvertersUnderDistortedGridConditions,
	2011_Muzi_Areal-timeharmonicmonitoringaimedatimprovingsmartgridpowerquality,
	2011_Luo_FrequencymeasurementusingaFFL,
	2013_Park_Advancedsingle-phaseSOGI-FLLusingself-tuninggainbasedonfuzzylogic,
	2013_Kulkarni_AnoveldesignmethodforSOGI-PLLforminimumsettlingtimeandlowunitvectordistortion,
	2014_Panda_AnimprovedmethodoffrequencydetectionforgridsynchronizationofDGsystemsduringgridabnormalities,
	2015_Cossutta_HighspeedsinglephaseSOGI-PLLwithhighresolutionimplementationonanFPGA,
	2016_Xin_Re-InvestigationofGeneralizedIntegratorBasedFiltersFromaFirst-Order-SystemPerspective,
	2016_Patil_ModifieddualSOGIFLLforsynchronizationofadistributedgeneratortoaweakgrid,
	2016_Golestan_ARobustandFastSynchronizationTechniqueforAdverseGridConditions,
	2017_Matas_AFamilyofGradientDescentGridFrequencyEstimatorsfortheSOGIFilter,  
	2017_Ralev_AdoptingaSOGIfilterforflux-linkagebasedrotorpositionsensingofSRM,
	2017_Xiao_AFrequency-FixedSOGI-BasedPLLforSingle-PhaseGrid-ConnectedConverters, 2017_Yi_ImpedanceAnalysisofSOGI-FLL-BasedGridSynchronization,
	2017_Golestan_ACriticalExaminationofFrequency-FixedSOGI-BasedPLLs,
	2017_Golestan_Three-PhasePLLs:AReviewofRecentAdvances,
	2018_Dai_AFixed-LengthTransferDelay-basedAFLLforSingle-PhaseSystems,
	2018_Karkevandi_FrequencyestimationwithantiwinduptoimproveSOGIfiltertransientresponsetovoltagesags,
	2019_He_ReinvestigationofSingle-PhaseFLLs} to name a few).  \\

It is well known that a signal with significant harmonic distortion can be decomposed and analyzed by the Fast Fourier Transformation (FFT). However, this method requires a rather long computational time and a large amount of data to be processed~\cite[p.~320]{2000_Rade_SpringersMathematischeFormeln}. Usually, several multiples of the fundamental period ($\geq \SI{200}{\milli\second}$) are needed to estimate the harmonics with acceptable accuracy~\cite{2011_Muzi_Areal-timeharmonicmonitoringaimedatimprovingsmartgridpowerquality}; when the frequency is estimated online as well, the estimation time is even longer.\\

The majority of the publications deals only with the estimation of fundamental signal parameters (such as amplitude and phase) and fundamental frequency. For signals with negligible harmonic distortion, several well known and rather fast methods are available~\cite[Chapter~4]{2011_Teodorescu_GridConvertersforPhotovoltaicandWindPowerSystems} such as \emph{Second-Order Generalized Integrator (SOGI)} or \emph{Adaptive Notch Filters (ANF)} with and without \emph{Phase-Locked Loop (PLL)}~\cite{2017_Golestan_Three-PhasePLLs:AReviewofRecentAdvances} or \emph{Frequency Locked-Loop (FLL)}~\cite{2006_Rodriguez_AdvancedGridSynchronizationSystemforPowerConvertersunderUnbalancedandDistortedOperatingConditions,2011_Rodriguez_MultiresonantFLLsforGridSynchronizationofPowerConvertersUnderDistortedGridConditions}. However, if the signals to be estimated have significant harmonic distortion, these approaches fail and have to be extended by the parallelization of several SOGIs (see e.g.~\cite{2011_Rodriguez_MultiresonantFLLsforGridSynchronizationofPowerConvertersUnderDistortedGridConditions,
	2011_Muzi_Areal-timeharmonicmonitoringaimedatimprovingsmartgridpowerquality,
	2011_Luo_FrequencymeasurementusingaFFL,
	2016_Xin_Re-InvestigationofGeneralizedIntegratorBasedFiltersFromaFirst-Order-SystemPerspective,
	2019_He_ReinvestigationofSingle-PhaseFLLs}) or several ANFs  (see e.g.~\cite{2007_Mojiri_Time-DomainSignalAnalysisUsingAdaptiveNotchFilter,
2017_Chilipi_AdaptiveNotchFilter-BasedMultipurposeControlSchemeforGrid-InterfacedThree-PhaseFour-WireDGInverter}); each of those being capable of estimating the individual harmonics separately.  However, the resulting estimation system is highly nonlinear (in particular in combination with FLL or PLL) and difficult to tune. The estimation speed is usually faster than those of FFT approaches but still rather slow. Other estimation approaches use Adaptive Linear Kalman Filters~\cite{2016_Reza_AccurateEstimationofSingle-PhaseGridVoltageFundamentalAmplitudeandFrequencybyUsingaFrequencyALKF}, SOGIs in combination with discrete Fourier transforms~\cite{2014_Reza_AccurateEstimationofSingle-PhaseGridVoltageParametersUnderDistortedConditions} or circular limit cycle oscillators~\cite{2018_Ahmed_FFLBasedEstimationofSingle-PhaseGridVoltageParameters}. A comparison of estimation speed and estimation accuracy mainly focuses on fundamental signal and frequency estimation. A comparison of all the results presented in the contributions above yields that the estimation speeds vary between $40-\SI{1200}{\milli\second}$. The estimation speed depends on the tuning of the estimation algorithms and the operation conditions (such as changing harmonics with varying amplitudes, phases and frequencies) during the estimation process. In particular, when the frequency is changing abruptly, the overall estimation process is drastically decelerated. The FLL can be considered as the bottleneck of grid state estimation. Moreover, the performance of the parallelized estimation of the individual harmonics is mostly not discussed and evaluated. \\

Exceptions are the contributions~\cite{2011_Rodriguez_MultiresonantFLLsforGridSynchronizationofPowerConvertersUnderDistortedGridConditions,2011_Muzi_Areal-timeharmonicmonitoringaimedatimprovingsmartgridpowerquality} and \cite{2007_Mojiri_Time-DomainSignalAnalysisUsingAdaptiveNotchFilter}; which explicitly discuss and show the estimation performance of the parallelized SOGIs and ANFs, respectively, for \emph{each} considered harmonic component. For example, in~\cite{2011_Rodriguez_MultiresonantFLLsforGridSynchronizationofPowerConvertersUnderDistortedGridConditions}, the proposed parallelized SOGIs with FLL (called MSOGI-FLL) are capable of extracting fundamental frequency and amplitudes and phases of a pre-specified number $n$ of harmonics $\nu \in \{\nu_1,\dots, \nu_n\}$. Local stability analysis and tuning of the parallelized SOGIs and FLL were thoroughly discussed. As outcomes of the tuning rules, the gain $k_\nu$ of the $\nu$-th SOGI should be chosen to be $k_\nu = \tfrac{1}{\nu} \sqrt{2}< \tfrac{1}{\nu} 2$ which represents a ``tradeoff between settle time, overshooting and harmonic rejection''. Simulation and measurement results were presented for three-phase signals. Six harmonics (including fundamental positive sequence) and the fundamental frequency were correctly estimated. The estimation speeds for the individual harmonics vary between $40-\SI{140}{\milli\second}$. Frequency estimation takes about $\SI{300}{\milli\second}$ to return to a constant value. In~\cite{2011_Muzi_Areal-timeharmonicmonitoringaimedatimprovingsmartgridpowerquality}, a similar idea is proposed. The proposed algorithm is also based on parallelized SOGIs but a FLL has not been implemented. If the frequency is known, the method is capable of estimating the harmonics in approximately $40-\SI{60}{\milli\second}$\footnote{Note that, the authors state that the estimation takes less than $\SI{20}{\milli\second}$, which seems not correct as can be observed in Fig.~6 and Fig.~8 in~\cite{2011_Muzi_Areal-timeharmonicmonitoringaimedatimprovingsmartgridpowerquality}.}. Only simulation results were presented for seven harmonics. No results were presented when the frequency is unknown and varying. Implementation and tuning of the parallelized SOGIs are hardly discussed. In~\cite{2007_Mojiri_Time-DomainSignalAnalysisUsingAdaptiveNotchFilter}, parallelized ANFs with FLL are implemented. For a constant (estimated) frequency, a complete stability proof for the parallelized structure  is presented showing that stability is preserved if all gains are chosen positive. The parallelized ANFs with FLL are implemented in Matlab/Simulink to estimate a signal with six harmonic components (including fundamental). The fundamental frequency of the considered signal undergoes step-like changes of $+\SI{4}{\hertz}$ and $-\SI{2}{\hertz}$. Frequency and harmonics estimation errors tend to zero; but the estimation speed is rather slow and varies between $1-\SI {1.5}{\second}$.\\

As already noted, the (parallelized) SOGIs and ANFs rely on a precise estimate of the fundamental (angular) frequency for proper functionality. If the frequency is known a priori, it can be fed directly to the parallelized systems. Otherwise, the observers must be combined with a~FLL (or PLL), which allows to additionally estimate the fundamental angular frequency online. Since the FLL estimation depends on the harmonic amplitudes of the input signal, \cite{2011_Rodriguez_MultiresonantFLLsforGridSynchronizationofPowerConvertersUnderDistortedGridConditions, 2017_Matas_AFamilyofGradientDescentGridFrequencyEstimatorsfortheSOGIFilter, 2016_Patil_ModifieddualSOGIFLLforsynchronizationofadistributedgeneratortoaweakgrid} describe a \emph{Gain Normalization} (GN) which robustifies the frequency estimation. Nevertheless, due to its nonlinear and time-varying dynamics, the tuning of the overall estimator consisting of parallelized SOGIs or ANFs and FLL is a non-trivial task. As a rule of thumb (coming from the steady-state derivation of the FLL adaption law), the tuning of the FLL should be slow compared to the dynamics of the parallelized SOGIs or ANFs and, therefore, significantly degrades the settling time of the overall estimation system~\cite{2013_Park_Advancedsingle-phaseSOGI-FLLusingself-tuninggainbasedonfuzzylogic}. Apart from that, negative estimates of the angular frequency lead to instability. In this context, \cite{2014_Panda_AnimprovedmethodoffrequencydetectionforgridsynchronizationofDGsystemsduringgridabnormalities} describes a saturation of the estimated angular frequency to avoid a sign change. However, this saturation does not necessarily (i) ensure convergence of the estimation error or (ii) accelerate the transient response of the FLL. In~\cite{2018_Karkevandi_FrequencyestimationwithantiwinduptoimproveSOGIfiltertransientresponsetovoltagesags}, the FLL is extended by output saturation and anti-windup to avoid too large estimation values. But the proposed anti-windup strategy comes with additional feedback gain (tuning parameter), which, if not properly chosen, might lead to instability. Other approaches for frequency detection are based on \emph{Phase Locked Loops (PLLs)}~\cite{2017_Xiao_AFrequency-FixedSOGI-BasedPLLforSingle-PhaseGrid-ConnectedConverters}, \cite{2013_Kulkarni_AnoveldesignmethodforSOGI-PLLforminimumsettlingtimeandlowunitvectordistortion, 2015_Cossutta_HighspeedsinglephaseSOGI-PLLwithhighresolutionimplementationonanFPGA,2017_Golestan_ACriticalExaminationofFrequency-FixedSOGI-BasedPLLs,2017_Golestan_Three-PhasePLLs:AReviewofRecentAdvances} which can be combined with SOGIs as well. PLL approaches are not considered in this paper.\\

The remainder of this paper focuses on modifications of the parallelized ``standard SOGIs'' and the ``standard FLL'' as introduced in~\cite{2006_Rodriguez_AdvancedGridSynchronizationSystemforPowerConvertersunderUnbalancedandDistortedOperatingConditions} and~\cite{2011_Rodriguez_MultiresonantFLLsforGridSynchronizationofPowerConvertersUnderDistortedGridConditions} which will allow to improve estimation speed and estimation accuracy significantly. Key observation which motivates the modifications is that almost all papers above, except~\cite{2016_Xin_Re-InvestigationofGeneralizedIntegratorBasedFiltersFromaFirst-Order-SystemPerspective}, do only consider one single tuning factor (gain) for individual SOGI design. This single tuning factor limits the possible estimation performance. In~\cite{2016_Xin_Re-InvestigationofGeneralizedIntegratorBasedFiltersFromaFirst-Order-SystemPerspective}, two gains are considered but their influence on the speed of harmonics estimation is not exploited and investigated. Therefore, this work proposes a \emph{modified} (generalized) algorithm which achieves a prescribed settling time of the estimation process. It is capable of estimating amplitudes, angles and angular frequencies of all harmonic components of interest in real time. The proposed algorithm consists of parallelized \emph{modified} SOGIs tuned by pole placement. The modified SOGIs come with \emph{additional} feedback gains (additional tuning parameters) which provide the required degrees of freedom to ensure a desired (prescribed) settling time. Since the standard FLL was derived and is working for the standard SOGI only (as shown in~\cite{2011_Rodriguez_MultiresonantFLLsforGridSynchronizationofPowerConvertersUnderDistortedGridConditions} or~\cite{2011_Luo_FrequencymeasurementusingaFFL}), also a \emph{modified} FLL is proposed to guarantee functionality in combination with the parallelized modified SOGIs. The novelty of this paper is characterized by the following five main contributions:
\begin{enumerate}[(i)]
 \item \emph{Modification (generalization)} of standard SOGIs (sSOGIs) to \emph{modified SOGIs (mSOGIs)} with \emph{prescribed settling time} (see Sections~\ref{sec:modified-sogi-msogi-for-the-nu-th-harmonic-component} and~\ref{sec:Estimation performance of sSOGI and mSOGI}); 
 \item \emph{Parallelization} of the mSOGIs and their \emph{tuning by pole placement} (see Section~\ref{sec:Parallelization of SOGIs});
 \item \emph{Modification (generalization)} of the standard FLL to the \emph{modified FLL (mFLL)} with phase-correct adaption law, sign-correct anti-windup strategy and rate limitation for enhanced functionality in combination with the proposed mSOGIs (see Section~\ref{sec:modified-fll-mfll});
 \item \emph{Theoretical derivation} of the pole placement algorithm and the generalized adaption law for the mFLL (see Appendix~\ref{sec_app_pole_placement} and~\ref{sec:generalization-of-the-adaption-law-of-the-mfll-for-the-parallelized-msogis},respectively);
 \item \emph{Implementation and validation} of the proposed estimation algorithm by simulation and measurement results and \emph{Comparison} of the estimation performances of parallelized mSOGIs, sSOGIs and (ANFs) \emph{with} and \emph{without} FLL (see Section~\ref{sec:implementation-and-measurement-results}).
\end{enumerate}
%

\subsection{Problem statement}
\label{sec:Problem statement}
Single-phase grid signals (e.g.~voltages or currents) with significant and arbitrary harmonic distortion are considered. The considered signals are assumed to have the following form
\begin{equation}
\forall \, t \geq 0 \colon \quad  y(t) := \sum\limits_{\nu \in \mathbb{H}_n} \underbrace{a_{\nu}(t)\cos\big(\phi_{\nu}(t)\big)}_{=: y_\nu(t)}  \quad \text{ where } \quad \mathbb{H}_n:= \{1,\nu_2, \dots,\nu_n\} \subset \Q_{> 0},
\label{eq:input signal y}
\end{equation}
with fundamental amplitude $a_1$, harmonic amplitudes $a_{\nu_2}, \dots, a_{\nu_n} \geq 0$ and angles $\phi_{\nu}$ (in \si{\radian}), respectively; where $\nu \in \mathbb{H}_n$ indicates the $\nu$-th harmonic component (per definition $\nu_1:=1$). Observe that $\nu$ does not necessarily need to be a natural number or larger than one; any rational number is admissible as well (e.g.~$\nu_2 = 5/3$ or $\nu_3 = 1/5$).  Moreover, to consider the most general case, the phase angles
$$
\forall \nu \in \mathbb{H}_n\; \forall t \geq 0\colon \qquad \phi_{\nu}(t) = \int_0^t \nu\, \omega\br{\tau} \dx{\tau} + \phi_{\nu,0},
$$
of the $\nu$-th harmonic component depend on the time-varying angular \emph{fundamental} frequency $\omega(\cdot):=\omega_1(\cdot) > \SI{0}{\radian\per\second}$ and the initial harmonic angle $\phi_{\nu,0} \in \R$. The main goal of this paper is threefold:
\begin{enumerate}[(i)]
 \item to propose a \emph{modified Second-Order Generalized Integrator (mSOGI)} with \emph{prescribed settling time} for a \emph{fast online estimation} of amplitudes $\widehat{a}_{\nu}$ and angles $\widehat{\phi}_{\nu}$, such that, after a \emph{specified} transient phase, estimated signal $\widehat{y}$ (indicated by ``~$\widehat{~}$~'') and original signal $y$ do not differ more than a given threshold $\varepsilon_y > 0$. More precisely, the following should hold
 \begin{equation}
 \forall t \geq t_{\mathrm{set}} \colon \quad | y(t) - \yhout(t) | \leq \varepsilon_y \quad \text{ where } \quad \widehat{y}(t):= \sum\limits_{\nu \in \mathbb{H}_n} \underbrace{\widehat{a}_{\nu}(t)\cos\big(\widehat{\phi}_{\nu}(t)\big)}_{=: \yhp{\nu}(t)} 
 \label{eq:estimated input signal yhat}
 \end{equation}
after a \emph{prescribed (specified) settling time} $t_{\mathrm{set}} > \SI{0}{\second}$; 
\item to propose a \emph{modified Frequency Locked Loop (mFLL)} ensuring stable operation and fast estimation of the angular frequency in combination with the proposed parallelized mSOGIs; and
\item to show the overall estimation performance and compare it to other standard approaches such as parallelized sSOGIs and ANFs with and without FLL.
\end{enumerate}

\begin{remark}
 Note that in~\eqref{eq:input signal y}, time-varying amplitudes (of each harmonic component) \emph{and} time-varying angles are considered. The typical assumption~(see, e.g.~\cite[Appendix~A]{2011_Teodorescu_GridConvertersforPhotovoltaicandWindPowerSystems}) of a \emph{constant} fundamental angular frequency $\omega > 0$ such that $\phi_{\nu}(t) = \nu\omega t$ is \emph{not} imposed, since it is not true in general.
\end{remark}

\subsection{Principle idea of proposed solution}
\label{sec:Principle idea of proposed overall solution}
The principle idea of the proposed solution is illustrated in Fig.~\ref{fig:whole_model}. The depicted block diagram is fed by the input signal~$y$ to be estimated. All subsystems of the overall nonlinear observer are shown. The outputs of the block diagram are the respective estimated components of the input signal (see Sect.~\ref{sec:Problem statement}). 
\begin{figure}[ht!] 
	\centering
	\includegraphics{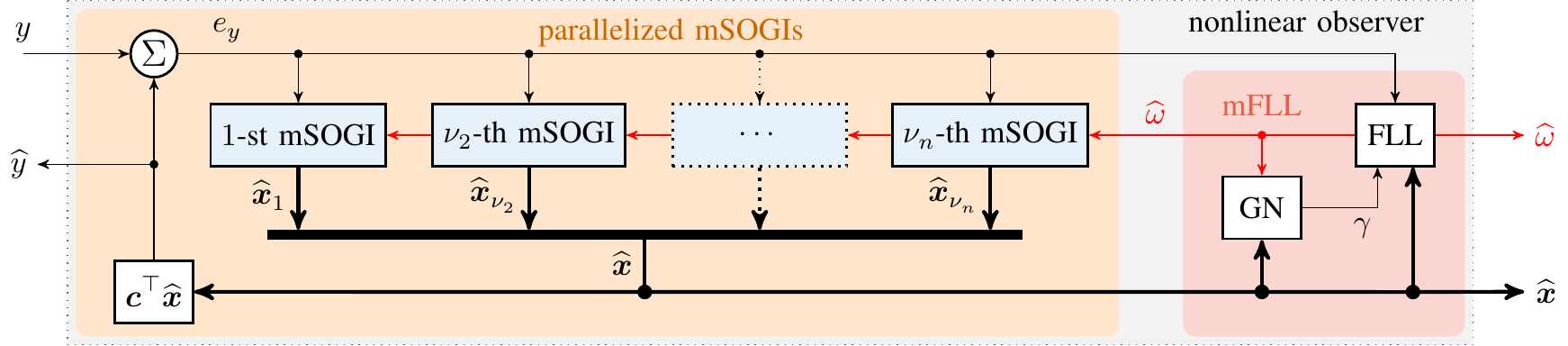}
	\caption{Block diagram of the nonlinear observer (consisting of parallelized mSOGIs and modified FLL (mFLL) with gain normalization (GN)).}
	\label{fig:whole_model}
\end{figure}
In Fig.~\ref{fig:whole_model}, all components (subsystems) of the nonlinear observer are explicitly shown. One can summarize:~For $\nu \in \mathbb{H}_n$, the overall nonlinear observer consists of the following subsystems:
\begin{itemize}
 \item a \emph{parallelization} of \emph{modified Second-Order Generalized Integrators (mSOGIs)} to estimate amplitude and phase of each of the harmonic components of the input signal $y$. The $\nu$-th mSOGI will output the estimated state vector $$\widehat{\mv{x}}_{\nu} := \big( \widehat{x}_{\nu}^{\alpha}, \, \widehat{x}_{\nu}^{\beta} \big)^{\top} = \big( \widehat{y}_{\nu}, \, \widehat{q}_{\nu} \big)^{\top}$$ compromising estimates of in-phase and quadrature signals of the $\nu$-th harmonic component, i.e.~$\widehat{y}_{\nu}=\widehat{x}_{\nu}^{\alpha}$ and $\widehat{q}_{\nu}=\widehat{x}_{\nu}^{\beta}$, respectively. All $n$ estimated signal vectors $\widehat{\mv{x}}_{\nu}$ are merged into the overall estimation vector 
 \begin{equation}
  \xsogi :=  \big( \underbrace{\br{\yhp{1},\qhp{1}}}_{=: \xsogip{1}^{\top}}, \underbrace{\br{\yhp{\nu_2},\qhp{\nu_2}}}_{=: \xsogip{\nu_2}^\top}, \dots, \underbrace{\br{\yhp{\nu_n}, \qhp{\nu_n}}}_{=: \xsogip{\nu_n}^\top}\big)^\top \in \R^{2n}.
 \label{eq:estimated output vector after parallelized SOGIs}
 \end{equation} 
 The output signal $\yhout = \sum_{\nu \in \mathbb{H}_n} \yhp{\nu} = \cyvec^\top\xsogi$ represents the estimate of the input signal $y$ and is established by the sum of all estimates of the in-phase signals $\widehat{y}_\nu=\widehat{x}_\nu$ of the mSOGIs;
 \item a \emph{modified Frequency Locked Loop (mFLL)} with gain normalization, generalized frequency adaption law, sign-correct anti-windup strategy and rate limitation to obtain the estimate $\omegah$ of the fundamental angular frequency $\omega$. The mFLL is tuned by an adaptive gain $\gamma$ which depends on estimation input error $e_y := y - \widehat{y}$, estimation vector $\widehat{\mv{x}}$ and estimated angular frequency $\widehat{\omega}$;  
\end{itemize}
Section~\ref{sec:SOGIs} and Section~\ref{sec:Frequency-locked loop} introduce the different subsystems (i.e.~mSOGIs and mFLL) illustrated in Fig.~\ref{fig:whole_model} and explain in more detail their contribution to the proposed solution for real-time estimation of amplitudes and phases of all $n$ harmonics of the input signal $y$ as in~\eqref{eq:input signal y} as well as the fundamental frequency $\omega$.


\section{Second-Order Generalized Integrators (SOGIs):~In-phase and quadrature signal estimation}
\label{sec:SOGIs}

The key tool to estimate in-phase and quadrature signals of a measured sinusoidal signal is a \emph{Second-Order Generalized Integrator (SOGI)}~\cite[App.~A]{2011_Teodorescu_GridConvertersforPhotovoltaicandWindPowerSystems}. Their parallelization in combination with a FLL (see Sect.~\ref{sec:Frequency-locked loop}) allows to detect all harmonic components and the fundamental frequency. First, a standard SOGI (as e.g.~discussed in~\cite{2011_Rodriguez_MultiresonantFLLsforGridSynchronizationofPowerConvertersUnderDistortedGridConditions}) for the $\nu$-th harmonic is revisited. After that, the proposed modification to it is introduced to obtain the modified SOGI with prescribed settling time. It is shown that the modified SOGI is actually a generalization of the standard SOGI. Next, their estimation performances are compared. Finally, to be capable of estimating all $n$ harmonics, the proposed modified (or standard) SOGIs are parallelized to obtain the overall observer system. Throughout this paper, the more powerful state space representation will be used, since the considered parallelized SOGIs with FLL represent a nonlinear system and transfer functions are not applicable.

\subsection[Standard SOGI (sSOGI) for the $\nu$-th harmonic component]{Standard SOGI (sSOGI) for the $\nu$-th harmonic component~\cite{2011_Rodriguez_MultiresonantFLLsforGridSynchronizationofPowerConvertersUnderDistortedGridConditions}}
\label{sssec_st_sogi}
For now, let $\nu \in \mathbb{H}_n$ and consider only the $\nu$-th harmonic component $y_{\nu}(t) := a_{\nu}(t) \cosine{\phi_{\nu}(t)}$. If the estimate $\omegahp{\nu} := \nu\omegah$ of the $\nu$-th harmonic frequency is known, the implementation of a sSOGI for the signal $y_{\nu}$ allows to obtain \emph{online} estimates $\widehat{y}_{\nu} = \widehat{x}^{\alpha}_\nu$ and $\widehat{q}_{\nu}=\widehat{x}^{\beta}_\nu$ of in-phase and quadrature signal, respectively. A sSOGI for the $\nu$-th harmonic component is depicted in Fig.~\ref{fig:sogi_and_poles}\,(a). Its dynamics are given by the following time-varying differential equation
\begin{equation}
\left.
\begin{array}{rcl}
	\ddtsmall \overbrace{\begin{pmatrix} \widehat{x}_\nu^{\alpha}(t) \\ \widehat{x}_\nu^{\beta}(t) \end{pmatrix}}^{=:\xsogip{\nu}(t)\in \R^2} & = &  \widehat{\omega}(t) \overbrace{\begin{bmatrix} -  \nu \kgainp{\nu} & -\nu \\ \nu & 0 \end{bmatrix}}^{=: \Asubt{\nu}(k_\nu) \in \R^{2 \times 2}}  \xsogip{\nu}(t) + \widehat{\omega}(t)   \overbrace{\begin{pmatrix} \nu \kgainp{\nu} \\ 0 \end{pmatrix}}^{=: \mv{l}_{\nu}(k_\nu) \in \R^2} \hspace*{-2ex} y_{\nu}(t), \qquad \xsogip{\nu}\br{0} = \xsogip{\nu,0} \in \R^2 \vspace*{2ex}\\
	\widehat{y}_\nu(t) & = & \underbrace{\begin{pmatrix} 1,  & 0  \end{pmatrix}}_{=:\mv{c}_\nu^\top \in \R^{1\times 2}}\xsogip{\nu}(t)
\end{array}
\right\}
\label{eq:sogi_state_space}
\end{equation}
with arbitrary initial value $\xsogip{\nu,0} \in \R^2$ (mostly set to zero), gain $k_{\nu} > 0$ (\emph{single} tuning factor) and estimate $\widehat{\omega}$ (possibly time-varying) of the fundamental angular frequency $\omega$. The gain $\kgainp{\nu}$ only allows for a \emph{limited} tuning of the dynamic response of the sSOGI. For constant $\widehat{\omega} > 0$ only, characteristic equation and poles of the $\nu$-th sSOGI are given as follows\footnote{Note that for time-varying or nonlinear systems, the analysis of poles is \emph{not} sufficient to check stability~\cite[Example 3.3.7]{2005_Hinrichsen_MathematicalSystemsTheoryI---ModellingStateSpaceAnalysisStabilityandRobustness}.}
\begin{equation}
\chi_{\nu}\br{s} := \det\bs{s \ve{I}_2 - \widehat{\omega}\Asubt{\nu}} = s^2 + \nu \widehat{\omega}  \kgainp{\nu} \, s + (\nu\widehat{\omega})^2 \musteq 0 \quad \Longrightarrow \quad p_{\nu,1/2} = - \frac{\nu\widehat{\omega}\kgainp{\nu}}{2} \br{1 \pm \sqrt{1 - \tfrac{4}{\kgainp{\nu}^2}}}.
\label{eq:characteristic polynomial and roots of nuth sSOGI}
\end{equation}
The respective root locus is shown in Fig.~\ref{fig:sogi_and_poles}\,(b). Hence, stability is guaranteed for all $\kgainp{\nu} > 0$. However, since the pole closest to the imaginary axis determines the settling time of the system, the smallest settling time is obtained for $\kgainp{\nu} = 2$ which clearly limits the tuning of the transient performance of the sSOGI. Moreover, this choice leads to two real poles at $- \frac{\nu\widehat{\omega}\kgainp{\nu}}{2}$ and, hence, the sSOGI is \emph{not} capable of oscillating by itself. Therefore, common tunings are $\kgainp{\nu} = \sqrt{2}/\nu$~\cite{2011_Rodriguez_MultiresonantFLLsforGridSynchronizationofPowerConvertersUnderDistortedGridConditions} or $\kgainp{\nu} = 1$~\cite{2007_Mojiri_Time-DomainSignalAnalysisUsingAdaptiveNotchFilter}. 

\begin{figure}[!tb]
	\centering
	\begin{tabular}{c}
		\subfloat[Block diagram of $\nu$-th sSOGI with \emph{one} single tuning parameter $k_\nu$.]{\includegraphics{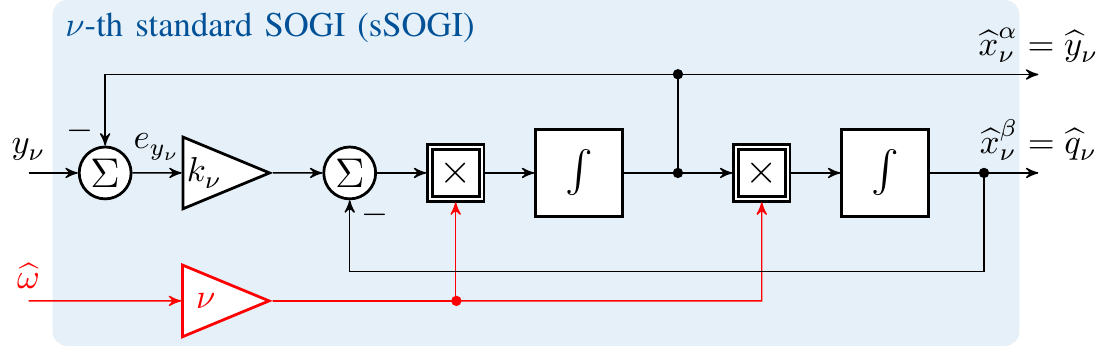}} \\[3ex] 
		\subfloat[Root locus of $\nu$-th sSOGI for $k_\nu > 0$ (unstable for $k_\nu<0$).]{\includegraphics{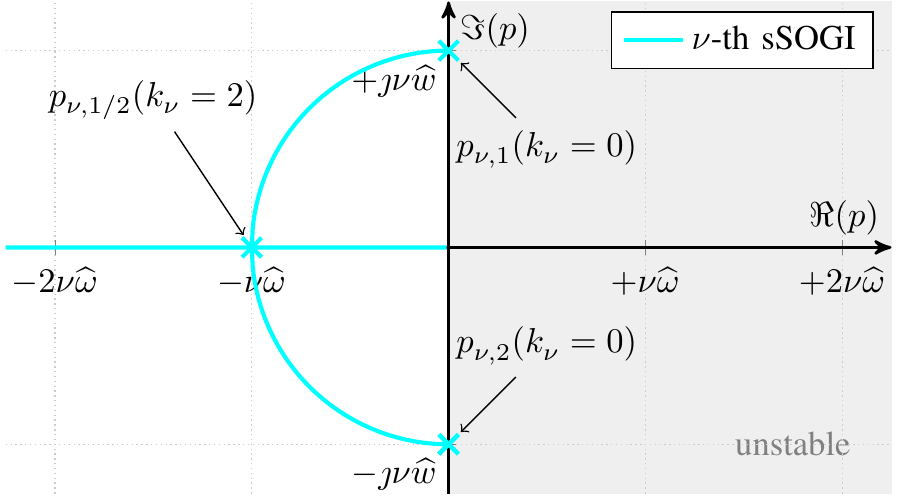}}
	\end{tabular}
	\caption{Standard Second-Order Generalized Integrator (\textbf{sSOGI})~\cite{2011_Rodriguez_MultiresonantFLLsforGridSynchronizationofPowerConvertersUnderDistortedGridConditions}: (a) Block diagram and (b) root locus of $\nu$-th sSOGI.}
	\label{fig:sogi_and_poles}
\end{figure}

\subsection{Modified SOGI (mSOGI) for the $\nu$-th harmonic component}\label{sec:modified-sogi-msogi-for-the-nu-th-harmonic-component}

To overcome the problem of the limited tuning without the possibility to prescribe the settling time, the modified SOGI (mSOGI) with additional gain $\ggainp{\nu}$ is introduced. The resulting block diagram of the $\nu$-th mSOGI is illustrated in Fig.~\ref{fig:esogi_and_poles}\,(a). Note that the additional gain does \emph{not} impair functionality but gives the necessary degree of freedom to enhance the transient performance as will be shown in the next subsection. 
\begin{figure}[!ht]
\centering
\begin{tabular}{c}
	  \subfloat[Block diagram of $\nu$-th mSOGI with \emph{two} tuning parameters $k_\nu$ and $g_\nu$.]{\includegraphics{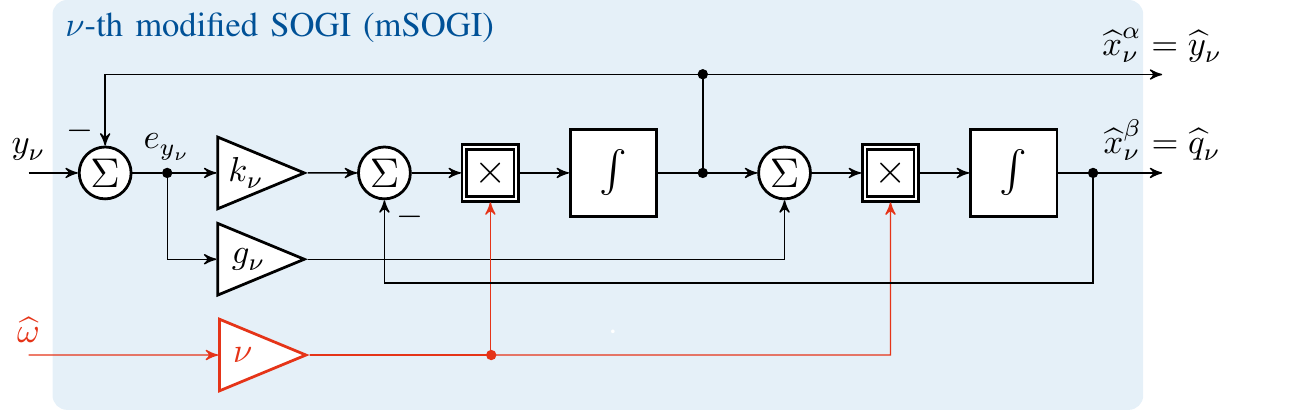}}\vspace{-0.3cm} \\
	  \subfloat[Root locus of $\nu$-th mSOGI for $k_\nu>0$ and $g_\nu =  - \tfrac{k_\nu^2}{4}$ (in general unstable for $k_\nu < 0$ or $\ggainp{\nu} > 1$).]{\includegraphics{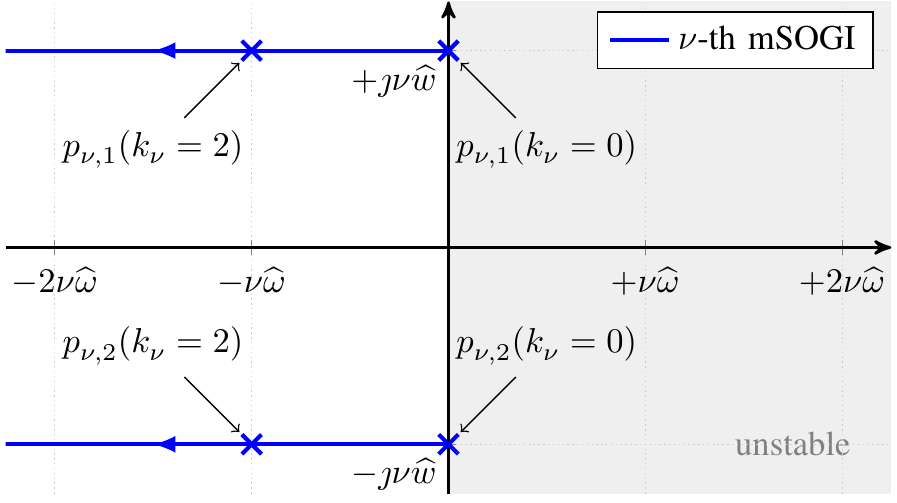}}
\end{tabular}
\caption{Modified Second-Order Generalized Integrator (\textbf{mSOGI}): (a) Block diagram and (b) root locus of $\nu$-th mSOGI.}
\label{fig:esogi_and_poles}
\end{figure}
The state space representation of the $\nu$-th mSOGI is given by the following time-varying differential equation:
\begin{equation}
\boxed{\left.
\begin{array}{rcl}
\ddtsmall \xsogip{\nu}(t) &  = & \widehat{\omega}(t)\overbrace{\begin{bmatrix} - \nu\kgainp{\nu} & -\nu \\ \nu(1 - \ggainp{\nu}) & 0 \end{bmatrix}}^{=: \Asubt{\nu}(k_\nu,g_\nu)} \xsogip{\nu}(t) + \widehat{\omega}(t) \hspace*{-1ex}\overbrace{\begin{pmatrix} \nu\kgainp{\nu} \\ \nu\ggainp{\nu} \end{pmatrix}}^{=: \mv{l}_{\nu}(k_\nu,g_\nu)} \hspace*{-1ex} y_{\nu}(t), \qquad \xsogip{\nu}\br{0} = \xsogip{\nu,0} \in \R^2 \\[2ex]
\widehat{y}_\nu(t) & = & \mv{c}_\nu^\top \xsogip{\nu}(t)
\end{array}
\right\}}
\label{eq:esogi_state_space}
\end{equation}
with arbitrary initial value $\xsogip{\nu,0} \in \R^2$ and estimate $\widehat{\omega}$ of $\omega$. The gains $\kgainp{\nu}$ and $\ggainp{\nu}$ now allow (theoretically\footnote{Of course, noise will limit the feasible tuning.}) for a \emph{limitless} tuning of the dynamic response of the mSOGI. The tiny but crucial difference between the mSOGI in~\eqref{eq:esogi_state_space} and the sSOGI in~\eqref{eq:sogi_state_space} is the additional gain $g_\nu$ in the system matrix $\mm{A}_{\nu}(k_\nu,g_\nu)$ and the vector $\mv{l}_{\nu}(k_\nu,g_\nu)$. For a constant frequency $\widehat{\omega}$, the characteristic equation and the poles of the $\nu$-th mSOGI are given by
\begin{eqnarray}
\chi_{\nu}\br{s} := \det\bs{s \ve{I}_2 - \widehat{\omega}\Asubt{\nu}} = s^2 + \nu\widehat{\omega} \kgainp{\nu}s + \br{1 - \ggainp{\nu}} (\nu\widehat{\omega})^2 \musteq 0 
& \Longrightarrow &  p_{\nu,1/2} = -\tfrac{ \nu \widehat{\omega} \kgainp{\nu}}{2}\br{1 \pm \sqrt{1 - 4\tfrac{(1- \ggainp{\nu})}{\kgainp{\nu}^2}}} \notag \\
& \stackrel{g_\nu =  - \tfrac{k_\nu^2}{4}}{\Longrightarrow} & p_{\nu,1/2} = -\tfrac{\nu \widehat{\omega} \kgainp{\nu}}{2} \pm \jmath \nu \widehat{\omega}.
\label{eq:characteristic polynomial and roots of nuth mSOGI}
\end{eqnarray}
The special choice of the additional gain $g_\nu = - \tfrac{k_\nu^2}{4}$ in~\eqref{eq:characteristic polynomial and roots of nuth mSOGI} gives the key feature of the mSOGI:~For any $k_\nu>0$, the real parts of the poles $p_{\nu,1/2}$ in~\eqref{eq:characteristic polynomial and roots of nuth mSOGI} can be chosen arbitrarily; whereas the capability of the mSOGI to oscillate with angular frequency $\nu\widehat{\omega}$ is preserved (see imaginary parts of $p_{\nu,1/2}$). The root locus of the $\nu$-th mSOGI is depicted in Fig.~\ref{fig:esogi_and_poles}\,(b). The mSOGI is stable for any $\kgainp{\nu} > 0$; and, the larger $k_\nu$ is chosen, the faster is its transient response.

\begin{remark}[Generaliziation of the sSOGI]
The introduction of the additional gain $g_\nu$ for the mSOGI in~\eqref{eq:esogi_state_space} represents actually a generalization of the sSOGI in~\eqref{eq:sogi_state_space}. Clearly, for $g_\nu = 0$,  the mSOGI simplifies to the sSOGI. In other words, only now, the term ``second-order \emph{generalized} integrator'' is really appropriate. 
\end{remark}

\subsection{Comparison of the estimation performances of sSOGI and mSOGI}
\label{sec:Estimation performance of sSOGI and mSOGI}
If $\widehat{\omega} = \omega$, both SOGIs are capable of estimating in-phase signal $\widehat{y}_{\nu} = \widehat{x}^{\alpha}_\nu$ and quadrature signal $\widehat{q}_{\nu}=\widehat{x}^{\beta}_\nu$ of the $\nu$-th harmonic signal $y_{\nu}(t) := a_{\nu}(t) \cosine{\phi_{\nu}(t)}$. The estimated amplitude 
\begin{equation} 
\widehat{a}_{\nu}(t) := \norm{\xsogip{\nu}(t)} = \sqrt{\yhp{\nu}(t)^2 + \qhp{\nu}(t)^2}
\label{eq:estimated amplitude of nu-th harmonic}
\end{equation}
is given by the norm of the estimated signal and its quadrature signal. The estimated phase angle is given by
\begin{equation} 
\widehat{\phi}_{\nu}(t) = \arctan\!2\big(\yhp{\nu}(t),\,\qhp{\nu}(t)\big) \quad \text{ with } \quad \arctan\!2(\cdot,\cdot) \text{ as in } \eqref{eq:[N]definition of atan2}.
\label{eq:estimated phase angle of nu-th harmonic}
\end{equation}
Hence, the parameters $\widehat{a}_{\nu}$ and $\widehat{\phi}_{\nu}$ of the $\nu$-th harmonic can be detected online. \\

In Fig.~\ref{fig:Comparison of transient response of nuth sSOGI and MSOGI}, the transient responses of sSOGI and mSOGI are shown in cyan and blue, respectively, for the first harmonic (i.e.~$\nu=1$, see Fig.~\ref{fig:Comparison of transient response of nuth sSOGI and MSOGI}(a)) and for second harmonic (i.e.~$\nu=2$, see Fig.~\ref{fig:Comparison of transient response of nuth sSOGI and MSOGI}(b)). Four tunings of the gain $k_\nu$ are implemented and illustrated by different line types:~$k_\nu=0.5$ (dotted), $k_\nu=1$ (dashed), $k_\nu=2$ (dash-dotted) and $k_\nu=10$ (solid). The larger $k_\nu$ is chosen, the faster is the transient response of the mSOGI. Moreover, for $k_\nu = 2$ (dash-dotted) or $k_\nu = 10$ (solid), settling times of e.g.~$t_{\mathrm{set}} = \SI{0.01}{\second}$ and $t_{\mathrm{set}} = \SI{0.005}{\second}$ can be guaranteed for the fundamental signal, respectively. For the second harmonic, the transient response is twice as fast as for the fundamental signal. For the sSOGI, a prescribed settling time \emph{cannot} be ensured, since one pole approaches the imaginary axis for large choices of $k_\nu$ (see also Fig.~\ref{fig:sogi_and_poles}). In particular, the estimation of the quadrature component is slow (see $e_q$ in Fig.~\ref{fig:Comparison of transient response of nuth sSOGI and MSOGI}) which degrades the estimation speed of positive, negative and zero sequences in three-phase systems (not considered in this paper).
\begin{figure}[!tb]
	\begin{tabular}{cc}
		\subfloat[Estimation of fundamental component of first harmonic ($\nu = 1$).]{%
			\includegraphics[clip,width=8.75cm]{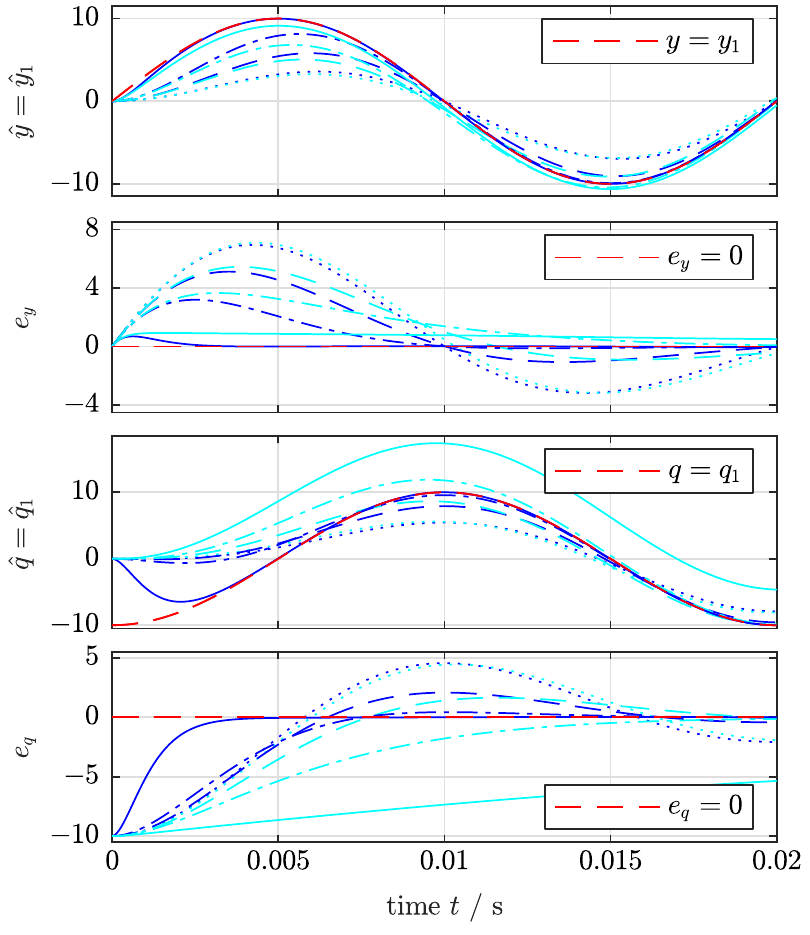}}
		&
		\subfloat[Estimation of second harmonic component ($\nu = 2$).]{
			\includegraphics[clip,width=8.75cm]{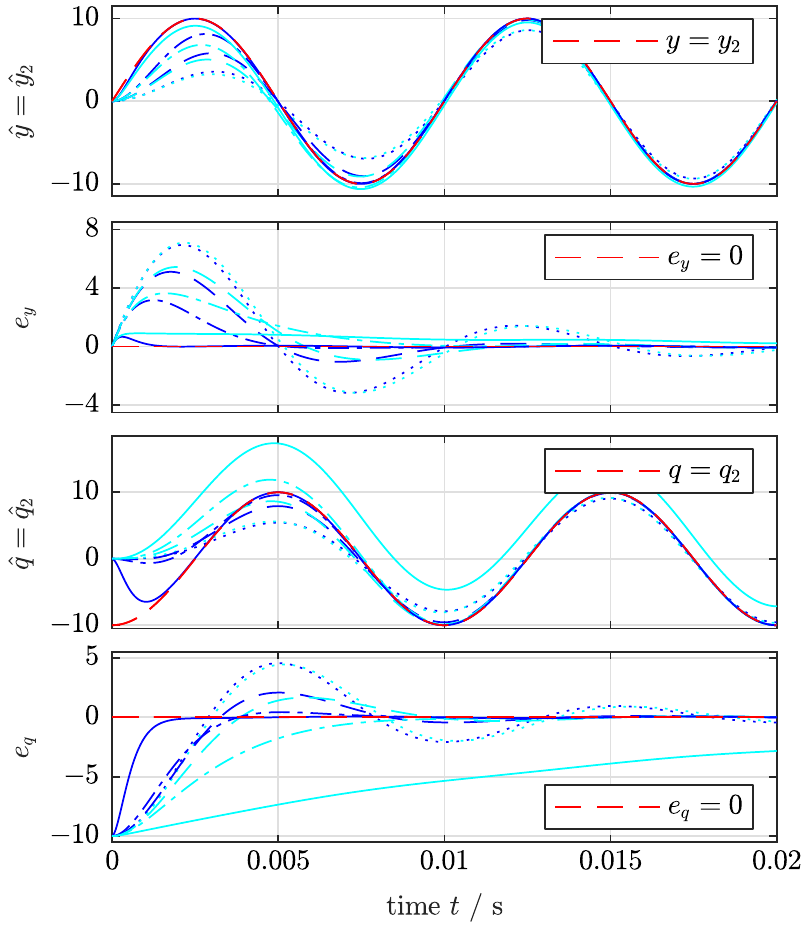}}
	\end{tabular}
	\caption{Comparison of estimation performances of $\nu$-th sSOGI (\protect\cyandottedline, \protect\cyandashedline,  \protect\cyandashdottedline, \protect\cyanline) and  $\nu$-th mSOGI (\protect\bluedottedline, \protect\bluedashedline,  \protect\bluedashdottedline, \protect\blueline) for four different tunings of gain $k_\nu \in \{0.5, 1, 2, 10\}$, respectively. Signals shown in (a) for $\nu = 1$ and in (b) for $\nu=2$ are from top to bottom: input signal $y_\nu$ and its estimate $\widehat{y}_\nu$, estimation in-phase error  $e_{y} = y_\nu - \widehat{y}_\nu$, quadrature signal $q_\nu$ and its estimate $\widehat{q}_\nu$ and estimation quadrature error  $e_{q} = q_\nu - \widehat{q}_\nu$}
	\label{fig:Comparison of transient response of nuth sSOGI and MSOGI}
\end{figure}

\subsection{Parallelization of the mSOGIs}
\label{sec:Parallelization of SOGIs}
This far, the presented SOGIs (sSOGIs and mSOGIs) can only estimate in-phase signal $\widehat{y}_{\nu} = \widehat{x}^{\alpha}_\nu$ and quadrature signal $\widehat{q}_{\nu}=\widehat{x}^{\beta}_\nu$ of the $\nu$-th harmonic signal $y_{\nu}(t) := a_{\nu}(t) \cosine{\phi_{\nu}(t)}$. By parallelizing $n$ of the mSOGIs or sSOGIs (see Fig.~\ref{fig:whole_model}), it is possible to extract in-phase and quadrature signal of each harmonic component $y_{\nu}$ for all $\nu \in \mathbb{H}_n$. For the parallelized sSOGIs, stability is preserved for a positive choice of all gains, i.e.~$\kgainp{\nu} > 0$ for all $\nu \in \mathbb{H}_n$~\cite{2007_Mojiri_Time-DomainSignalAnalysisUsingAdaptiveNotchFilter}. Stability for the parallelized mSOGIs will be guaranteed by pole placement. Moreover, the settling time can only be pre-specified by the parallelized mSOGIs.\\

The idea of the parallelization can be motivated by recalling the \emph{internal model principle} which states that "[e]very good regulator [or observer] must incorporate a model of the outside world being capable to reduplicate the dynamic structure of the exogenous signals which the regulator [or observer] is required to process."~\cite{1985_Wonham_LinearMultivariableControl:AGeometricApproach}. In the considered case, the exogenous signal $y$ as in~\eqref{eq:input signal y} can be reduplicated by the parallelization of $n$ sinusoidal internal models~\cite[Chapter~20]{2017_Hackl_Non-identifierbasedadaptivecontrolinmechatronics:TheoryandApplication}, which have the overall dynamics 
\begin{equation}
\left.
\begin{array}{rcl}
\ddtsmall \mv{x}(t)  &=&  \omega(t)\JMAT\ve{x}(t), \qquad \qquad \qquad \qquad  \ve{x}\br{0} = \ve{x}_{0} \neq \mv{0}_{2n} \in \R^{2n} \vspace*{1ex}\\
y(t)  & = & \underbrace{(1,\, 0,\, 1,\, 0,\, \cdots,\, 1,\, 0)}_{=: \,\cyvec^{\top} \in \R^{2n}}\ve{x}(t)
\end{array} \qquad 
\right\}
\label{eq:state space dynamics of overall IM}
\end{equation}
where 
\begin{equation}
\mv{x}:=\big((\underbrace{x_1^{\alpha},\, x_1^{\beta})}_{=: \ve{x}_1^{\top}},\, \ldots,\, \ve{x}_n^{\top}\big)^{\top}, \; \JMAT := \blockdiag\br{\Jbar, \nu_2\Jbar, \cdots, \nu_n\Jbar} \in \R^{2n \times 2n} \; \text{ and } \; \Jbar = \begin{bmatrix} 0 & -1 \\ 1 & 0 \end{bmatrix} =  -\Jbar^\top = - \Jbar^{-1}.
\label{eq:definitions of x, J and Jbar}
\end{equation}
The initial values of the internal model in~\eqref{eq:state space dynamics of overall IM} allow to determine amplitude $a_\nu$ and angle $\phi_{\nu}$ of the $\nu$-th harmonic. For constant $\omega>0$ and differing harmonics $\nu_i \neq \nu_j$ for all $i\neq j \in \{1,\dots n\}$, the overall internal model~\eqref{eq:state space dynamics of overall IM} is completely state observable (see Proposition~\ref{prop:observability of parallelized IMs} in the appendix). \\

Now, by substituting estimate $\widehat{\omega}$ for $\omega$,  the observer is obtained and consists of the parallelized mSOGIs (as introduced in~\eqref{eq:esogi_state_space} for the $\nu$-th harmonic). The observer dynamics are nonlinear and given by
\begin{equation}
\boxed{%
\left.\begin{array}{rcl}
\ddtsmall\xsogi(t) &=& \widehat{\omega}(t)\JMAT\xsogi(t) + \widehat{\omega}(t) \,\mv{l}\, \big(y(t) - \overbrace{\mv{c}^\top \widehat{\mv{x}}(t)}^{=\yhout(t)}\big)  \vspace*{1ex}\\
 &=& \widehat{\omega}(t)\big[\underbrace{\JMAT - \mv{l}\mv{c}^\top}_{=:\mm{A}}\big]\xsogi(t) + \widehat{\omega}(t) \,\mv{l}\, y(t) , \qquad \qquad \widehat{\mv{x}}(t) = \widehat{\mv{x}}_0 \in \R^{2n} \vspace*{2ex}\\
\yhout(t) &=& \underbrace{\big(\mv{c}_1^\top, \mv{c}_{\nu_2}^\top, \cdots,  \mv{c}_{\nu_n}^{\top} \big)}_{\stackrel{\eqref{eq:state space dynamics of overall IM},\eqref{eq:esogi_state_space}}{=}\cyvec^{\top}}\xsogi(t),
\end{array}\right\}}
\label{eq:observer_sogi}
\end{equation}
where observer state vector $\widehat{\mv{x}} \stackrel{\eqref{eq:esogi_state_space}}{=} \big(\widehat{\mv{x}}_1^\top, \widehat{\mv{x}}_{\nu_2}^\top, \cdots,  \widehat{\mv{x}}_{\nu_n}^{\top} \big)^\top \in \R^{2n}$ and observer gain vector 
\begin{equation}
\mv{l} := \big(\mv{l}_1^\top, \mv{l}_{\nu_2}^\top, \cdots,  \mv{l}_{\nu_n}^{\top} \big)^\top \stackrel{\eqref{eq:esogi_state_space}}{=} \br{\kgainp{1},\, \ggainp{1},\, \ldots,\, \nu_n\kgainp{n},\, \nu_n\ggainp{n}}^{\top} \in \R^{2n}
\label{eq:general feedback gain vector l}
\end{equation}
merge the individual sub-state estimation vectors $\widehat{\mv{x}}_\nu$ and gain vectors $\mv{l}_\nu$ of the $\nu$ mSOGIs as in~\eqref{eq:esogi_state_space}. The observer will be tuned by pole placement and, hence, the gains in $\mv{l}$ can be determined by comparing the coefficients of the characteristic polynomial
\begin{equation}
\chi_{\AMAT}\br{s} = \prod\limits_{i = 1}^{n}\br{s^2 + \nu_i^2} - \sum\limits_{i = 1}^{n}\ggainp{i}\nu_i^2\prod\limits_{\substack{k = 1 \\ k \neq i}}^{n}\br{s^2 + \nu_{k}^2} + s\sum\limits_{i = 1}^{n}\kgainp{i}\nu_i\prod\limits_{\substack{k = 1 \\ k \neq i}}^{n}\br{s^2 + \nu_k^2}
\label{eq:char_poly}
\end{equation}
of the closed-loop system matrix $\AMAT := \JMAT - \mv{l}\cyvec^{\top}$ in \eqref{eq:observer_sogi}
and the coefficients of a desired polynomial
\begin{equation}
	\chi_{\mm{A}}^*\br{s} := \prod\limits_{i = 1}^{2n}\br{s - p_i^*}
\label{eq:des_char_poly}
\end{equation}
with $2n$ prescribed stable roots (poles) $p_i^* \in \Cneg$, $i \in \{1,\dots,2n\}$, in the negative complex half-plane. The detailed derivation of the analytical solution of the pole placement algorithm is presented in Appendix~\ref{sec_app_pole_placement}. The resulting feedback gain vector $\mv{l}$ is obtain as follows
\begin{equation}
\boxed{	\mv{l} = \SMAT \,\widetilde{\mv{p}}_{\mm{A}}^*, }
	\label{eq:feedback gain vector l of parallelized mSOGIs}
\end{equation}
where
\begin{equation}
\SMAT := \begin{bmatrix} \Ssub{1,1} & \cdots & \Ssub{n,1} \\ \vdots & \ddots & \vdots \\ \Ssub{1,n} & \cdots & \Ssub{n,n} \end{bmatrix}, \quad \Ssub{c,r} := \br{- 1}^{c + 1}\nu_r^{2\br{n - c}}\Rsub{r}\prod\limits_{\substack{i = 1 \\ i \neq r}}^{n}\br{\nu_r^2 - \nu_i^2}^{- 1} \quad \text{and} \quad \Rsub{i} := \begin{bmatrix} 1 & 0 \\ 0 & - \tfrac{1}{\nu_i} \end{bmatrix}
 \label{eq:definition of S, S_cr and R_i}
\end{equation}
and
\begin{equation}
 \widetilde{\mv{p}}_{\mm{A}}^* := \br{- \sum\limits_{i = 1}^{2n}p_i^*,\;\; \sum\limits_{i = 1}^{2n}p_i^*\sum\limits_{j = i + 1}^{2n}s_j - \sum\limits_{i = 1}^{n}\nu_i^2,\;\; - \sum\limits_{i = 1}^{2n}p_i^*\sum\limits_{j = i + 1}^{2n}p_j^* \sum\limits_{k = j + 1}^{2n}p_k^*,\;\; \ldots,\;\; \prod\limits_{i = 1}^{2n}p_i^* - \prod\limits_{i = 1}^{n}\nu_i^2}^{\top}.
 \label{eq:coefficient vector p_prime_des of desired polynomial}
\end{equation}
It can be shown that, for any positive (but possibly time-varying) angular frequency estimate $\widehat{\omega}(t) \geq \varepsilon_{\omega} > 0$ for all $t\geq 0$,  the closed-loop observer system~\eqref{eq:observer_sogi} is bounded-input bounded-output (BIBO) stable \emph{and} input-to-state stable (ISS). Moreover, if $\widehat{\omega} \to \omega$, then the estimation state error $\mv{e}_x := \mv{x} - \widehat{\mv{x}} \to \mv{0}_{2n}$  decays \emph{exponentially} to zero (see Theorem~\ref{thm:BIBO/S stability}, Theorem~\ref{thm:asymptotic tracking} and Remark~\ref{rem:Exponential stability and input-to-state stability} in the appendix).

\begin{remark}[\texttt{place} command in Matlab versus analytical expression in~\eqref{eq:feedback gain vector l of parallelized mSOGIs}]
	For small $n$ (e.g.~$n \leq 10$), the  Matlab command \texttt{place} can be used to compute $\mv{l} = \texttt{place}(\JMAT^\prime, \mv{c}, \dots)$. 
	For large $n$, \texttt{place} might not provide a proper result. Moreover, \texttt{place} cannot place poles with multiplicity greater than $\rank\br{\cyvec} = 1$. That is why, the analytical expression in~\eqref{eq:feedback gain vector l of parallelized mSOGIs} has been derived. It can be used to achieve pole placement for arbitrarily large $n$.
\end{remark}

\section{Frequency-Locked Loop (FLL):~Frequency estimation}
\label{sec:Frequency-locked loop}

As mentioned above, a correct estimate of the fundamental angular frequency is essential for a proper functionality of the parallelized SOGIs and the harmonics detection. The following subsections motivate and discuss the necessary modifications of the FLL to ensure its functionality also with the parallelized mSOGIs.

\subsection[Standard FLL (sFLL)]{Standard FLL (sFLL)~\cite{2011_Rodriguez_MultiresonantFLLsforGridSynchronizationofPowerConvertersUnderDistortedGridConditions}}

In this subsection, the standard FLL is re-visited. Its block diagram is shown in Fig.~\ref{fig:block diagram of sFLL}. Adaption law and gain normalization are briefly explained. 
\begin{figure}[ht!]
	\centering
	\includegraphics{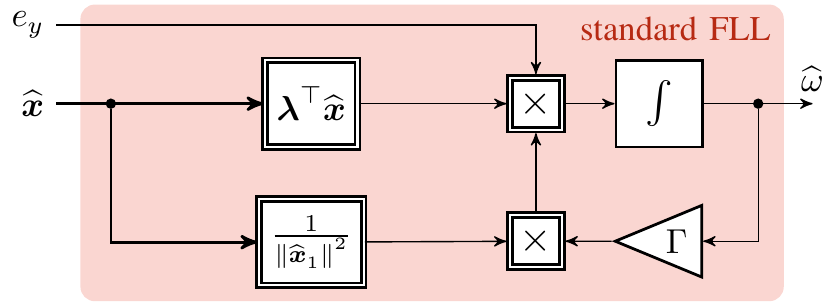}
	\caption{Block diagram of standard Frequency Locked Loop (sFLL) with gain normalization~\cite{2007_Mojiri_Time-DomainSignalAnalysisUsingAdaptiveNotchFilter,2011_Rodriguez_MultiresonantFLLsforGridSynchronizationofPowerConvertersUnderDistortedGridConditions, 2017_Matas_AFamilyofGradientDescentGridFrequencyEstimatorsfortheSOGIFilter}.}
	\label{fig:block diagram of sFLL}
\end{figure}

\subsubsection{Adaption law}
\label{sec:FLL adaption law}
As shown in Fig.~\ref{fig:whole_model}, any of the $n$ parallelized mSOGIs requires an estimate $\widehat{\omega}$ of the fundamental angular frequency $\omega$. The estimate $\widehat{\omega}$ is the output of the FLL. The nonlinear adaption law of the sFLL is given by~\cite{2011_Rodriguez_MultiresonantFLLsforGridSynchronizationofPowerConvertersUnderDistortedGridConditions} 
\begin{equation}
\ddtsmall\omegah(t) = \gamma(t)\lamvec^{\top}\xsogi(t)\ey(t) \stackrel{\text{\cite{2007_Mojiri_Time-DomainSignalAnalysisUsingAdaptiveNotchFilter}}}{=} \gamma(t)\, k_1 \,\widehat{x}_1^\beta(t)\,\ey(t), \qquad \widehat{\omega}(0) =  \widehat{\omega}_0 \in \R, 
\label{eq:fll_adaption_rule}
\end{equation}
where $\gamma(\cdot) > 0$ is a positive but non-constant adaptive gain, $\mv{\lambda} = (0,-k_1,\mv{0}_{2n-2}^\top)^\top$~\cite{2007_Mojiri_Time-DomainSignalAnalysisUsingAdaptiveNotchFilter,2011_Rodriguez_MultiresonantFLLsforGridSynchronizationofPowerConvertersUnderDistortedGridConditions} is a constant "selection" vector (to extract only the fundamental estimate $\widehat{x}_1^\beta$ from $\widehat{\mv{x}}$), $\widehat{\mv{x}}$ is the estimation vector of the parallelized sSOGIs and $e_y := y - \widehat{y}$ is the estimation error (difference between input $y$ and estimated input $\widehat{y}$). A proper choice of the initial value, e.g.~$\widehat{\omega}_0 \in \{2\pi\,50, 2\pi\,60 \}$, of the sFLL adaption law is beneficial for functionality and adaption speed.

\begin{remark}[Impact of negative estimates of the angular frequency]
\label{rem:Negative angular frequency estimates}
Note that, in view of the adaption law in~\eqref{eq:fll_adaption_rule}, the estimated angular frequency might also become negative, i.e.~$\omegah(\tau)<0$ for some time instant $\tau \geq 0$. However, a negative $\omegah<0$ will result in \emph{instability} of the parallelized sSOGIs and all estimated states will \emph{diverge}. 
\end{remark}

\subsubsection{Gain Normalization (GN)}

The FLL should be robustified to work for signals with arbitrary fundamental amplitudes and angular frequencies (see \cite{2011_Rodriguez_MultiresonantFLLsforGridSynchronizationofPowerConvertersUnderDistortedGridConditions, 2017_Matas_AFamilyofGradientDescentGridFrequencyEstimatorsfortheSOGIFilter}). This can be achieved by introducing the following adaptive sFLL gain 
\begin{equation}
\gamma(t) := \Gamma\tfrac{\omegah(t)}{\norm{\xsogip{1}(t)}^2} \qquad \Longrightarrow \qquad  \ddtsmall\omegah = \Gamma \tfrac{\omegah(t)}{\norm{\xsogip{1}(t)}^2} \, k_1 \,\widehat{x}_1^\beta(t)\,\ey(t),
\label{eq:sFLL}
\end{equation}
which depends on gain $\Gamma > 0$, frequency estimate $\omegah$ and norm of the \emph{fundamental} estimation vector $\widehat{\mv{x}}_1= (\widehat{x}_1^\alpha, \, \widehat{x}_1^\beta)^\top$ leading to a "normalized" FLL adaption law. The gain $\Gamma > 0$ is a \emph{constant} tuning factor of the FLL. 
\begin{remark}[Avoiding division by zero]
\label{rem:Avoiding a division by zero}
Depending on the initial values $\xsogi\br{0}$ and the time evolution of estimation process, the denominator $\norm{\xsogip{1}(t)}^2$ in~\eqref{eq:sFLL} might become zero for certain time instants $t\geq 0$. This must and can easily be avoided by introducing a minimal positive value for the denominator by substituting $\max\br{\norm{\xsogip{1}(t)}^2, \varepsilon}$ for $\norm{\widehat{\ve{x}}_1(t)}$ in~\eqref{eq:sFLL}  where $\varepsilon > 0$ is a small positive constant.
\end{remark}

\subsection{Modified FLL (mFLL)}\label{sec:modified-fll-mfll}
The FLL is the weakest subsystem (bottleneck) of the overall grid estimation system; in particular, its tuning endangers system stability, estimation accuracy and estimation speed. Only if the frequency is detected correctly, the mSOGIs or sSOGIs work properly. Therefore, to improve stability and performance of the estimation process a modified FLL is proposed.
The block diagram of the proposed mFLL is depicted in Fig.~\ref{fig:block diagram of mFLL}. Remarks~\ref{rem:Negative angular frequency estimates} and~\ref{rem:Avoiding a division by zero} have already been considered in the block diagram. In addition, the mFLL is equipped with a generalized adaption law, a sign-correct anti-windup strategy and a rate limitation. All three modifications enhance performance and stability of the mFLL. The generalized adaption law increases adaption speed. The anti-windup strategy guarantees that the estimated angular frequency $\omegah$ remains bounded and positive for all time and the rate limitation prevents too fast adaption speeds which might endanger stability. Details will be explained in the next subsections.
\begin{figure}[ht!]
	\centering
	\includegraphics{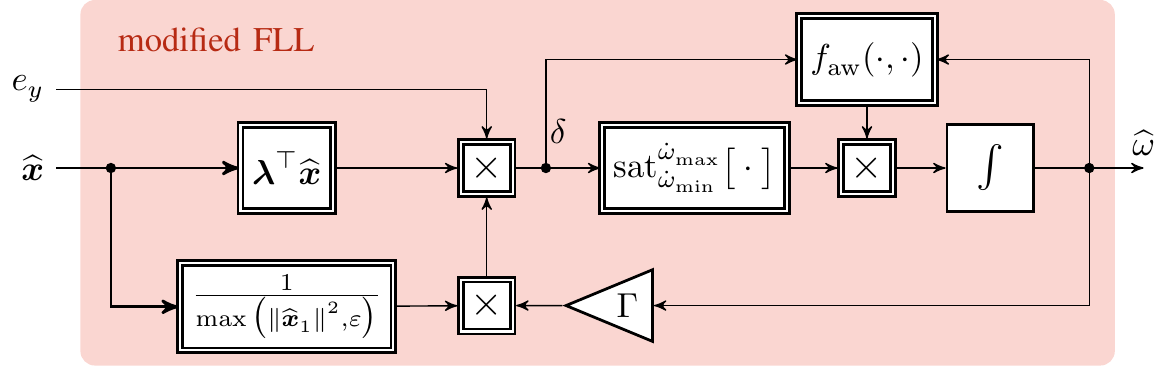}
	\caption{Modified Frequency Locked Loop (mFLL) with gain normalization, sign-correct anti-windup and rate limitation.}
	\label{fig:block diagram of mFLL}
\end{figure}

\subsubsection{Generalized adaption law}
The presented adaption law~\eqref{eq:fll_adaption_rule} of the sFLL does not work properly for the mSOGIs. It does \emph{not} guarantee a a sign-correct adaption for all time. Therefore, the adaption law must be generalized to fit to the parallelized mSOGIs. It is clear that for a sign-correct adaption of the estimated angular frequency $\omegah$, the generalized adaption law must ensure that the following conditions hold
\begin{equation}
\left.\begin{array}{lcl}
\forall\, \omegah < \omega \wedge e_y \neq 0 \quad & \Longrightarrow & \quad  \ddtsmall\omegah \propto \lamvec^{\top}\xsogi \, \ey > 0 ,  \\[0.5ex]
\forall\,  \omegah = \omega  \wedge e_y \neq 0  \quad & \Longrightarrow & \quad  \ddtsmall\omegah \propto  \lamvec^{\top}\xsogi \, \ey  = 0 \quad \text{ and }  \\[0.5ex]
\forall\,  \omegah > \omega  \wedge e_y \neq 0  \quad & \Longrightarrow & \quad  \ddtsmall\omegah \propto  \lamvec^{\top}\xsogi \, \ey < 0. 
\end{array}\right\}
\label{eq:sign correct adaption law of FLL}
\end{equation}
To illustrate the intuition behind these conditions, assume that the input signal has a constant fundamental angular frequency $\omega > 0$ and that the parallelized mSOGIs are fed by an arbitrary positive but constant estimate $0<\omegah \neq \omega$. Then, in \emph{steady state}, the system states $\xsogi(t)$ with their characteristic amplitude and phase responses can be used to analyze whether $\ey$ and $\lamvec^{\top}\xsogi$ are \emph{in}-phase or \emph{counter}-phase. In Appendix~\ref{sec_fll_principle}, it is shown that this sign-correct adaption is guaranteed when the selection vector $\lamvec$ is chosen as follows 
\begin{equation}
\boxed{
\lamvec := \blockdiag\br{\Jbar^{-1},\, \mm{O}_{2\times2},\, \ldots,\, \mm{O}_{2\times2}}\mv{l} = (g_1,-k_1,\mv{0}_{2n-2}^\top)^\top  \in \mathbb{R}^{2n}.}
\label{eq:choice of lambda for sign-correct adaption}
\end{equation}
This choice of $\lamvec$ can be used for sSOGI \emph{and} mSOGI as well. It is actually a generalization of the standard choice $\lamvec = (0,-k_1,\mv{0}_{2n-2}^\top)^\top = \blockdiag \big(\Jbar^{-1},\, \ve{0}_{2\times2},\, \ldots,\, \ve{0}_{2\times2} \big)\mv{l}$ with $g_\nu = 0$ for all $\nu \in \mathbb{H}_n$ (recall~\eqref{eq:fll_adaption_rule}). Finally, note that the sign-correct adaption was derived based on a steady state analysis (see Appendix \ref{sec_fll_principle}). This implies that the mFLL (and sFLL) dynamics should be slow compared to the dynamics of the parallelized mSOGIs (which can be achieved by an adequate choice of $\Gamma$). 

\subsubsection{Sign-correct anti-windup strategy}
Usually, the grid frequency should not exceed a certain interval. This physically motivated limitation can be exploited for the frequency estimation. The principle idea of the proposed sign-correct anti-windup strategy is illustrated in Fig.~\ref{fig:Illustration of principle idea of FLL with anti-windup}. More precisely, the adaption of the estimated angular frequency shall be stopped (i.e.~$\ddtsmall \omegah=0$), when 
\begin{itemize}
	\item the estimated angular frequency $\omegah$ leaves the admissible interval, i.e. $\widehat{\omega} \not\in (\omega_{\min}, \, \omega_{\max})$ with lower and upper limit $0<\omega_{\min}<\omega_{\max}$, respectively (see Fig.~\ref{fig:Illustration of principle idea of FLL with anti-windup}); and
	\item  the right hand side of the adaption law~\eqref{eq:sign correct adaption law of FLL} has wrong sign (otherwise the estimation gets stuck at one of the limits). 
\end{itemize}
\begin{figure}[H]
	\centering
	\includegraphics{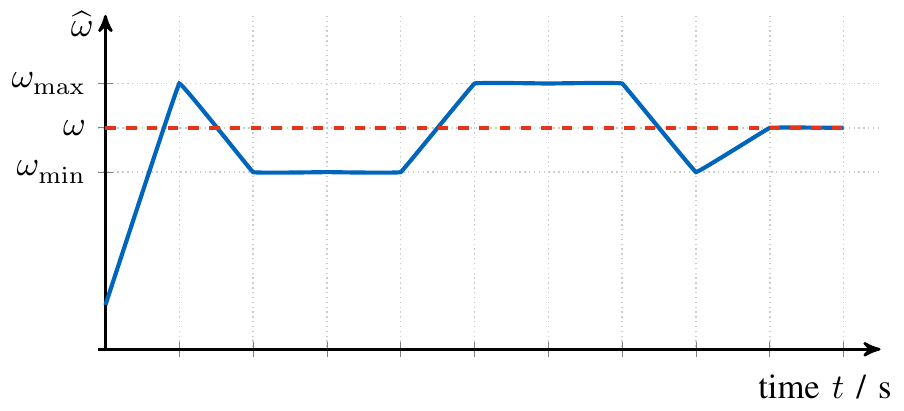}
	\caption{Illustration of the principle idea of the proposed sign-correct anti-windup strategy.}
	\label{fig:Illustration of principle idea of FLL with anti-windup}
\end{figure}
This yields to the following sign-correct anti-windup decision function
\begin{equation}
f_{\mathrm{aw}}(\widehat{\omega},\delta) := \begin{cases}
0, \quad \text{ for }\; \big(\widehat{\omega} \geq \omega_{\max} \wedge \delta \propto \ddtsmall \widehat{\omega} \geq 0 \big) \vee \big(\widehat{\omega} \leq \omega_{\min} \wedge \delta \propto \ddtsmall \widehat{\omega} \leq 0 \big) \\
1, \quad \text{ else}
\end{cases}
\label{eq:sign-correct AW decision function}
\end{equation}
where $\delta \propto \ddtsmall \omegah$ is proportional to the time derivative of the estimated angular frequency as can be seen when decision function and frequency adaption law are combined as follows
\begin{equation}
\ddtsmall \widehat{\omega} = f_{\mathrm{aw}}(\widehat{\omega},\delta)  \underbrace{
	\tfrac{\Gamma \,\widehat{\omega} \, e_y \,\mv{\lambda}^\top \widehat{\mv{x}}}{\max\big(\norm{\widehat{\mv{x}}_1}^2, \varepsilon\big)}}_{=:\delta}.
\label{eq:FLL with AWU}
\end{equation}
The consequences of this adaption law with sign-correct anti-windup are that the estimated angular frequency is \emph{positive} and remains \emph{bounded} for all time, i.e.~$\omegah(t) \in \big[\min(\omegah_0, \omega_{\min}), \, \max(\omegah_0,\omega_{\max})]$  for all $t \geq0$.  Moreover, once within the admissible interval $[\omega_{\min}, \omega_{\max}]$, the estimated angular frequency will remain inside this interval. Clearly, if the initial value $\omegah_0$ of the frequency estimate starts outside of $[\omega_{\min}, \omega_{\max}]$, it will approach the interval due to the sign-correct frequency adaption (as illustrated in Fig.~\ref{fig:Illustration of principle idea of FLL with anti-windup} for $\omegah_0 < \omega_{\min}$). Note that the proposed anti-windup strategy does not require tuning of an additional feedback gain as in~\cite{2018_Karkevandi_FrequencyestimationwithantiwinduptoimproveSOGIfiltertransientresponsetovoltagesags}. Instability can \emph{not} occur, since the proposed sign-correct anti-windup strategy is based on the simple idea of \emph{conditional  integration}~\cite[Section~10.4.1]{2017_Hackl_Non-identifierbasedadaptivecontrolinmechatronics:TheoryandApplication}.

\subsubsection{Rate limitation}
Recall that the overall observer~\eqref{eq:esogi_state_space} is nonlinear. Considering the estimated angular frequency as time-varying parameter, the observer becomes a time-varying linear system.  If the time derivative $\ddtsmall \omegah$ is limited (\emph{rate limitation}), the observer can be considered as \emph{slowly} time-varying system~\cite{1996_Rugh_LinearSystemTheory} (which simplifies stability analysis). For this \emph{rate limitation} of the adaption law, the admissible rate of the estimated angular frequency must be bounded, i.e.~
$\ddtsmall \widehat{\omega} \in \big[ \dot{\omega}_{\min}, \dot{\omega}_{\max} \big]$ where $\dot{\omega}_{\min}<0$ and $\dot{\omega}_{\max}>0$ are desired lower and upper thresholds, respectively. The idea of the rate limitation is illustrated in the graph shown in Fig.~\ref{fig:Illustration of principle idea of FLL with rate limitation}.
\begin{figure}[H]
	\centering
	\includegraphics{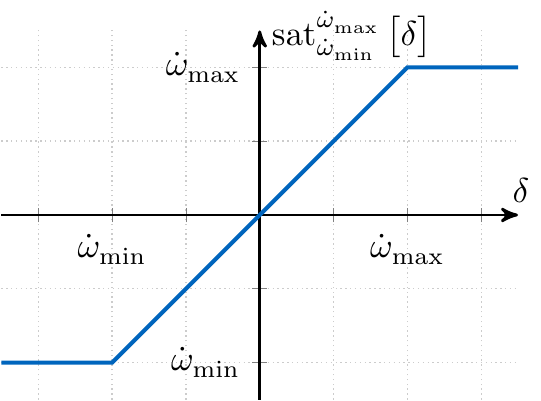}
	\caption{Illustration of the rate limitation of $\delta \propto \ddtsmall \omegah$.}
	\label{fig:Illustration of principle idea of FLL with rate limitation}
\end{figure}
Usually, the rate limitation leads to a smoother adaption and $\dot{\omega}_{\min} = -\dot{\omega}_{\max}$ is a meaningful choice. Reasonable rate thresholds were found out to be $10-\SI{100}{\hertz}$ per $\SI{1}{\milli\second}$. 
The rate limitation can be ensured by introducing an additional saturation function to the adaption law~\eqref{eq:FLL with AWU} leading to the generalized adaption law for the mFLL as shown next. 

\subsection{Generalized adaption law}
Finally, combining rate limitation,  sign-correct anti-windup strategy and sFLL with gain normalization, the generalized adaption law of the mFLL with $\mv{\lambda}$ as in~\eqref{eq:choice of lambda for sign-correct adaption} can be introduced. It is given by 
\begin{equation}
\boxed{%
\ddtsmall \widehat{\omega} = f_{\mathrm{aw}}(\widehat{\omega},\delta) \cdot   \sat_{\dot{\omega}_{\min}}^{\dot{\omega}_{\max}} \Big[ \underbrace{
\tfrac{\Gamma \,\widehat{\omega} \, e_y \,\mv{\lambda}^\top \widehat{\mv{x}}}{\max\big(\norm{\widehat{\mv{x}}_1}^2, \varepsilon\big)}}_{=:\delta}\Big] \; \text{where} \; f_{\mathrm{aw}}(\cdot,\cdot)\; \text{as in~\eqref{eq:sign-correct AW decision function} and} \; \sat_{\dot{\omega}_{\min}}^{\dot{\omega}_{\max}}\!\big[ \delta \big] := 
\begin{cases}
\dot{\omega}_{\max} &, \delta > \dot{\omega}_{\max}  \\ 	
\delta &, \dot{\omega}_{\min} \leq \delta \leq \dot{\omega}_{\max}  \\
\dot{\omega}_{\min} &, \delta < \dot{\omega}_{\min}  \\ 	
\end{cases}
}
\label{eq:mFLL with AWU and rate limiter}
\end{equation}
which guarantees that (i) the derivative of the estimated angular frequency is bounded, i.e.~$\ddtsmall \widehat{\omega}(t) \in \big[ \dot{\omega}_{\min}, \dot{\omega}_{\max} \big]$,  and (ii) the estimated angular frequency is positive and bounded from below and above, i.e.~$\widehat{\omega}(t) \in \big[\min(\omegah_0, \omega_{\min}), \, \max(\omegah_0,\omega_{\max})]$ for all $t \geq 0$.

\section{Implementation and measurement results}\label{sec:implementation-and-measurement-results}

To validate the proposed algorithms, measurements at a laboratory setup are carried out. The laboratory setup is shown in Fig.~\ref{fig:laboratory setup}.
\begin{figure}[!tb]
	\centering
	\includegraphics[clip,width=0.4\textwidth]{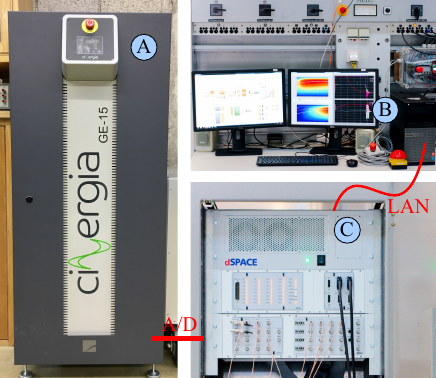}
	\caption{Laboratory setup: (A) Cinergia grid emulator, (B) Host-PC and (C) dSPACE real-time system .}
	\label{fig:laboratory setup}
\end{figure}
For measurements, the voltage is produced by the grid emulator. These voltages are measured by a LEM DVL 500 voltage sensor, analogue-to-digital converted by the dSPACE A/D card DS2004 and internally filtered by a low pass filter with cut-off frequency $\omega_{\text{lpf}} = 5000\si{\radian\per\second}$ to suppress high frequency noise. The implementation is done via Matlab/Simulink R2017a on the Host-PC. The executable observers are downloaded via LAN to the dSPACE Processor Board DS1007 and run in real time. The measurement data is captured and analyzed on the Host-PC after the experiment.  The implementation data of the conducted measurements is listed in Tab.~\ref{tab:implementation data}. \\

\begin{table}[tb!]
	\centering
	\begin{tabular}{ll}
		\toprule
		\textbf{Implementation} & \\ 
		sampling time & $h = \SI{0.1}{\milli\second}$ \\
		low-pass filter & $\omega_{\text{lpf}} = \SI{5e3}{\radian\per\second}$ (for measurements) \\
		\midrule
		\textbf{parallelized mSOGIs}~& \\
		observer gains & $\lvec$ as in~\eqref{eq:feedback gain vector l of parallelized mSOGIs} [$\Longrightarrow \;\forall \nu \in \mathbb{H}_{\nu} \colon \; p_{1,2,\nu}^*=-\tfrac{3}{2} \pm \jmath\nu$] \\
		mFLL & $\Gamma = 60$, $\varepsilon = 0.1$ \\
		Anti-windup & $\omega_{\min} = \SI{39}{\radian\per\second}$, $\omega_{\max}=\SI{61}{\radian\per\second}$ \\
		Rate limitation & $\dot{\omega}_{\max} = 2\pi\times\SI{10e3}{\radian\per\second}$, $\dot{\omega}_{\min} = -\dot{\omega}_{\max}$ \\
		\midrule
		\textbf{parallelized sSOGIs}~\cite{2011_Rodriguez_MultiresonantFLLsforGridSynchronizationofPowerConvertersUnderDistortedGridConditions} & \\
		observer gains &  $\lvec = \sqrt{2}\cyvec$ \\
		sFLL & $\Gamma = 46$, $\varepsilon = 0.1$ (avoidance of division by zero added) \\
		\midrule
		\textbf{parallelized ANFs}~\cite{2007_Mojiri_Time-DomainSignalAnalysisUsingAdaptiveNotchFilter} & \\ filter gains & $\lvec=\cyvec$ \\
		sFLL (without gain normalization) & $\gamma = 0.5$\\
		\midrule
		\multicolumn{2}{l}{\textbf{Scenario (S$_1$) with constant fundamental frequency}} \\
		initial values of observer~\eqref{eq:observer_sogi} & $\xsogi_{0}=\ve{0}_{20}$ \\
		initial values of sFLL~\eqref{eq:sFLL} and mFLL~\eqref{eq:mFLL with AWU and rate limiter} & $\omegah_{0}=2 \pi \cdot \SI{50}{\radian\per\second}$ (and $\omegah(t) = \omega(t)$ for all $t \geq 0$) \\
		\midrule
		\multicolumn{2}{l}{\textbf{Scenario (S$_2$) with time-varying fundamental frequency}} \\
		initial values of observer~\eqref{eq:observer_sogi} & $\xsogi_{0}=\ve{0}_{20}$ \\
		initial values of sFLL~\eqref{eq:sFLL} and mFLL~\eqref{eq:mFLL with AWU and rate limiter}  &  $\omegah_{0}=\SI{200}{\radian\per\second} \neq \omega(0) = 2 \pi \cdot \SI{50}{\radian\per\second} $\\ 
		\bottomrule
	\end{tabular}
	\caption{Implementation and tuning data for simulations and measurements.}
	\label{tab:implementation data}
\end{table}

Three estimation methods are implemented and their estimation performances are compared for the following \emph{two scenarios}:
\begin{itemize}	
	\item[(S$_1$)] Estimation of an input signal $y$ with constant fundamental angular frequency (i.e.~$\omegah(t)=\omega(t)=2\pi \SI{50}{\radian\per\second}$ for all $t \geq 0$) and \emph{ten} harmonics exhibiting amplitude jumps at $t_1 = \SI{0.2}{\second}$, $t_2 = \SI{0.4}{\second}$ and $t_3 = \SI{0.6}{\second}$. The following three estimation methods are implemented and compared:
	\begin{itemize}
		\item parallelized mSOGIs \textit{without} FLL, i.e.~observer~\eqref{eq:observer_sogi} with $\mv{l}$ as in~\eqref{eq:feedback gain vector l of parallelized mSOGIs}; 
		\item parallelized sSOGIs \textit{without} FLL~\cite{2011_Rodriguez_MultiresonantFLLsforGridSynchronizationofPowerConvertersUnderDistortedGridConditions}, i.e.~observer~\eqref{eq:observer_sogi} with $\mv{l} = \sqrt{2}\mv{c}$ (and $g_\nu=0$ for all $\nu \in \mathbb{H}_n$); and
		\item parallelized Adaptive Notch Filters (ANFs) \textit{without} FLL~\cite{2007_Mojiri_Time-DomainSignalAnalysisUsingAdaptiveNotchFilter}, i.e.~observer~\eqref{eq:observer_sogi} with $\mv{l} = \mv{c}$.
	\end{itemize}
	\item[(S$_2$)] Estimation of an input signal $y$ with time-varying fundamental angular frequency $\omega(\cdot)$ (i.e.~$\omegah_0 \neq \omega(0)$) and \emph{ten} harmonics exhibiting frequency jumps at $t_1 = \SI{0.2}{\second}$ and $t_3 = \SI{0.6}{\second}$ and amplitude jumps at $t_2 = \SI{0.4}{\second}$ and $t_3 = \SI{0.6}{\second}$. The following three estimation methods are implemented and compared:
	\begin{itemize}
		\item parallelized mSOGIs \textit{with} mFLL, i.e.~\eqref{eq:observer_sogi} with $\mv{l}$ as in~\eqref{eq:feedback gain vector l of parallelized mSOGIs} and~\eqref{eq:mFLL with AWU and rate limiter}; 
		\item parallelized sSOGIs \textit{with} sFLL~\cite{2011_Rodriguez_MultiresonantFLLsforGridSynchronizationofPowerConvertersUnderDistortedGridConditions}, i.e.~observer~\eqref{eq:observer_sogi} with $\mv{l} = \sqrt{2}\mv{c}$ ($g_\nu=0$ for all $\nu \in \mathbb{H}_n$) and~\eqref{eq:sFLL}; and
		\item parallelized ANFs \textit{with} sFLL~\cite{2007_Mojiri_Time-DomainSignalAnalysisUsingAdaptiveNotchFilter}, i.e.~observer~\eqref{eq:observer_sogi} with $\mv{l} = \mv{c}$ and sFLL~\eqref{eq:fll_adaption_rule} (without gain normalization\footnote{In \cite{2007_Mojiri_Time-DomainSignalAnalysisUsingAdaptiveNotchFilter}, the gain is '$\gamma$' and set to $80$; the maximal amplitude is $a_1 = 1$. Because the Grid Emulator cannot produce such low voltages and the FLL is driven without a GN, the FLL gain for the ANFs is optimized here with respect to the used amplitudes.}).
	\end{itemize}
\end{itemize}
For both scenarios, the considered input signals $y$ have a significant harmonic distortion. The parameters of the individual harmonics (amplitudes $a_\nu$ and frequencies $f$)  of the signal $y$ for Scenario (S$_1$) and for Scenario (S$_2$) are collected in Tab.~\ref{tab:parameters of harmonics}. Within the considered time interval $[0, \, \SI{0.8}{\second}]$, three jump-like changes in amplitudes and/or fundamental frequency occur at  $\SI{0.2}{\second}$, $\SI{0.4}{\second}$ and $\SI{0.6}{\second}$. Hence, the input signal and its harmonic content changes abruptly and requires the observers to ``restart'' their estimation process for each step-like change.\\

For a fair comparison, all three estimation methods are tuned in such a way that the best feasible estimation performance was achieved within their respective tuning and capability limits. The measurement results for Scenario (S$_1$) are shown in the Figures~\ref{fig:Measurement results for Scenario S1_yhat_ey_fhat}, \ref{fig:Measurement results for Scenario S1_y_nu_e_nu} and \ref{fig:Measurement results for Scenario S1_y_nu_e_nu_ZOOM}. The results for Scenario (S$_2$) are depicted in Figures~\ref{fig:Measurement results for Scenario S2_yhat_ey_fhat}, \ref{fig:Measurement results for Scenario S2_y_nu_e_nu} and \ref{fig:Measurement results for Scenario S2_y_nu_e_nu_ZOOM}.
These results will be discussed in more detail in the next subsections.

\begin{table}[t!]
  \centering
    \includegraphics{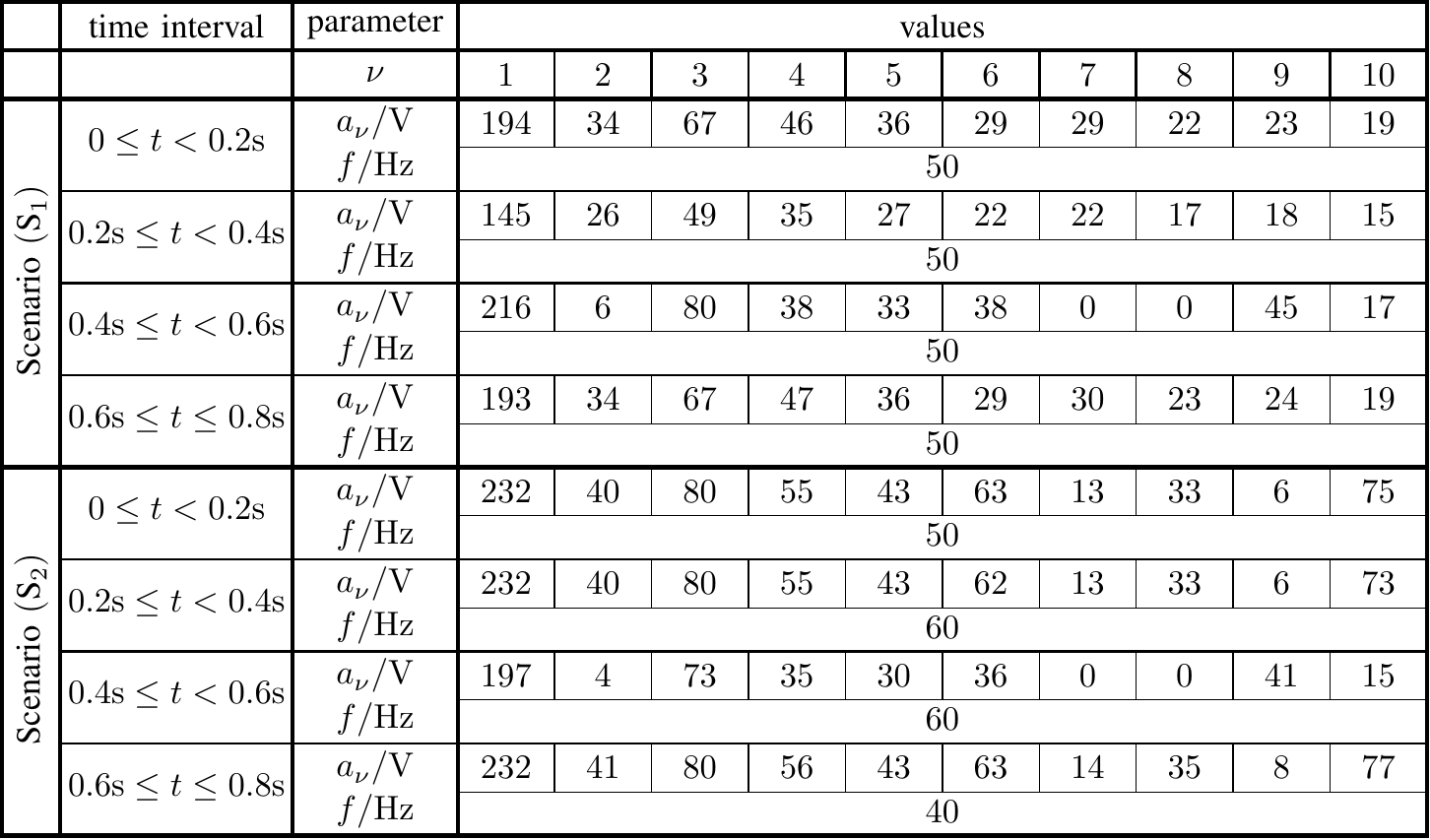}
\caption{Data of considered input signal $y$ for both scenarios (S$_1$) and (S$_2$):~Amplitudes and frequencies ($f = \tfrac{\omega}{2\pi}$) of the ten harmonics.}
\label{tab:parameters of harmonics}
\end{table}

\subsection{Discussion of the measurement results obtained for Scenario (S$_1$)}
For Scenario (S$_1$), the FLL were implemented but the adaption was turned off. Fundamental and estimated angular frequency are identical for this scenario. Therefore, the estimation performances purely according to the respective observer tunings can be compared. The harmonic content of the input signal $y$ undergoes step-like changes at the time instants $\SI{0.2}{\second}$, $\SI{0.4}{\second}$ and $\SI{0.6}{\second}$, respectively (see Tab.~\ref{tab:parameters of harmonics} and Fig.~\ref{fig:Measurement results for Scenario S1_yhat_ey_fhat}). \\

\begin{figure} 
	\centering
	\includegraphics[clip,width=\textwidth]{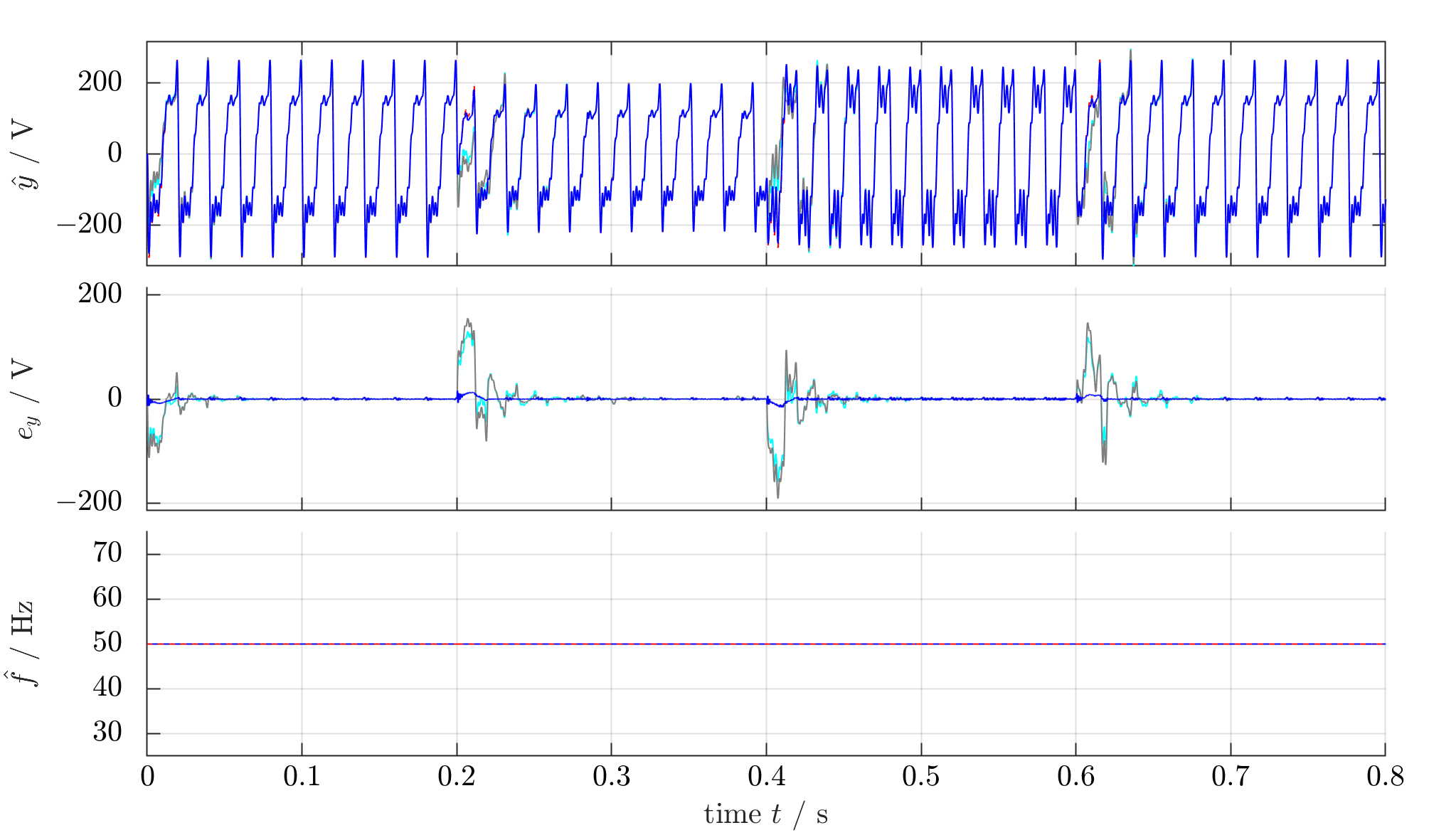}
	\caption{\textbf{Measurement results for Scenario (S$_1$)}:~Comparison of the estimation performances of parallelized mSOGIs ({\protect\blueline}), sSOGIs ({\protect\cyanline}) and ANFs ({\protect\grayline}) \emph{without} FLL. Signals shown from top to bottom are:~Input signal $y$ ({\protect\reddashedline}) \& its estimate $\widehat{y}$; estimation error $e_y = y - \widehat{y}$;  frequency $f=\tfrac{\omega}{2\pi}$ ({\protect\reddashedline}) \& its estimate $\widehat{f}= \tfrac{\widehat{\omega}}{2\pi}$.}
	\label{fig:Measurement results for Scenario S1_yhat_ey_fhat}
\end{figure}

Three measurement plots are presented in Figures~\ref{fig:Measurement results for Scenario S1_yhat_ey_fhat}, \ref{fig:Measurement results for Scenario S1_y_nu_e_nu} and \ref{fig:Measurement results for Scenario S1_y_nu_e_nu_ZOOM}. The overall estimation performances of the parallelized mSOGIs (\blueline), sSOGIs (\cyanline) and ANFs (\grayline) are depicted in Fig.~\ref{fig:Measurement results for Scenario S1_yhat_ey_fhat}: The first, second and third subplots show input signal $y$ (\reddashedline) \& its estimates $\widehat{y}$, the estimation errors $e_y = y - \widehat{y}$ and fundamental frequency $f$ (\reddashedline) \& its estimate $\widehat{f} = \tfrac{\widehat{\omega}}{2 \pi}$, respectively. All three observers are capable of estimating the input signal $y$. All estimation errors $e_y \to 0$ tend to zero after a certain time. The parallelized mSOGIs (\blueline) clearly outperform the other two estimation methods in estimation accuracy and estimation speed for all three step-like changes of the input signal $y$ at $\SI{0.2}{\second}$, $\SI{0.4}{\second}$ and $\SI{0.6}{\second}$. Estimation is completed in less than $\SI{20}{\milli\second}$. This is possibly due to the newly introduced gains $g_\nu$ for all $\nu \in \mathbb{H}_n$ which give the necessary degrees of freedom in observer design (recall discussion in Sect.~\ref{sec:Estimation performance of sSOGI and mSOGI}).\\

\begin{figure}
	\centering
	\begin{tabular}{cc}
		\includegraphics[clip,width=0.5\textwidth]{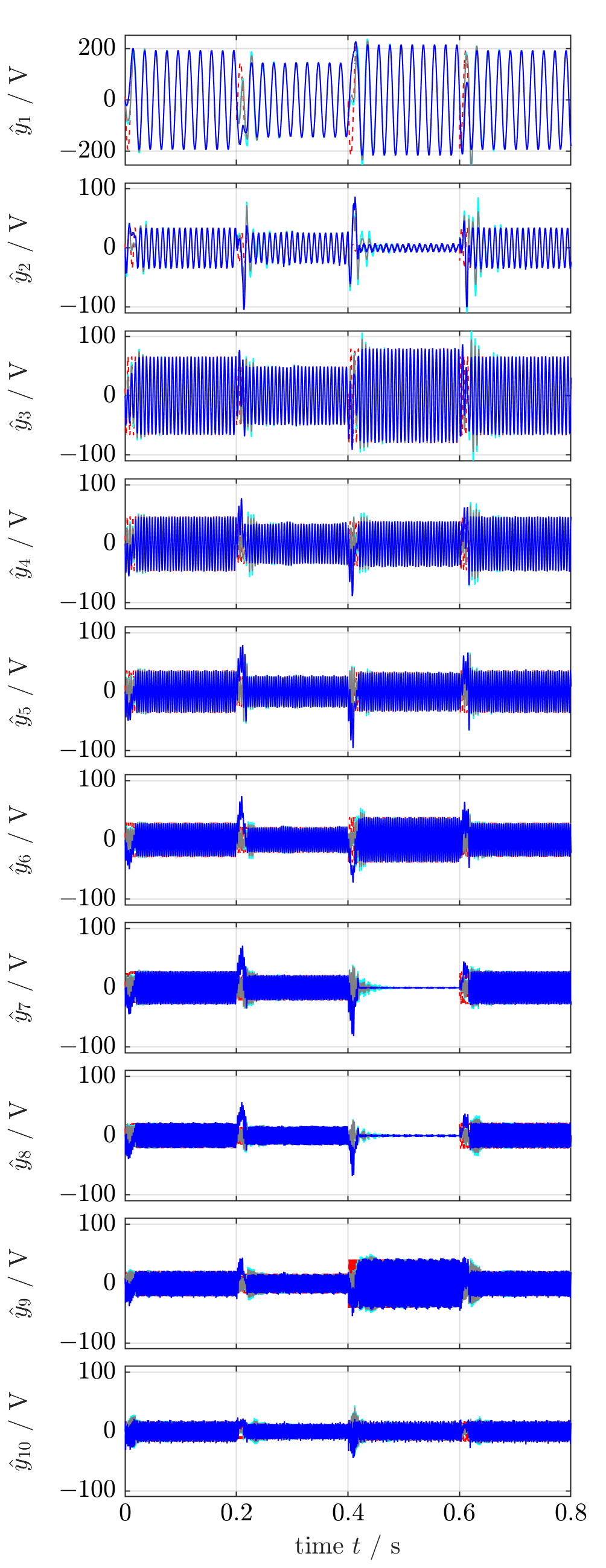}
		& 
		\includegraphics[clip,width=0.5\textwidth]{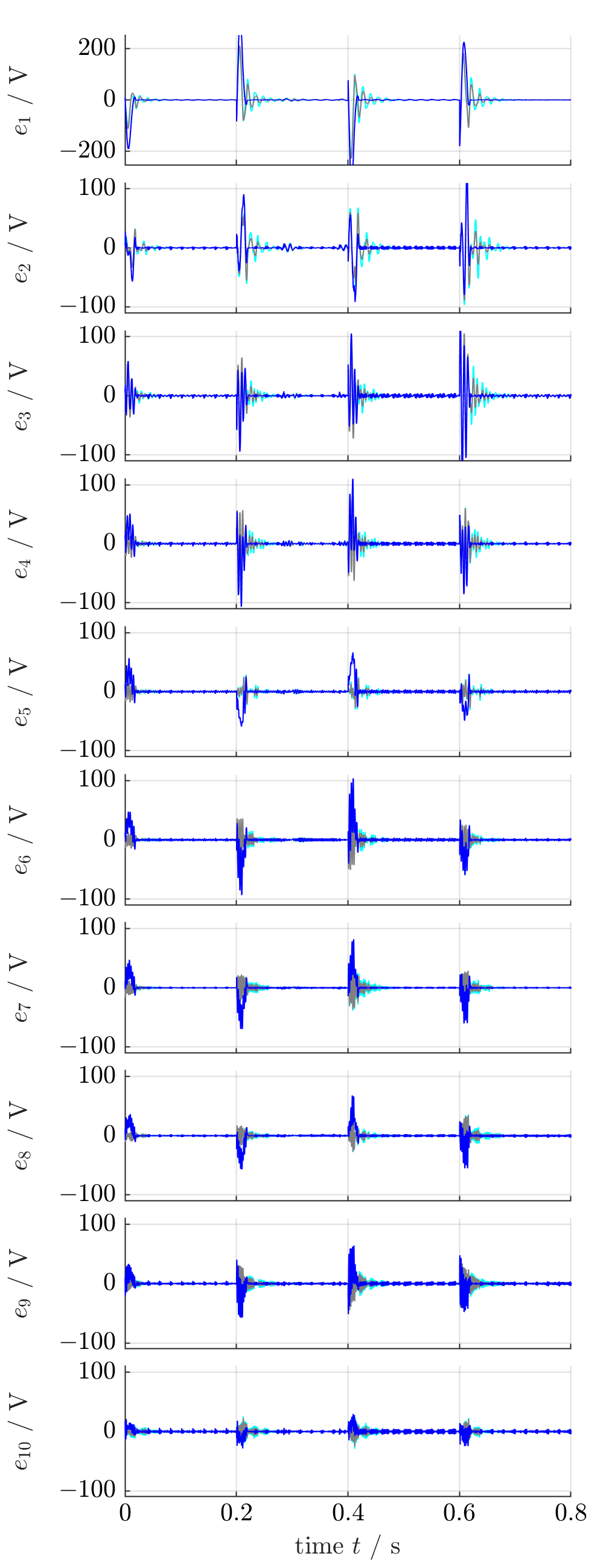}
	\end{tabular}
	\caption{\textbf{Measurement results for Scenario (S$_1$)}:~Comparison of the estimation performances of parallelized mSOGIs ({\protect\blueline}), sSOGIs ({\protect\cyanline}) and ANFs ({\protect\grayline}) \emph{without} FLL. Signals shown from top to bottom are:~Harmonic signals $y_1$ to $y_{10}$ ({\protect\reddashedline}) \& their estimates $\widehat{y}_1$ to $\widehat{y}_{10}$ (left) and harmonic estimation errors $e_1 = y_1 - \widehat{y}_1$ to $e_{10} = y_{10} - \widehat{y}_{10}$ (right).}
	\label{fig:Measurement results for Scenario S1_y_nu_e_nu}
\end{figure}

In Figures~\ref{fig:Measurement results for Scenario S1_y_nu_e_nu} and~\ref{fig:Measurement results for Scenario S1_y_nu_e_nu_ZOOM}, the estimation performances for the ten individual harmonics are illustrated for the complete time interval $[0, \, \SI{0.8}{\second}]$ of Scenario (S$_1$) (see Fig.~\ref{fig:Measurement results for Scenario S1_y_nu_e_nu}) and for the shorter interval $[\SI{0.6}{\second}, \, \SI{0.8}{\second}]$ (see Zoom in Fig.~\ref{fig:Measurement results for Scenario S1_y_nu_e_nu_ZOOM}), respectively. In both figures, on the left hand side, the harmonics $y_1$ to $y_{10}$ (\reddashedline) and theirs estimates $\widehat{y}_1$ to $\widehat{y}_{10}$ are shown; whereas on the right hand side, the estimation errors $e_1 := y_1 -\widehat{y}_1$ to  $e_{10} := y_{10} -\widehat{y}_{10}$ are depicted. Again, all three estimation methods are capable of tracking the respective harmonic components after a certain time:~Amplitudes and phases are estimated correctly with asymptotically vanishing estimation errors. But also for the individual harmonic estimation, the parallelized mSOGIs (\blueline) achieve a much faster estimation performance than the parallelized sSOGIs (\cyanline) and the ANFs (\grayline). In particular for the lower harmonics (such as $\nu_1 = 1$, $\nu_2 = 2$, $\nu_3 = 3$ and $\nu_4 = 4$), the estimation is three to four times faster than that of the other two methods. 
\begin{figure}
	\centering
	\begin{tabular}{cc}
		\includegraphics[clip,width=0.5\textwidth]{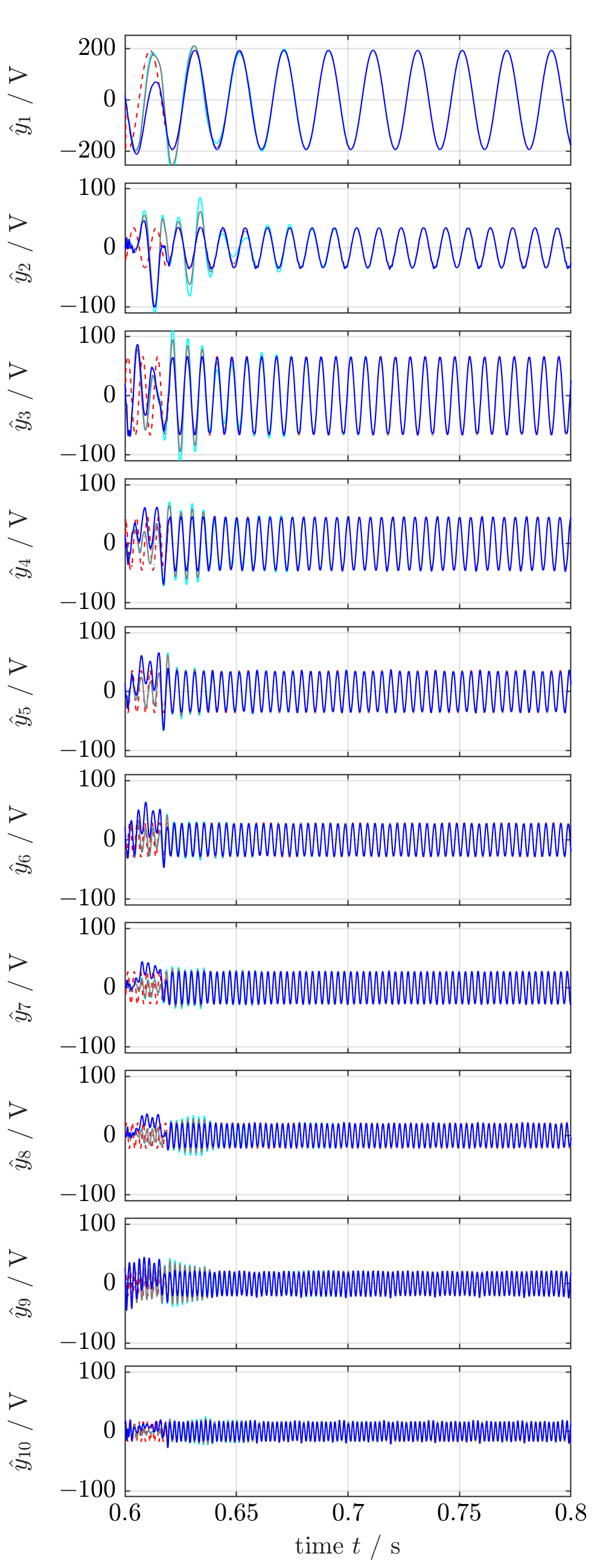}
		& 
		\includegraphics[clip,width=0.5\textwidth]{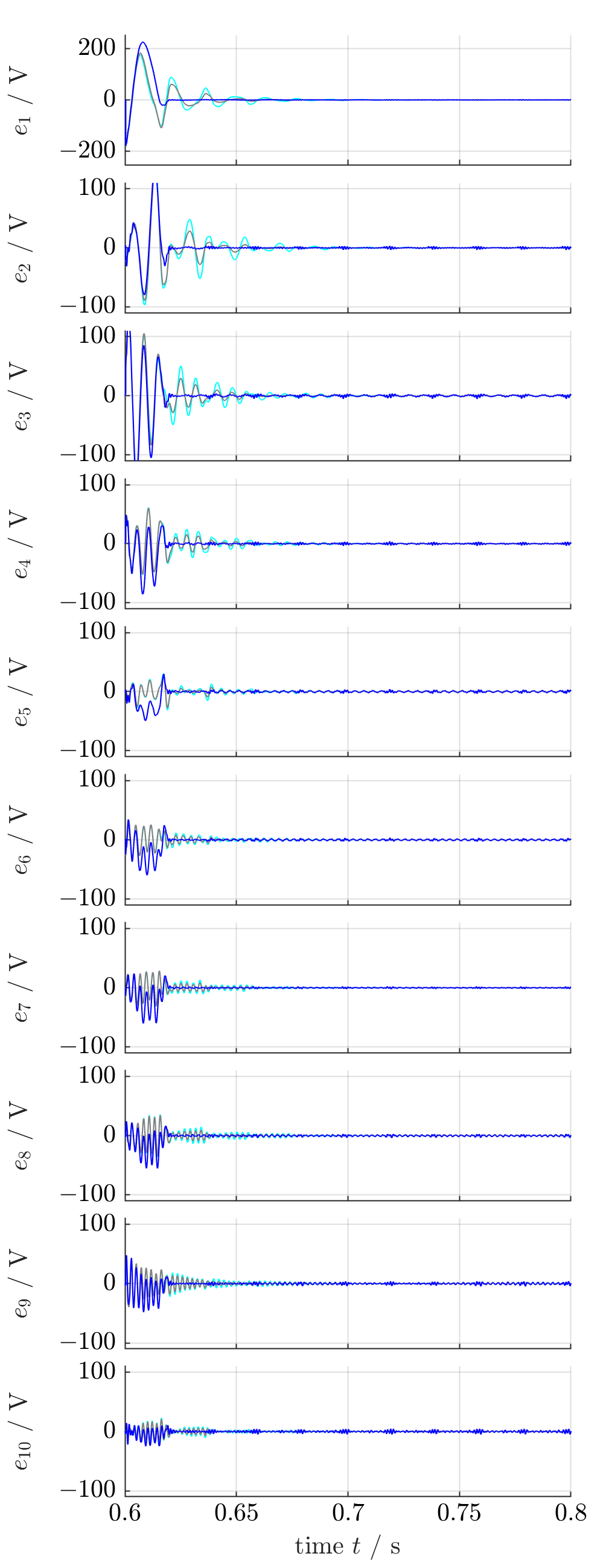}
	\end{tabular}
	\caption{\textbf{Measurement results for Scenario (S$_1$) -- Zoom of time interval $[0.6\si{\second}, 0.8\si{\second}]$}:~Comparison of the estimation performances of parallelized mSOGIs ({\protect\blueline}), sSOGIs ({\protect\cyanline}) and ANFs ({\protect\grayline}) \emph{without} FLL. Signals shown from top to bottom are:~Harmonic signals $y_1$ to $y_{10}$ ({\protect\reddashedline}) \& their estimates $\widehat{y}_1$ to $\widehat{y}_{10}$ (left) and harmonic estimation errors $e_1 = y_1 - \widehat{y}_1$ to $e_{10} = y_{10} - \widehat{y}_{10}$ (right).}
	\label{fig:Measurement results for Scenario S1_y_nu_e_nu_ZOOM}
\end{figure}

\subsection{Discussion of the measurement results obtained for Scenario (S$_2$)}
Scenario (S$_2$) is more challenging. Now, amplitudes and frequency of the input signal $y$ are time-varying. At time instants $\SI{0.2}{\second}$ and $\SI{0.6}{\second}$, the fundamental frequency jumps from $\SI{50}{\hertz}$ to $\SI{60}{\hertz}$ and from $\SI{60}{\hertz}$ to $\SI{40}{\hertz}$, respectively; whereas amplitudes and phases of the harmonic components change abruptly at $\SI{0.4}{\second}$ and $\SI{0.6}{\second}$, respectively (see Tab.~\ref{tab:parameters of harmonics} and Fig.~\ref{fig:Measurement results for Scenario S1_yhat_ey_fhat}). The measurement results for Scenario (S$_2$) are plotted in Figures~\ref{fig:Measurement results for Scenario S1_yhat_ey_fhat}, \ref{fig:Measurement results for Scenario S1_y_nu_e_nu} and \ref{fig:Measurement results for Scenario S1_y_nu_e_nu_ZOOM}. These figures show the identical quantities as those shown for Scenario (S$_1$). \\

\begin{figure} 
	\centering
	\includegraphics[clip,width=\textwidth]{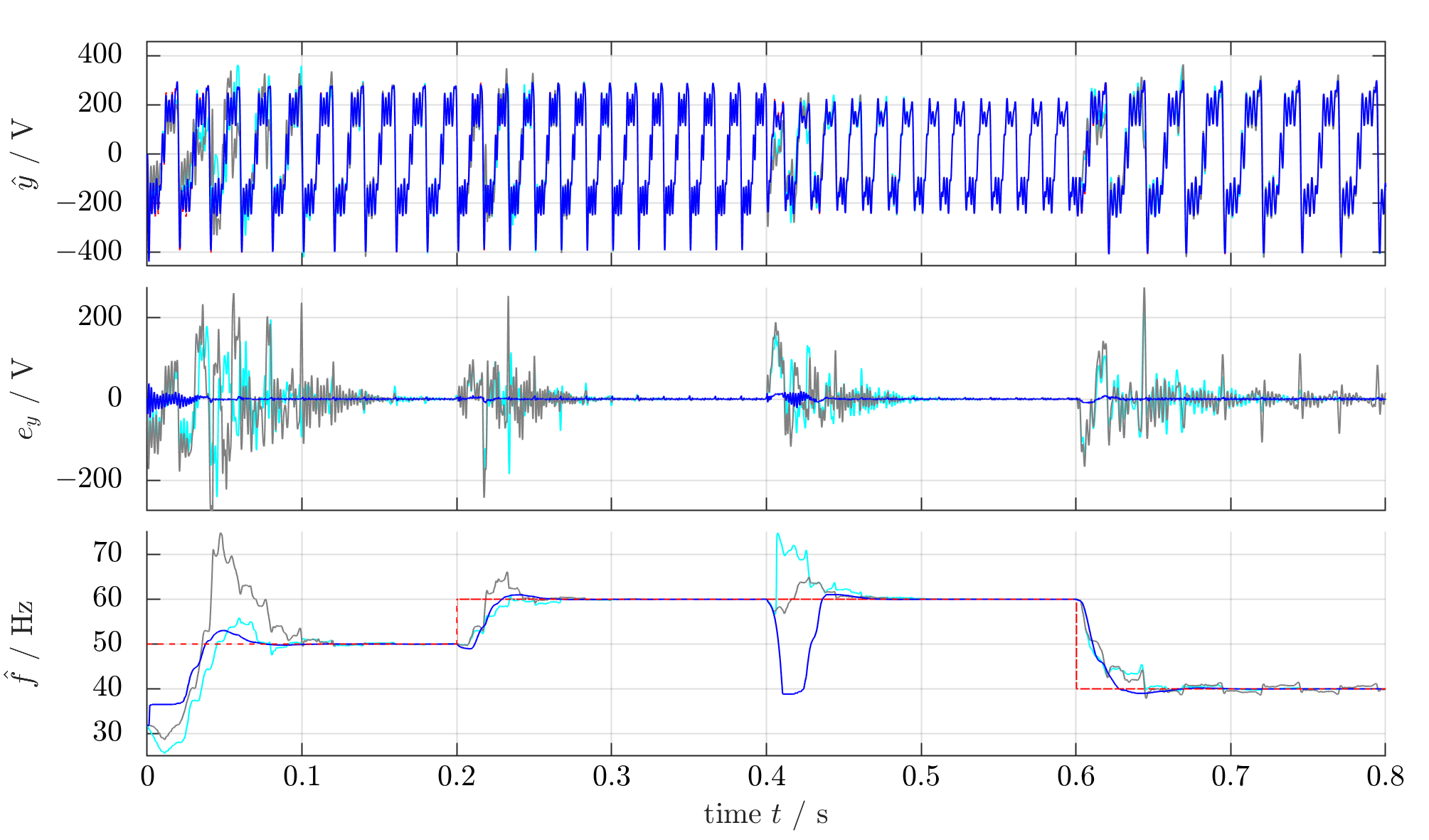}
	\caption{\textbf{Measurement results for Scenario (S$_2$)}:~\textbf{Measurement results for Scenario (S$_1$)}:~Comparison of the estimation performances of parallelized mSOGIs ({\protect\blueline}), sSOGIs ({\protect\cyanline}) and ANFs ({\protect\grayline}) \emph{with} FLL. Signals shown from top to bottom are:~Input signal $y$ ({\protect\reddashedline}) \& its estimate $\widehat{y}$; estimation error $e_y = y - \widehat{y}$;  frequency $f=\tfrac{\omega}{2\pi}$ ({\protect\reddashedline}) \& its estimate $\widehat{f}= \tfrac{\widehat{\omega}}{2\pi}$.}
	\label{fig:Measurement results for Scenario S2_yhat_ey_fhat}
\end{figure}

In Fig.~\ref{fig:Measurement results for Scenario S1_yhat_ey_fhat}, estimated signals $\widehat{y}$, estimation errors $e_y := y -\widehat{y}$ and estimated frequencies $\widehat{f}$ are shown for the parallelized mSOGIs (\blueline),  sSOGIs (\cyanline) and ANFs (\grayline), respectively. All three estimation methods are able to correctly estimate the fundamental frequency asymptotically. But, for the parallelized mSOGIs, estimation accuracy and estimation speed of the proposed mFLL are better and the estimation process is much smoother and exhibits less oscillations. Please note that the dip in the frequency estimation of the mFLL after $\SI{0.4}{\second}$ does \emph{not} endanger stability of the parallelized mSOGIs (which is due to anti-windup and rate limitation). The overall estimation accuracy of the mSOGIs is very convincing as can be seen in $e_y$. The estimation error tends to zero within $20-\SI{40}{\milli\second}$ after all three input changes at $\SI{0.2}{\second}$, $\SI{0.4}{\second}$ and $\SI{0.6}{\second}$ and remains close to zero afterwards. In contrast to that,  the overall estimation accuracy and estimation speed of sSOGIs (\cyanline) and ANFs (\grayline) are rather bad and slow:~Rapid changes in $e_y$ occur for more than $\SI{100}{\milli\second}$ after each change. Within the last interval $[\SI{0.6}{\second}, \, \SI{0.8}{\second}]$, the estimation error of both methods does not even tend to zero within $\SI{200}{\milli\second}$.\\

Similar observations can be made by comparing the individual harmonic estimation performances of the three estimation methods as shown in Fig.~\ref{fig:Measurement results for Scenario S1_y_nu_e_nu} for the overall time interval $[\SI{0}{\second}, \, \SI{0.8}{\second}]$ of Scenario (S$_2$) and in Fig.~\ref{fig:Measurement results for Scenario S1_y_nu_e_nu_ZOOM} for the zoomed time interval  $[\SI{0.6}{\second}, \, \SI{0.8}{\second}]$. Despite the rather bad input estimation performance of parallelized sSOGIs (\cyanline) and ANFs (\grayline), their harmonics estimation accuracy is acceptable:~All harmonic amplitudes and angles are estimated correctly after some time. However, also here the estimation speed of the parallelized mSOGIs (\blueline) is faster for all harmonic components (see $e_1$ to $e_{10}$ in Fig.~\ref{fig:Measurement results for Scenario S1_y_nu_e_nu}). However, the difference in estimation speed is not as significant as it was for Scenario (S$_1$), which shows that the FLL remains the weakest component of the grid estimation process and has to be improved further (future work). \\

The last measurement plots depicted in Fig.~\ref{fig:Measurement results for Scenario S2_y_nu_e_nu_ZOOM} show the zoomed version of the harmonics estimation during the shorter time interval $[\SI{0.6}{\second}, \, \SI{0.8}{\second}]$. Solely, the estimation performance of the parallelized mSOGIs (\blueline) is still acceptable. The estimation accuracies of parallelized sSOGIs (\cyanline) and ANFs (\grayline) exhibit significant oscillations and do not tend to zero (in particular for higher harmonics). Their estimation performances are clearly not acceptable anymore. In conclusion, the measurement results obtained for both scenarios have verified the improved performance of the proposed parallelized mSOGIs (\blueline) with mFLL compared to the slower and less accurate estimation performances of parallelized sSOGI (\cyanline) and ANFs (\grayline), respectively.

\begin{figure}[t!]
  \centering
\begin{tabular}{cc}
\includegraphics[clip,width=0.5\textwidth]{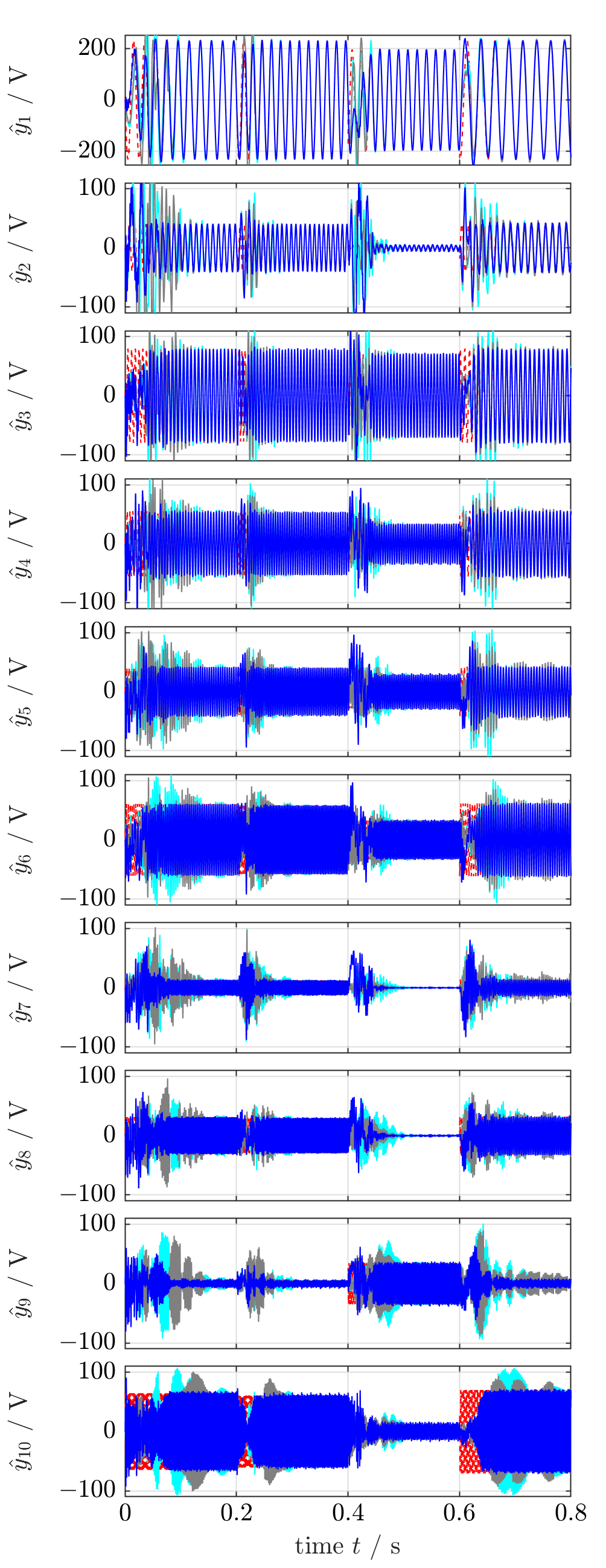}
& 
\includegraphics[clip,width=0.5\textwidth]{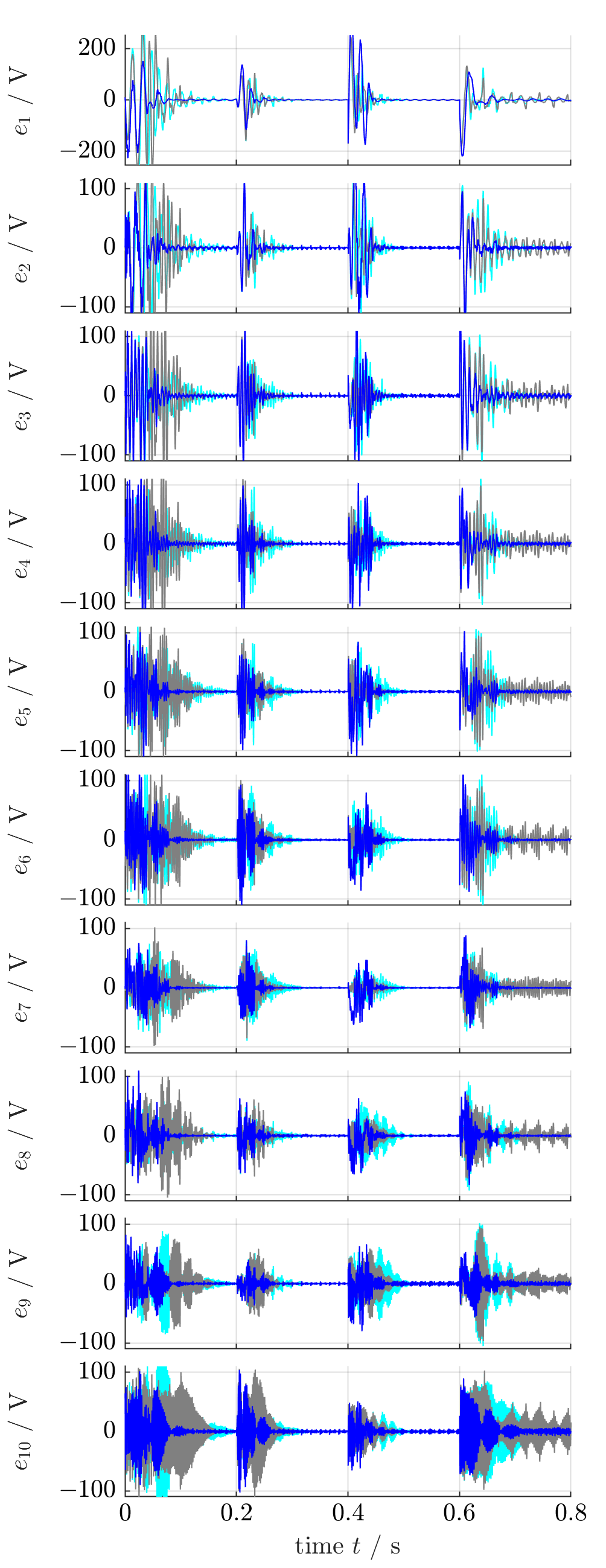}
\end{tabular}
\caption{\textbf{Measurement results for Scenario (S$_2$)}:~Comparison of the estimation performances of parallelized mSOGIs ({\protect\blueline}), sSOGIs ({\protect\cyanline}) and ANFs ({\protect\grayline}) \emph{with} FLL. Signals shown from top to bottom are:~Harmonic signals $y_1$ to $y_{10}$ ({\protect\reddashedline}) \& their estimates $\widehat{y}_1$ to $\widehat{y}_{10}$ (left) and harmonic estimation errors $e_1 = y_1 - \widehat{y}_1$ to $e_{10} = y_{10} - \widehat{y}_{10}$ (right).}
\label{fig:Measurement results for Scenario S2_y_nu_e_nu}
\end{figure}
\begin{figure}[t!]
	\centering
	\begin{tabular}{cc}
		\includegraphics[clip,width=0.5\textwidth]{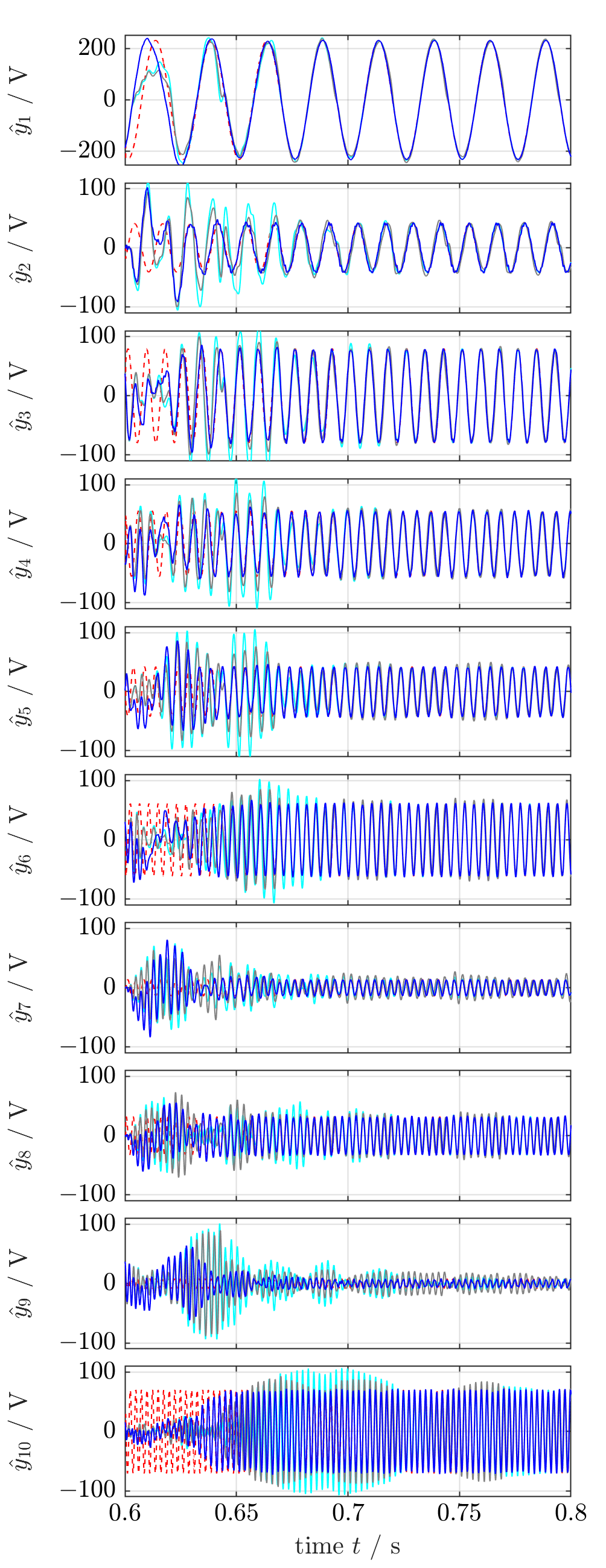}
		& 
		\includegraphics[clip,width=0.5\textwidth]{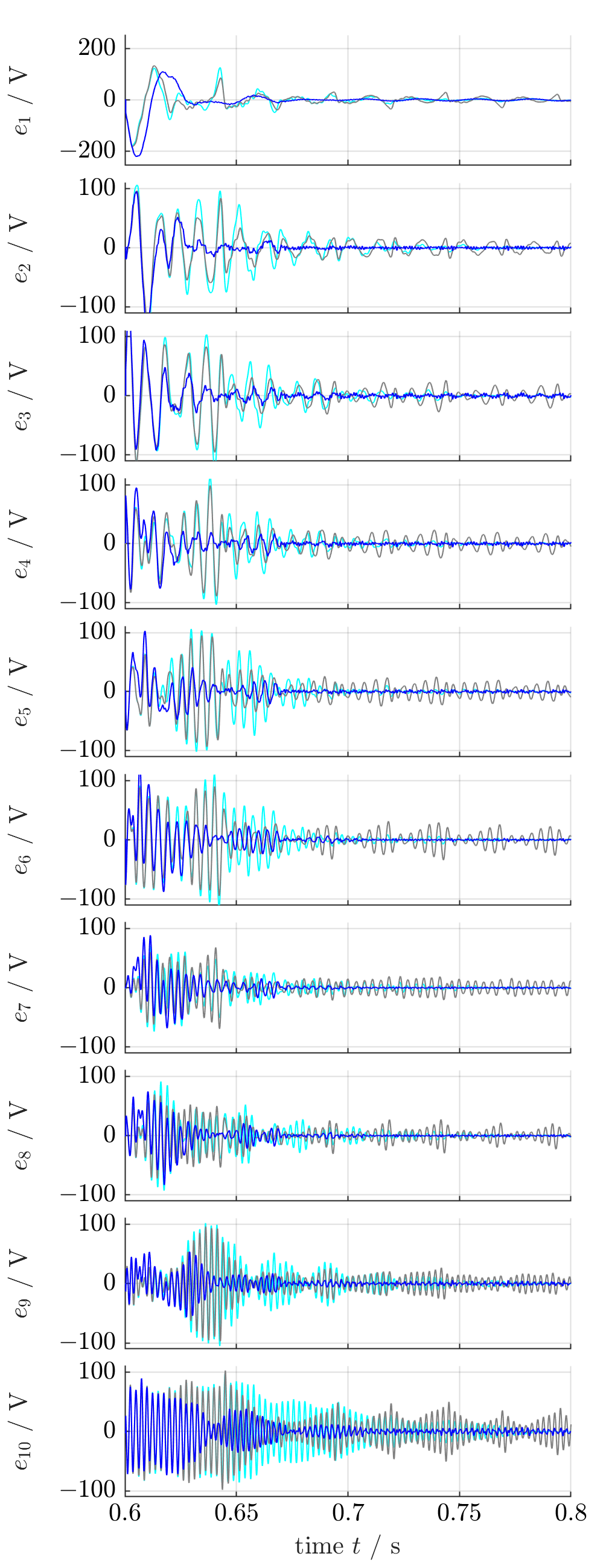}
	\end{tabular}
	\caption{\textbf{Measurement results for Scenario (S$_2$) -- Zoom of time interval $[0.6\si{\second}, 0.8\si{\second}]$}:~Comparison of the estimation performances of parallelized mSOGIs ({\protect\blueline}), sSOGIs ({\protect\cyanline}) and ANFs ({\protect\grayline}) \emph{with} FLL. Signals shown from top to bottom are:~Harmonic signals $y_1$ to $y_{10}$ ({\protect\reddashedline}) \& their estimates $\widehat{y}_1$ to $\widehat{y}_{10}$ (left) and harmonic estimation errors $e_1 = y_1 - \widehat{y}_1$ to $e_{10} = y_{10} - \widehat{y}_{10}$ (right).}
	\label{fig:Measurement results for Scenario S2_y_nu_e_nu_ZOOM}
\end{figure}
%


\section{Conclusion}
\label{sec_Conclusion_and_Outlook}
A modified Second-Order Generalized Integrator (mSOGI) for the $\nu$-th harmonic component and a modified Frequency Locked Loop (mFLL) for the parallelized mSOGIs have been proposed. The number $\nu$ can represent any positive not necessarily natural harmonic of an arbitrarily deteriorated input signal for which fundamental and higher harmonic components shall be estimated in real time.  In contrast to the $\nu$-th standard SOGI (sSOGI) in literature, the $\nu$-th mSOGI allows (theoretically) for an arbitrarily fast estimation of the in-phase and quadrature signal of any specified harmonic component with prescribed settling time.  This is possible due to an additionally introduced feedback gain in the mSOGI design. The proposed mFLL is equipped with sign-correct anti-windup strategy and rate limitation. These modifications enhance the frequency estimation in such a way that the frequency estimate remains positive and bounded and does not change too quickly (independently of mFLL tuning or operating point). Moreover, both enhancements overcome the stability problem of the estimator when a standard FLL (sFLL) is used. Measurement results illustrate and verify the improved estimation performance of the parallelized mSOGIs with and without mFLL in comparison to parallelized sSOGIs and ANFs with and without sFLL.

Future work will focus on (i) further improvements of the mFLL (acceleration of frequency estimation and global stability analysis), (ii) the extension of the presented results to three-phase signals (including DC offsets) and (iii) the real-time estimation of positive, negative and zero sequences of each harmonic component.


\clearpage
\appendix
\section{Appendix}

In the following appendices, observability and stability of the parallelized SOGIs, the pole placement algorithm and the generalization of the adaption law of the modified FLL for the parallelized mSOGIs are discussed in more detail.


\subsection{Recapitulation}

Recall that, any exogenous sinusoidal signal of the form $$y(t) := \sum\limits_{\nu \in \mathbb{H}_n} \underbrace{a_{\nu}(t)\cos\big(\phi_{\nu}(t)\big)}_{=: y_\nu(t)}  \quad \text{ where } \quad \mathbb{H}_n:= \{1,\nu_2, \dots,\nu_n\}, \quad \nu_i \neq \nu_j \quad \text{ for all } \quad i \neq j \in \bc{1, \ldots, n}$$ can be reduplicated by the parallelization of $n$ sinusoidal internal models~\cite[Chapter~20]{2017_Hackl_Non-identifierbasedadaptivecontrolinmechatronics:TheoryandApplication}.  The overall internal model dynamics are given by
\begin{equation}
\left.
\begin{array}{rcl}
\ddtsmall \mv{x}(t)  &=&  \omega(t)\JMAT\ve{x}(t), \qquad \qquad \qquad \qquad  \ve{x}\br{0} = \ve{x}_{0} \neq \mv{0}_{2n} \in \R^{2n} \vspace*{1ex}\\
y(t)  & = & \underbrace{(1,\, 0,\, 1,\, 0,\, \cdots,\, 1,\, 0)}_{=: \,\cyvec^{\top} \in \R^{2n}}\ve{x}(t)
\end{array} \qquad 
\right\}
\label{eq:state space dynamics of overall IM - appendix}
\end{equation}
where $\omega(\cdot) \in \Czero^{\mathrm{pw}} \cap \Linf(\Rzp;[\epsilon_{\omega},\infty))$ (with $\epsilon_{\omega} > 0$) and 
\begin{equation}
\mv{x}:=\big((\underbrace{x_1^{\alpha},\, x_1^{\beta})}_{=: \ve{x}_1^{\top}},\, \ldots,\, \ve{x}_n^{\top}\big)^{\top}, \; \JMAT := \blockdiag\br{\Jbar, \nu_2\Jbar, \cdots, \nu_n\Jbar} \in \R^{2n \times 2n} \; \text{ and } \; \Jbar = \begin{bmatrix} 0 & -1 \\ 1 & 0 \end{bmatrix} =  -\Jbar^\top = - \Jbar^{-1}.
\label{eq:definitions of x, J and Jbar - appendix}
\end{equation}
The initial values of the internal model in~\eqref{eq:state space dynamics of overall IM - appendix} allow to determine amplitude $a_\nu$ and angle $\phi_{\nu}$ of the $\nu$-th harmonic component. \\

The overall observer (estimator) consists of the parallelized mSOGIs, i.e.~the parallelized internal model~\eqref{eq:state space dynamics of overall IM - appendix} with feedback of the input signal $y(\cdot)$ and using the estimated angular frequency $\widehat{\omega}(\cdot)$ instead of $\omega(\cdot)$. Its dynamics are given by
\begin{equation}
\left.\begin{array}{rcl}
\ddtsmall\xsogi(t) &=& \widehat{\omega}(t)\big[\overbrace{\JMAT - \mv{l}\mv{c}^\top}^{=:\mm{A}}\big]\xsogi(t) + \widehat{\omega}(t) \,\mv{l}\, y(t) , \qquad \qquad \widehat{\mv{x}}(t) = \widehat{\mv{x}}_0 \in \R^{2n} \vspace*{2ex}\\
\yhout(t) &=& \cyvec^{\top}\xsogi(t),
\end{array}\right\}
\label{eq:observer_sogi - appendix}
\end{equation}
with observer state vector $\widehat{\mv{x}} := \big(\widehat{\mv{x}}_1^\top, \widehat{\mv{x}}_{\nu_2}^\top, \cdots,  \widehat{\mv{x}}_{\nu_n}^{\top} \big)^\top \in \R^{2n}$ and observer gain vector 
$
\mv{l} := \big(\mv{l}_1^\top, \mv{l}_{\nu_2}^\top, \cdots,  \mv{l}_{\nu_n}^{\top} \big)^\top \in \R^{2n}.
$

\subsection{Observability of the parallelized internal models~\eqref{eq:state space dynamics of overall IM - appendix} (for constant $\widehat{\omega}$)}
The grid state estimation algorithm is based on the idea of observability. The input signal can be considered to be generated by a parallelization of internal models which are individually capable of reduplicating a sinusoidal signal each (representing one harmonic component each). Hence, if this system is observable, an observer can be designed for grid state estimation.

\begin{proposition}[Observability of the linear generating system]
\label{prop:observability of parallelized IMs}
For constant $\omega>0$ and differing harmonics, i.e.~$\nu_i \neq \nu_j$ for all $i\neq j \in \{1,\dots n\}$, generating system~\eqref{eq:state space dynamics of overall IM - appendix} is completely state observable.
\end{proposition}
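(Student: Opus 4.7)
The plan is to invoke the Popov–Belevitch–Hautus (PBH) eigenvector test for observability, exploiting the block–diagonal structure of $\mathbf{J}$. For constant $\omega > 0$ the generating matrix $\omega\mathbf{J} = \blockdiag(\omega\bar{\mathbf{J}},\omega\nu_2\bar{\mathbf{J}},\ldots,\omega\nu_n\bar{\mathbf{J}})$ is block diagonal, so its spectrum is the union of the spectra of its blocks. Since $\bar{\mathbf{J}}$ has eigenvalues $\pm\jmath$ with eigenvectors $(1,\mp\jmath)^\top$, the block $\omega\nu_i\bar{\mathbf{J}}$ contributes the pair $\pm\jmath\omega\nu_i$ with the same eigenvectors. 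The first step is therefore to list all $2n$ eigenvalues and observe that, because $\nu_i\neq\nu_j$ for $i\neq j$ and all $\nu_i>0$, these $2n$ numbers are pairwise distinct.

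Next I would carry out the PBH test. Any right eigenvector of $\omega\mathbf{J}$ for the eigenvalue $\pm\jmath\omega\nu_i$ must, thanks to distinctness of spectra of the blocks, be supported on the $i$-th $2\times 2$ block and take the form $\mathbf{v}_i^{\pm}:=\mathbf{e}_i\otimes(1,\mp\jmath)^\top \in \C^{2n}$, where $\mathbf{e}_i$ is the $i$-th standard basis vector of $\R^n$. Evaluating the output on this vector gives
\begin{equation}
\mathbf{c}^\top\mathbf{v}_i^{\pm} \;=\; 1\cdot 1 + 0\cdot(\mp\jmath) \;=\; 1 \;\neq\; 0,
\label{eq:PBH-eigvec}
\end{equation}
so no eigenvector of $\omega\mathbf{J}$ lies in $\ker\mathbf{c}^\top$. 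By the PBH criterion this establishes observability of the pair $(\omega\mathbf{J},\mathbf{c}^\top)$, which is exactly the claim.

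An equivalent, and equally quick, route would be to note that each individual pair $(\omega\nu_i\bar{\mathbf{J}},(1,0))$ has observability matrix
\begin{equation}
\begin{bmatrix} 1 & 0 \\ 0 & -\omega\nu_i \end{bmatrix},
\label{eq:block-obs-mat}
\end{equation}
with nonzero determinant $-\omega\nu_i\neq 0$, hence is observable; then apply the standard fact that a block-diagonal system whose output matrix is the concatenation of the block-wise output matrices is observable if and only if every block is observable and the block spectra are pairwise disjoint. Both conditions are satisfied here for the same reason as above.

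No serious obstacle is anticipated: the only subtlety is to record that positivity of $\omega$ and of the $\nu_i$ together with $\nu_i\neq\nu_j$ guarantees both distinctness of the $2n$ eigenvalues \emph{and} non-degeneracy of the single-block observability matrix \eqref{eq:block-obs-mat}. Should one prefer the Kalman rank condition instead of PBH, the argument can alternatively be closed by a Vandermonde-type computation on $[\mathbf{c}^\top,(\omega\mathbf{J})^\top\mathbf{c}^\top,\ldots,((\omega\mathbf{J})^\top)^{2n-1}\mathbf{c}^\top]^\top$, but the PBH route is the most economical and is the one I would include in the write-up.
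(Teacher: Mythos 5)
Your proof is correct, but it takes a genuinely different route from the paper's. The paper verifies the Kalman rank condition directly: using the identities $\JMAT^{k} = \blockdiag(\Jbar^{k}, \nu_2^{k}\Jbar^{k}, \dots, \nu_n^{k}\Jbar^{k})$ and $\Jbar^{k} = -\Jbar^{k-2}$, it writes out the full $2n \times 2n$ observability matrix of $(\JMAT,\mv{c}^\top)$ and asserts it has rank $2n$ --- an assertion that ultimately rests on a Vandermonde-type determinant in the distinct values $\nu_1^2,\dots,\nu_n^2$, which the paper leaves implicit. You instead use the PBH eigenvector test: the block-diagonal structure gives the $2n$ eigenvalues $\pm\jmath\omega\nu_i$, which are simple and pairwise distinct precisely because $\omega>0$, $\nu_i>0$ and $\nu_i\neq\nu_j$; hence every eigenvector is localized to a single $2\times 2$ block and has the form $\mv{e}_i\otimes(1,\mp\jmath)^\top$, on which $\mv{c}^\top$ evaluates to $1\neq 0$. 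Your eigenvector computation for $\Jbar$ is correct, and the localization argument is valid since no eigenvalue is shared between blocks. What your approach buys is transparency about \emph{where} the hypotheses enter (distinctness and positivity of the $\nu_i$ are exactly what make the spectra of the blocks disjoint), at the cost of working over $\C$; the paper's rank computation stays real and explicit but defers the actual rank argument to the reader. Your closing remark that the Kalman-rank route reduces to a Vandermonde computation is exactly the step the paper glosses over, so either write-up would need that one extra line to be fully self-contained.
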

\begin{proof}
Note that the following two identities hold
$$
\forall k \in \N \colon \qquad \JMAT^{k} = \blockdiag\br{\Jbar^{k}, \nu_2^{k}\Jbar^{k}, \cdots, \nu_n^{k}\Jbar^{k}} \in \R^{2n \times 2n} \quad \text{and} \quad \Jbar^{k} = - \Jbar^{k - 2} \in \R^{2\times2},
$$
which imply
\begin{IEEEeqnarray}{rCl}
\rank{\scriptsize\begin{bmatrix} 
	\cyvec^{\top} \\ 
	\cyvec^{\top}\JMAT \\ 
	\cyvec^{\top}\JMAT^2 \\
	\cyvec^{\top}\JMAT^3 \\
	\vdots \\ 
		\cyvec^{\top}\JMAT^{2n-2}\\ 
	\cyvec^{\top}\JMAT^{2n-1} 
\end{bmatrix}} & = &
 \rank {\scriptsize \begin{bmatrix} 
1 & 0  & 1 & 0 & \cdots & 1 & 0 \\ 
0 & -1 & 0 & -\nu_2 & \cdots & 0 & -\nu_n \\
-1 & 0 & -\nu_2^2 & 0& \cdots & -\nu_n^2 & 0\\
0 & 1 & 0 & \nu_2^{3} & \cdots & 0 & \nu_n^{3} \\
\vdots & & & \vdots & & & \vdots \\ 
(-1)^{n-1} & 0 & (-1)^{n-1}\nu_2^{2n-2} & 0& \cdots & (-1)^{n-1}\nu_n^{2n-2} & 0\\
0 & (-1)^n & 0 & (-1)^n\nu_2^{2n-1} & \cdots & 0 & (-1)^n\nu_n^{2n-1} \\
\end{bmatrix} } = 2n.
\end{IEEEeqnarray}
Hence, the pair $(\mm{J},\mv{c}^\top)$ is observable~\cite[Corollary~12.3.19]{2009_Bernstein_MatrixMathematics---TheoryFactsandFormulaswithApplicationtoLinearSystemTheory}.
\end{proof}

\subsection{Bounded-input bounded-state/bounded-output stability of the nonlinear observer}

As first step, it is shown that for any essentially bounded input signal $y(\cdot)$ and any essentially bounded and strictly positive estimated angular frequency $\widehat{\omega}(\cdot)$ the parallelized mSOGIs are bounded-input bounded-state/bounded-output state stable. In other words, the estimated states $\widehat{\mv{x}}(\cdot)$ and the estimated output $\widehat{y}(\cdot)$ will \emph{not} diverge.

\begin{theorem}[Bounded-input bounded-state/bounded-output stability of the dynamics of the parallelized SOGIs]
 \mbox{}\label{thm:BIBO/S stability}
Consider an essentially bounded input signal $y(\cdot) \in \Linf(\Rzpos;\R)$ and assume that (i) the estimated time-varying fundamental angular frequency is continuous, bounded and uniformly bounded away from zero, i.e.~$\widehat{\omega}(\cdot) \in \Czero(\Rzpos;\Rpos) \cap \Linf(\Rzpos;\Rpos)$ with $\widehat{\omega}(t) \geq \epsilon_{\omega}  > 0$ for all $t\geq 0$, and (ii) the matrix $\mm{A}=\J - \mv{l}\mv{c}^\top$ in~\eqref{eq:observer_sogi - appendix} is a Hurwitz matrix. Then, the time-varying system~\eqref{eq:observer_sogi - appendix} is bounded-input bounded-state/bounded-output stable, i.e.~
$$
\forall t \geq 0\; \exists\, c_x, \, c_y > 0 \colon \qquad \norm{\widehat{\mv{x}}(t)} \leq c_x \quad \text{ and } \quad |\widehat{y}(t)| \leq c_y. 
$$
\end{theorem}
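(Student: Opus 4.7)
The plan is to exploit the Hurwitz assumption on $\mm{A}=\JMAT-\mv{l}\cyvec^{\top}$ via a quadratic Lyapunov function, and to absorb the time-varying multiplier $\widehat{\omega}(\cdot)$ using its uniform lower bound $\epsilon_{\omega}>0$. Since $\mm{A}$ is Hurwitz, the Lyapunov equation $\mm{A}^{\top}\mm{P}+\mm{P}\mm{A}=-\mm{I}_{2n}$ admits a unique symmetric positive-definite solution $\mm{P}$. I would take $V(\widehat{\mv{x}}):=\widehat{\mv{x}}^{\top}\mm{P}\widehat{\mv{x}}$ as Lyapunov candidate; the standard sandwich $\lambda_{\min}(\mm{P})\,\norm{\widehat{\mv{x}}}^{2}\leq V(\widehat{\mv{x}})\leq \lambda_{\max}(\mm{P})\,\norm{\widehat{\mv{x}}}^{2}$ will then turn any upper bound on $V$ into a state bound.

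Differentiating $V$ along~\eqref{eq:observer_sogi - appendix} and factoring out $\widehat{\omega}(t)>0$ gives, for almost every $t\geq 0$,
\[
\dot V(\widehat{\mv{x}}(t)) \;=\; \widehat{\omega}(t)\Bigl[-\norm{\widehat{\mv{x}}(t)}^{2} + 2\,\widehat{\mv{x}}(t)^{\top}\mm{P}\mv{l}\,y(t)\Bigr].
\]
A straightforward Young inequality $2\widehat{\mv{x}}^{\top}\mm{P}\mv{l}\,y\leq \tfrac{1}{2}\norm{\widehat{\mv{x}}}^{2}+2\norm{\mm{P}\mv{l}}^{2}y^{2}$, combined with $\widehat{\omega}(t)\geq \epsilon_{\omega}$, $\widehat{\omega}\in\Linf$, $y\in\Linf$, and the upper sandwich bound $\norm{\widehat{\mv{x}}}^{2}\geq V/\lambda_{\max}(\mm{P})$, collapses this into the linear differential inequality $\dot V(t)\leq -\alpha V(t)+\beta$ with $\alpha:=\tfrac{\epsilon_{\omega}}{2\lambda_{\max}(\mm{P})}>0$ and $\beta:=2\,\|\widehat{\omega}\|_{\infty}\,\norm{\mm{P}\mv{l}}^{2}\,\|y\|_{\infty}^{2}<\infty$. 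The comparison lemma then yields $V(\widehat{\mv{x}}(t))\leq V(\widehat{\mv{x}}_{0})e^{-\alpha t}+\beta/\alpha$ for all $t\geq 0$, from which
\[
\norm{\widehat{\mv{x}}(t)}\;\leq\; \sqrt{\tfrac{V(\widehat{\mv{x}}_{0})+\beta/\alpha}{\lambda_{\min}(\mm{P})}}\;=:\; c_{x},\qquad |\widehat{y}(t)|=|\cyvec^{\top}\widehat{\mv{x}}(t)|\leq \norm{\cyvec}\,c_{x}=:c_{y},
\]
which is precisely the claim.

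The only mild technical obstacle concerns regularity: $y(\cdot)$ is only essentially bounded and $\widehat{\omega}(\cdot)$ only continuous, so the differential inequality above holds merely almost everywhere, and I would need to invoke the absolutely-continuous version of the comparison/Gronwall lemma to conclude. An alternative and perhaps cleaner route sidesteps this entirely: because $\epsilon_{\omega}\leq\widehat{\omega}(s)\leq\|\widehat{\omega}\|_{\infty}$, the time rescaling $\tau(t):=\int_{0}^{t}\widehat{\omega}(s)\,\mathrm{d}s$ is a bi-Lipschitz bijection of $\Rzpos$ onto itself, and pulling~\eqref{eq:observer_sogi - appendix} back to the $\tau$-variable produces the \emph{LTI} system $\tfrac{\mathrm{d}}{\mathrm{d}\tau}\widetilde{\mv{x}}=\mm{A}\widetilde{\mv{x}}+\mv{l}\,\widetilde{y}(\tau)$ with Hurwitz $\mm{A}$ and essentially bounded input $\widetilde{y}$; the classical LTI BIBS/BIBO theorem then applies directly, and the resulting bounds transfer back to the original time axis precisely because the rescaling is bi-Lipschitz.
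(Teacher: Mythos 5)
Your main argument is essentially the paper's own proof: the paper also takes $V(\widehat{\mv{x}})=\widehat{\mv{x}}^{\top}\mm{P}\widehat{\mv{x}}$ with $\mm{P}$ from the Lyapunov equation $\mm{A}^{\top}\mm{P}+\mm{P}\mm{A}=-\mm{Q}$, splits the cross term with the same Young-type inequality $2ab\leq a^{2}/m+mb^{2}$, bounds $\widehat{\omega}$ below by $\epsilon_{\omega}$ on the negative term and above by $\esnorm{\widehat{\omega}}$ on the forcing term, and closes with the differential Bellman--Gronwall lemma; you have simply fixed $\mm{Q}=\mm{I}_{2n}$ and $m=2$, which is harmless. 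Your remark that the differential inequality only holds almost everywhere is a fair point of rigour that the paper passes over (it does, however, explicitly record global existence of solutions via local Lipschitz continuity, which you leave implicit in the comparison argument). Your second route --- the time rescaling $\tau(t)=\int_{0}^{t}\widehat{\omega}(s)\,\mathrm{d}s$, which is a bijection of $\Rzpos$ onto itself because $\tau(t)\geq\epsilon_{\omega}t$, reducing the system to an LTI one with Hurwitz $\mm{A}$ --- is genuinely different from the paper and is arguably cleaner: it dispenses with the Lyapunov machinery entirely and invokes the classical LTI BIBS/BIBO result, at the cost of being special to systems where the time variation enters as a common scalar factor (the Lyapunov route generalizes more readily, e.g.\ to the error dynamics in Theorem~\ref{thm:asymptotic tracking}, where the perturbation $e_{\omega}\J\mv{x}$ does not share the factor $\widehat{\omega}$). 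Both arguments are correct.
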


\begin{proof}
First note that, since $\mm{A}$ is Hurwitz, there exists $\mm{P} = \mm{P}^\top >0$ such that, for any given $\mm{Q} = \mm{Q}^\top>0$, the following identity holds~\cite[Corollary~3.3.47]{2005_Hinrichsen_MathematicalSystemsTheoryI---ModellingStateSpaceAnalysisStabilityandRobustness}

\begin{equation}
 \mm{A}^\top \mm{P}  + \mm{P} \mm{A} = - \mm{Q}.
 \label{eq:Lyapunov identity}
\end{equation}

Moreover, note that 

\begin{equation}
 \forall \, a,b \in \R\; \forall\,m >0\colon \qquad 2\, a\, b = \tfrac{a^2}{m} + m b^2 - \big( \tfrac{a}{\sqrt{m}} - \sqrt{m}\,b \big)^2 \leq \tfrac{a^2}{m} + m b^2.
 \label{eq:2ab<=...}
\end{equation}

Next, introduce the non-negative Lyapunov-like function
$$
  V\colon \R^{2n} \to \Rzpos, \qquad \widehat{\mv{x}} \mapsto V(\widehat{\mv{x}}) := \widehat{\mv{x}}^\top \mm{P} \widehat{\mv{x}}
$$
and denote minimal and maximal eigenvalue of $\mm{P}$ by $\lambda_{\min}(\mm{P})$ and $\lambda_{\max}(\mm{P})$, respectively. Then, clearly, the following holds 

\begin{equation}
 \forall \, \widehat{\mv{x}} \in \R^{2n}\colon \qquad \lambda_{\min}(\mm{P})\norm{\widehat{\mv{x}}}^2 \leq V(\widehat{\mv{x}}) \leq \lambda_{\max}(\mm{P})\norm{\widehat{\mv{x}}}^2 \quad \Longrightarrow \quad - \norm{\widehat{\mv{x}}}^2 \leq -\tfrac{1}{\lambda_{\max}(\mm{P})}V(\widehat{\mv{x}}). 
 \label{eq:Inequalities for Lyapunov function}
\end{equation}

The right-hand side of~\eqref{eq:observer_sogi - appendix} is locally Lipschitz continuous with bounded Lipschitz constant and bounded exogenous perturbation. Hence, the solution of~\eqref{eq:observer_sogi - appendix} exists globally on $\Rzpos$~\cite[Theorem~2.2.14 \& Proposition~2.2.19]{2005_Hinrichsen_MathematicalSystemsTheoryI---ModellingStateSpaceAnalysisStabilityandRobustness} (but still might diverge as $t \to \infty$).
The time derivative of $V(\cdot)$ along the solution of~\eqref{eq:observer_sogi - appendix} is, for all $t \geq 0$, given and upper bounded by 

\begin{eqnarray}
\ddtsmall V\big(\widehat{\mv{x}}(t)\big) & = & \ddtsmall \widehat{\mv{x}}(t)^\top\mm{P} \widehat{\mv{x}}(t) +  \widehat{\mv{x}}(t)^\top\mm{P} \ddtsmall \widehat{\mv{x}}(t) \notag \\
	  & \stackrel{\eqref{eq:observer_sogi - appendix}}{=} & \widehat{\omega}(t) \Big[ \widehat{\mv{x}}(t)^\top\big(\mm{A}^\top \mm{P}  + \mm{P} \mm{A} \big)\widehat{\mv{x}}(t) + y(t) \mv{l}^\top \mm{P} \widehat{\mv{x}}(t)  + \widehat{\mv{x}}(t)^\top \mm{P}\mv{l} y(t)  \Big] \notag \\
	  &  = & \widehat{\omega}(t) \Big[ \widehat{\mv{x}}(t)^\top\big(\mm{A}^\top \mm{P}  + \mm{P} \mm{A} \big)\widehat{\mv{x}}(t) + 2 \widehat{\mv{x}}(t)^\top \mm{P}\mv{l} y(t)  \Big] \notag \\
	  & \stackrel{\eqref{eq:Lyapunov identity}}{=} &  \widehat{\omega}(t) \Big[- \widehat{\mv{x}}(t)^\top\mm{Q} \widehat{\mv{x}}(t) + 2 \widehat{\mv{x}}(t)^\top \mm{P}\mv{l} y(t)  \Big] \notag \\
	  & \stackrel{\eqref{eq:Inequalities for Lyapunov function}}{\leq} &   \widehat{\omega}(t) \Big[- \lambda_{\min}(\mm{Q}) \norm{\widehat{\mv{x}}(t)}^2 + 2 \underbrace{\norm{\widehat{\mv{x}}(t)}}_{=:a} \underbrace{\norm{\mm{P}}\norm{\mv{l}} \esnorm{y}}_{=:b}  \Big] \notag \\
	  & \stackrel{\eqref{eq:2ab<=...}}{\leq} &   \widehat{\omega}(t) \Big[- \big(\underbrace{\lambda_{\min}(\mm{Q}) - \tfrac{1}{m}}_{\exists m \geq 1 \text{ s.t. } (\cdot) \geq \epsilon_m > 0}\big) \norm{\widehat{\mv{x}}(t)}^2 + \underbrace{m\norm{\mm{P}}^2\norm{\mv{l}}^2 \esnorm{y}^2}_{=:c_m < \infty}  \Big] \notag \\
	  & \stackrel{\eqref{eq:Inequalities for Lyapunov function}}{\leq} &  \Big[- \tfrac{\epsilon_m \epsilon_{\omega}}{\lambda_{\max}(\mm{P})} V\big(\widehat{\mv{x}}(t)\big) + c_m \esnorm{\widehat{\omega}}  \Big] \notag \\
	  \Longrightarrow V\big(\widehat{\mv{x}}(t)\big) & \leq & V\big(\widehat{\mv{x}}(0)\big) + c_m \esnorm{\widehat{\omega}}\tfrac{\lambda_{\max}(\mm{P})}{\epsilon_m \epsilon_{\omega}},
\label{eq:Lyapunov analysis}
\end{eqnarray}

where,  in the last step, the Bellman-Gronwall Lemma~\cite[p.~102f.]{2002_Khalil_NonlinearSystems} was used in its differential form (see Lemma~5.50 and Example~5.51 in~\cite{2017_Hackl_Non-identifierbasedadaptivecontrolinmechatronics:TheoryandApplication}). Hence, in view of~\eqref{eq:Inequalities for Lyapunov function} and~\eqref{eq:Lyapunov analysis}, and with $\mv{c}$ as in~\eqref{eq:state space dynamics of overall IM - appendix}, one can conclude that 

\begin{multline*}
 \forall\,t\geq0 \colon \norm{\widehat{\mv{x}}(t)} \stackrel{\eqref{eq:Inequalities for Lyapunov function},\eqref{eq:Lyapunov analysis}}{\leq} \sqrt{\tfrac{1}{\lambda_{\min}(\mm{P})}\Big( V\big(\widehat{\mv{x}}(0)\big) + c_m \esnorm{\widehat{\omega}}\tfrac{\lambda_{\max}(\mm{P})}{\epsilon_m \epsilon_{\omega}}\Big)} =: c_x < \infty \\
 \text{ and } \qquad |\widehat{y}(t)| \stackrel{\eqref{eq:observer_sogi - appendix}}{\leq} \norm{\mv{c}}\norm{\widehat{\mv{x}}(t)} \leq \norm{\mv{c}}c_x =: c_y < \infty,
\end{multline*}

which completes the proof. 
\end{proof}

\subsection{Boundedness and exponential decay of the signal estimation error}
It is shown that, for piecewise continuous (sinusoidal) and bounded input signals $y(\cdot) \in \Czero^{\textrm{pw}}(\Rzpos;\R) \cap \Linf(\Rzpos;\R)$, the estimation error of the parallelized mSOGIs (or sSOGIs) is bounded. Additionally, if the piecewise constant fundamental angular frequency $\omega(\cdot)$ is correctly estimated, the estimation error decays exponentially.

To present the result, an important observation must be introduced. Note that, for $\omega(\cdot) \in \Czero^{\textrm{pw}} \cap \Linf(\Rzp;[\epsilon_{\omega},\infty))$, any piecewise continuous (sinusoidal-like) signal of the form $y(\cdot) = \sum_{\nu \in \mathbb{H}_n} a_{\nu} \cos\big( \nu \int_0^{\cdot}\omega(\tau) \dx{\tau} + \phi_{0,\nu}\big)$ on any bounded interval $\mathbb{I}_i := [t_i,  t_{i+1})$, $i \in \N_0$ (such that $\Rzpos = \mathbb{I}_0 \cup \mathbb{I}_1 \cup \mathbb{I}_2 \cup \dots $) can be generated by (the output of) a properly initialized internal model~\cite{1976_Francis_TheInternalModelPrincipleofControlTheory} of the following form 

\begin{equation}
\left.
\begin{array}{rcl}
\forall \, t \in \mathbb{I}_i\colon \quad \ddtsmall \mv{x}(t)  &  = &  \omega(t)\, \J \mv{x}(t), \qquad \mv{x}(t_i)=\mv{x}_{i,0} \in \R^{2n} \vspace*{1ex}\\
  y(t)  & = & \mv{c}^\top \mv{x}(t)
\end{array} \qquad 
\right\}
\label{eq:signal generation by IM}
\end{equation}
with $\J$ as in~\eqref{eq:definitions of x, J and Jbar - appendix}.  $\omega(\cdot)$ can be considered as an external input to the internal model. Clearly, for any real (finite) initial value $\mv{x}_{i,0} \in \R^{2n}$ for the $i$-th time interval $\mathbb{I}_i$, all states of the internal model~\eqref{eq:signal generation by IM} are essentially bounded, i.e.~$\mv{x}(\cdot)\in \Linf(\Rzpos;\R^{2n})$. Note that $\omega(\cdot)$, $\phi_{0,\nu}$ and $a_\nu$ for all $\nu \in \mathbb{H}_n$ might change for each interval $\mathbb{I}_i$. Now, the result can be stated.

\begin{theorem}[Boundedness and exponential decay of the signal estimation error]
\label{thm:asymptotic tracking}

Let $\epsilon_{\omega}>0$, $\mathbb{H}_n = \{\nu_1,\nu_2, \dots, \nu_n\}$, $a_{\nu} \geq 0$, $\phi_{0,\nu}$ for all $\nu \in \mathbb{H}_n$ and $\widehat{\omega}(\cdot), \, \omega(\cdot) \in  \Czero^{\textrm{pw}} \cap \Linf(\Rzpos;[\epsilon_{\omega},\infty))$. Consider any piecewise continuous and bounded input signals, i.e.~$y(\cdot) = \sum_{\nu \in \mathbb{H}_n} a_{\nu} \cos\big( \nu \int_0^{\cdot}\omega(\tau)\dx{\tau} + \phi_{0,\nu}\big) \in \Czero^{\textrm{pw}}  \cap \Linf(\Rzpos;\R)$ on any bounded interval $\mathbb{I}_i := [t_i,  t_{i+1})$, $i \in \N_0$ (such that $\Rzpos = \mathbb{I}_0 \cup \mathbb{I}_1 \cup \mathbb{I}_2 \cup \dots $), generated by the internal model~\eqref{eq:signal generation by IM} and assume that $y(\cdot)$ is fed to the parallelized SOGI system~\eqref{eq:observer_sogi - appendix} with $\mm{A}:= \J - \mv{l}\mv{c}^\top$ being a Hurwitz matrix. Then, the estimation error, defined by 
\begin{equation}
 \mv{e}_x(t) := \mv{x}(t) - \widehat{\mv{x}}(t) \in \R^{2n}
 \label{eq:estimation error vector}
\end{equation}
with $\mv{x}(t)$ as in~\eqref{eq:signal generation by IM} and $\widehat{\mv{x}}(t)$ as in~\eqref{eq:observer_sogi - appendix}, is bounded, i.e.~there exists $c_e>0$ such that $\norm{\mv{e}_x(t)}\leq c_e$ for all $t \geq0$. Moreover, if, for some $i\in\N_0$, $\omega(t) = \widehat{\omega}(t)$ for all $t \in \Iss \subseteq \mathbb{I}_i$, then the norm of the estimation error is exponentially decaying, i.e.~there exist constants $c_V, \,\mu_V >0$ such that $\norm{\mv{e}_x(t)} \leq c_V\,\norm{\mv{e}_x(t_i)} \e^{-\mu_V (t -t_i)}$ for all $t \in \Iss$.
\end{theorem}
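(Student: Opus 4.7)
The plan is to reduce the theorem to an application of the Lyapunov machinery already developed for Theorem~\ref{thm:BIBO/S stability}, by first deriving a clean error differential equation whose unperturbed part coincides with the observer matrix $\mm{A}$. Subtracting~\eqref{eq:observer_sogi - appendix} from~\eqref{eq:signal generation by IM} and adding/subtracting $\widehat{\omega}(t)\J\mv{x}(t)$, I obtain
\begin{equation}
  \ddtsmall \mv{e}_x(t) \;=\; \widehat{\omega}(t)\,\mm{A}\,\mv{e}_x(t) \;+\; \bigl(\omega(t)-\widehat{\omega}(t)\bigr)\J\,\mv{x}(t),
  \label{eq:error dynamics plan}
\end{equation}
where $\mm{A}=\J-\mv{l}\mv{c}^{\top}$ is Hurwitz by assumption. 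Thus the error system has the same structure as~\eqref{eq:observer_sogi - appendix}, with the role of the exogenous driving signal $\widehat{\omega}\mv{l}y$ replaced by the bounded perturbation $\mv{d}(t):=(\omega(t)-\widehat{\omega}(t))\J\mv{x}(t)$.

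Next I would establish boundedness of the generating state $\mv{x}(\cdot)$. Since $\J=\blockdiag(\bar{\J},\nu_2\bar{\J},\dots,\nu_n\bar{\J})$ is skew-symmetric and $\omega(\cdot)$ is scalar, one has $\ddtsmall\norm{\mv{x}(t)}^{2}=2\omega(t)\,\mv{x}(t)^{\top}\J\mv{x}(t)=0$ on every interval $\mathbb{I}_i$, and because the prescribed amplitudes $a_\nu$ are finite, the reinitialization on each new interval $\mathbb{I}_i$ remains bounded (explicitly $\norm{\mv{x}(t)}=\sqrt{\sum_{\nu}a_{\nu}^{2}}$ for a properly initialized internal model). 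Hence $\mv{x}(\cdot)\in\Linf$, and together with $\omega,\widehat{\omega}\in\Linf$ this gives $\mv{d}(\cdot)\in\Linf$ with bound $\esnorm{\mv{d}}\leq(\esnorm{\omega}+\esnorm{\widehat{\omega}})\norm{\J}\esnorm{\mv{x}}$.

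Boundedness of $\mv{e}_x(\cdot)$ then follows by repeating the Lyapunov computation of Theorem~\ref{thm:BIBO/S stability}, verbatim, with $V(\mv{e}_x):=\mv{e}_x^{\top}\mm{P}\mv{e}_x$ (where $\mm{A}^{\top}\mm{P}+\mm{P}\mm{A}=-\mm{Q}$, $\mm{Q}=\mm{Q}^{\top}>0$), the perturbation $\mv{d}$ in place of $\widehat{\omega}\mv{l}y$, Young's inequality~\eqref{eq:2ab<=...}, and the Bellman--Gronwall lemma in differential form. This produces the explicit bound
\begin{equation}
  \norm{\mv{e}_x(t)} \;\leq\; \sqrt{\tfrac{1}{\lambda_{\min}(\mm{P})}\Big(V(\mv{e}_x(0))+\tilde{c}_{m}\,\esnorm{\widehat{\omega}}\,\tfrac{\lambda_{\max}(\mm{P})}{\epsilon_{m}\,\epsilon_{\omega}}\Big)} \;=:\; c_{e}
  \label{eq:bound on e_x plan}
\end{equation}
for a suitable constant $\tilde{c}_{m}$ depending on $\mm{P}$, $\esnorm{\mv{d}}$, and $m\geq1$.

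For the exponential decay on $\Iss\subseteq\mathbb{I}_i$, I use that $\omega(t)=\widehat{\omega}(t)$ kills the perturbation, so~\eqref{eq:error dynamics plan} reduces to $\ddtsmall\mv{e}_x=\widehat{\omega}(t)\mm{A}\mv{e}_x$. Along this homogeneous flow,
\begin{equation}
  \ddtsmall V(\mv{e}_x(t)) \;=\; -\widehat{\omega}(t)\,\mv{e}_x(t)^{\top}\mm{Q}\,\mv{e}_x(t) \;\leq\; -\,\tfrac{\epsilon_{\omega}\,\lambda_{\min}(\mm{Q})}{\lambda_{\max}(\mm{P})}\,V(\mv{e}_x(t)),
\end{equation}
and the (differential) Gronwall lemma together with $\lambda_{\min}(\mm{P})\norm{\mv{e}_x}^{2}\leq V(\mv{e}_x)\leq\lambda_{\max}(\mm{P})\norm{\mv{e}_x}^{2}$ yields the claimed exponential estimate with
\begin{equation}
  c_{V} \;=\; \sqrt{\tfrac{\lambda_{\max}(\mm{P})}{\lambda_{\min}(\mm{P})}}, \qquad \mu_{V} \;=\; \tfrac{\epsilon_{\omega}\,\lambda_{\min}(\mm{Q})}{2\,\lambda_{\max}(\mm{P})}.
\end{equation}
I do not foresee a genuine obstacle: all the real work was done in Theorem~\ref{thm:BIBO/S stability}. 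The only conceptual step that must be handled carefully is the decomposition~\eqref{eq:error dynamics plan}, which makes the Hurwitz matrix $\mm{A}$ appear in the error dynamics with the same positive multiplier $\widehat{\omega}(\cdot)\geq\epsilon_{\omega}$ used in the BIBO argument; once this is in place, both assertions follow by essentially identical Lyapunov/Gronwall estimates.
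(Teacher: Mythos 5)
Your proposal is correct and follows essentially the same route as the paper: the same error decomposition $\ddtsmall\mv{e}_x=\widehat{\omega}\mm{A}\mv{e}_x+(\omega-\widehat{\omega})\J\mv{x}$, the same Lyapunov function $V=\mv{e}_x^\top\mm{P}\mv{e}_x$ with the Lyapunov equation for the Hurwitz matrix $\mm{A}$, Young's inequality, and the differential Bellman--Gronwall lemma, with exponential decay on $\Iss$ obtained by setting the perturbation to zero. The only cosmetic differences are that you make the boundedness of $\mv{x}$ explicit via skew-symmetry of $\J$ (the paper merely asserts it) and that the perturbation term is not premultiplied by $\widehat{\omega}$, so the BIBO computation is not quite ``verbatim'' --- the paper splits a $\sqrt{\widehat{\omega}}$ factor between the two terms of Young's inequality to absorb this, which your constants would need to reflect.
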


\begin{proof}
Note that, for any interval $\mathbb{I}_i$, combining~\eqref{eq:observer_sogi - appendix} and~\eqref{eq:signal generation by IM} yields
\begin{equation}
\forall \, t \in \mathbb{I}_i\colon \quad \ddtsmall \underbrace{\begin{pmatrix} \widehat{\mv{x}}(t) \\ \mv{x}(t) \end{pmatrix}}_{\in \R^{4n}}
   =  \begin{bmatrix}
	\widehat{\omega}(t)\mm{A} & \widehat{\omega}(t)\mv{l}\mv{c}^\top \\
	\mm{O}_{2n \times 2n} & \omega(t) \J
    \end{bmatrix}\begin{pmatrix} \widehat{\mv{x}}(t) \\ \mv{x}(t) \end{pmatrix}, \qquad \begin{pmatrix} \widehat{\mv{x}}(t_i) \\ \mv{x}(t_i) \end{pmatrix} =\begin{pmatrix} \widehat{\mv{x}}_{i,0} \\ \mv{x}_{i,0} \end{pmatrix} \in \R^{4n}.
\label{eq:parallelized SOGIs and IM}
\end{equation}
Next, introduce the angular frequency estimation error 
\begin{equation}
 \forall\, t \in \mathbb{I}_i\colon \quad e_{\omega}(t) := \omega(t) - \widehat{\omega}(t) \quad \Longleftrightarrow \quad \omega(t) = e_{\omega}(t) + \widehat{\omega}(t),
\label{eq:angular frequency error}
\end{equation}
and evaluate the time derivative of the estimation error vector as follows
\begin{eqnarray}
\forall \, t \in \mathbb{I}_i\colon \quad \ddtsmall \underbrace{\big(\mv{x}(t) - \widehat{\mv{x}}(t)\big)}_{\stackrel{\eqref{eq:estimation error vector}}{=}\mv{e}_x(t)}
  & = &  \begin{bmatrix}
		-\mm{I}_{2n} & \mm{I}_{2n} \\
   		\end{bmatrix} \ddtsmall\begin{pmatrix} \widehat{\mv{x}}(t) \\ \mv{x}(t) \end{pmatrix} \notag \\
  & \stackrel{\eqref{eq:parallelized SOGIs and IM}}{=} & -\widehat{\omega}(t)\mm{A} \widehat{\mv{x}}(t) - \Big(\widehat{\omega}(t)\mv{l}\mv{c}^\top - \omega(t) \J \Big) \mv{x}(t)  \notag \\
  & \stackrel{\eqref{eq:angular frequency error}}{=} & -\widehat{\omega}(t)\mm{A} \widehat{\mv{x}}(t) - \widehat{\omega}(t)\underbrace{\big(\mv{l}\mv{c}^\top - \J \big)}_{\stackrel{\eqref{eq:observer_sogi - appendix}}{=}-\mm{A}} \mv{x}(t) +   e_{\omega}(t) \J \mv{x}(t) \notag \\
  & = & \widehat{\omega}(t)\mm{A} \mv{e}_x(t) + e_{\omega}(t) \J \mv{x}(t)  \label{eq:dynamics of estimation error}.
\label{eq:dynamics of estimation error with omega in front - alternative} 
\end{eqnarray}
Now, the time derivative of the Lyapunov-like function $V(\mv{e}_x(\cdot))=\mv{e}_x(\cdot)^\top \mm{P}\mv{e}_x(\cdot)$ (with $\mm{P}$ as introduced in \eqref{eq:Lyapunov identity}) is given for all $t \in \mathbb{I}_i = [t_i,\, t_{i+1})$, along the solution of~\eqref{eq:dynamics of estimation error}, as follows
\begin{eqnarray}
\ddtsmall V\big(\mv{e}_x(t)\big) & = & \ddtsmall \mv{e}_x(t)^\top\mm{P} \mv{e}_x(t) +  \mv{e}_x(t)^\top\mm{P} \ddtsmall \mv{e}_x(t) \notag \\
	  & \stackrel{\eqref{eq:dynamics of estimation error}}{=} & \widehat{\omega}(t)  \mv{e}_x(t)^\top\big(\mm{A}^\top \mm{P}  + \mm{P} \mm{A} \big)\mv{e}_x(t) + e_{\omega}(t) \big(  \mv{e}_x(t)^\top  \mm{P} \J \mv{x}(t) + \mv{x}(t)^\top \J^\top \mm{P} \mv{e}_x(t)  \big)   \notag \\
	  & = & \widehat{\omega}(t)  \mv{e}_x(t)^\top\big(\mm{A}^\top \mm{P}  + \mm{P} \mm{A} \big)\mv{e}_x(t) + 2e_{\omega}(t)  \mv{e}_x(t)^\top \mm{P}  \J \mv{x}(t)   \notag \\
	  & \stackrel{\eqref{eq:Lyapunov identity}}{=} &   - \widehat{\omega}(t)\mv{e}_x(t)^\top\mm{Q} \mv{e}_x(t) + 2e_{\omega}(t) \mv{e}_x(t)^\top \mm{P}  \J \mv{x}(t)  \notag \\
	  & \stackrel{\eqref{eq:Inequalities for Lyapunov function}}{\leq} &   - \widehat{\omega}(t) \lambda_{\min}(\mm{Q}) \norm{\mv{e}_x(t)}^2 + 2  \underbrace{\sqrt{\widehat{\omega}(t)}\norm{\mv{e}_x(t)}}_{=:a} \underbrace{|e_{\omega}(t)|\norm{\mm{P}} \norm{\J} \esnorm{\mv{x}}\tfrac{1}{\sqrt{\widehat{\omega}(t)}}}_{=:b}   \notag \\
	  & \stackrel{\eqref{eq:2ab<=...}}{\leq} &    - \widehat{\omega}(t) \big(\underbrace{\lambda_{\min}(\mm{Q}) - \tfrac{1}{m}}_{\exists m \geq 1 \text{ s.t. } (\cdot) \geq \epsilon^\prime_m > 0}\big) \norm{\mv{e}_x(t)}^2 + e_{\omega}(t)^2\underbrace{\tfrac{m}{\epsilon_{\omega}}\norm{\mm{P}}^2\norm{\J}^2 \esnorm{\mv{x}}^2}_{=:c_m^\prime < \infty}  \notag \\
	  & \stackrel{\eqref{eq:Inequalities for Lyapunov function}}{\leq} &  - \underbrace{\tfrac{\epsilon_m^\prime \epsilon_{\omega}}{\lambda_{\max}(\mm{P})}}_{=:\mu_V>0} V\big(\mv{e}_x(t)\big) + e_{\omega}(t)^2 c_m^\prime    \notag \\
	  \Longrightarrow V\big(\mv{e}_x(t)\big) & \leq & V\big(\mv{e}_x(t_i)\big)\e^{-\mu_V(t-t_i)} + c_m^\prime   \int_{t_i}^t e_{\omega}(\tau)^2 \e^{-\mu_V(t-\tau)} \dx{\tau}, 
\label{eq:Lyapunov analysis for estimation error}
\end{eqnarray}
where,  in the last step, the Bellman-Gronwall Lemma in its differential form (see Lemma~5.50 and Example~5.51 in~\cite{2017_Hackl_Non-identifierbasedadaptivecontrolinmechatronics:TheoryandApplication}) was used again. Note that $e_{\omega}(\cdot) \in \Linf(\Rzp;\R)$, since $\widehat{\omega}(\cdot), \, \omega(\cdot) \in \Linf(\Rzp;[\epsilon_{\omega},\infty))$ on each interval $\mathbb{I}_i$. Hence,
\begin{eqnarray}
 \forall\, t \in \mathbb{I}_i\colon \quad \norm{\mv{e}_x(t)}^2 
  & \stackrel{\eqref{eq:Inequalities for Lyapunov function},\eqref{eq:Lyapunov analysis for estimation error}}{\leq} & \tfrac{1}{\lambda_{\min}(\mm{P})} \Big[ V\big(\mv{e}_x(t_i)\big)\e^{-\mu_V(t-t_i)} + c_m^\prime   \int_{t_i}^t e_{\omega}(\tau)^2 \e^{-\mu_V(t-\tau)} \dx{\tau}\Big], \notag \\
 & \stackrel{\eqref{eq:Inequalities for Lyapunov function}}{\leq} & \tfrac{\lambda_{\max}(\mm{P})}{\lambda_{\min}(\mm{P})}\norm{\mv{e}_x(t_i)}\e^{-\mu_V(t-t_i)} + \tfrac{c_m^\prime}{\lambda_{\min}(\mm{P})}   \int_{t_i}^t e_{\omega}(\tau)^2 \e^{-\mu_V(t-\tau)} \dx{\tau},
  \label{eq:error norm exponentially decaying}
\end{eqnarray}
and, clearly, for all $t \in \Iss \subset \mathbb{I}_i$ where $e_{\omega}(t)=0$, the estimation error is exponentially decaying. This completes the proof.
\end{proof}
\begin{remark}[Exponential stability and input-to-state stability]
\label{rem:Exponential stability and input-to-state stability}
Note that, if $e_{\omega}(t)=0$ for all $t \geq t_i$ for some $t_i\geq 0$ and $i\in \N_0$, \eqref{eq:error norm exponentially decaying} gives exponential stability  and, hence, asymptotic estmation (tracking), i.e.~$\lim_{t \to \infty}\norm{\mv{e}_x(t)} = \mv{0}_{2n}$ which implies $\lim_{t \to \infty}|y(t)-\widehat{y}(t)| = 0$. Moreover, note that~\eqref{eq:error norm exponentially decaying} directly implies input-to-state stability~(see e.g.~Part ``Input to State Satbility:~Basic Concepts and Results'' by E.D.~Sontag in~\cite{2008_Agrachev_NonlinearandOptimalControlTheory}). 
\end{remark}

\subsection{Pole placement algorithm for the parallelized mSOGIs}
\label{sec_app_pole_placement}

Before the main results of this section can be presented, a preliminary observation has to be made. Consider the matrix
\begin{equation}
 \forall\, n \in \mathbb{N},\, \forall\, z_1, \ldots, z_n \in \mathbb{C} \colon \SMAT^{-1} := \begin{bmatrix} \Rsub{1}^{-1} & \Rsub{2}^{-1} & \ldots & \Rsub{n}^{-1} \\ \sum\limits_{\substack{i = 1 \\ i \neq 1}}^{n}z_i^2\Rsub{1}^{-1} & \sum\limits_{\substack{i = 1 \\ i \neq 2}}^{n}z_i^2\Rsub{2}^{-1} & \cdots & \sum\limits_{\substack{i = 1 \\ i \neq n}}^{n}z_i^2\Rsub{n}^{-1} \\ \vdots & \vdots & \ddots & \vdots \\ \prod\limits_{\substack{i = 1 \\ i \neq 1}}^{n}z_i^2\Rsub{1}^{-1} & \prod\limits_{\substack{i = 1 \\ i \neq 2}}^{n}z_i^2\Rsub{2}^{-1} & \cdots & \prod\limits_{\substack{i = 1 \\ i \neq n}}^{n}z_i^2\Rsub{n}^{-1} \end{bmatrix}, \quad \Rsub{i}^{-1} = \begin{bmatrix} 1 & 0 \\ 0 & - z_i \end{bmatrix}.
 \label{eq:inverse matrix S^-1}
\end{equation}
Its inverse is given by
\begin{equation}
\SMAT = \begin{bmatrix} \tfrac{z_1^{2\br{n - 1}}}{\prod\limits_{\substack{i = 1 \\ i \neq 1}}^{n}\br{z_1^2 - z_i^2}}\Rsub{1} & - \tfrac{z_1^{2\br{n - 2}}}{\prod\limits_{\substack{i = 1 \\ i \neq 1}}^{n}\br{z_1^2 - z_i^2}}\Rsub{1} & \cdots & \tfrac{\br{- 1}^{n + 1}}{\prod\limits_{\substack{i = 1 \\ i \neq 1}}^{n}\br{z_1^2 - z_i^2}}\Rsub{1} \\ \tfrac{z_2^{2\br{n - 1}}}{\prod\limits_{\substack{i = 1 \\ i \neq 2}}^{n}\br{z_2^2 - z_i^2}}\Rsub{2} & - \tfrac{z_2^{2\br{n - 2}}}{\prod\limits_{\substack{i = 1 \\ i \neq 2}}^{n}\br{z_2^2 - z_i^2}}\Rsub{2} & \cdots & \tfrac{\br{- 1}^{n + 1}}{\prod\limits_{\substack{i = 1 \\ i \neq 2}}^{n}\br{z_2^2 - z_i^2}}\Rsub{2} \\ \vdots & \vdots & \ddots & \vdots \\ \tfrac{z_n^{2\br{n - 1}}}{\prod\limits_{\substack{i = 1 \\ i \neq n}}^{n}\br{z_n^2 - z_i^2}}\Rsub{n} & - \tfrac{z_n^{2\br{n - 2}}}{\prod\limits_{\substack{i = 1 \\ i \neq n}}^{n}\br{z_n^2 - z_i^2}}\Rsub{n} & \cdots & \tfrac{\br{- 1}^{n + 1}}{\prod\limits_{\substack{i = 1 \\ i \neq n}}^{n}\br{z_n^2 - z_i^2}}\Rsub{n} \end{bmatrix},
\label{eq:matrix S fully given}
\end{equation}
since the product of the $c$-th column of $\SMAT^{-1}$ and the $r$-th row of $\SMAT$ yields
\begin{eqnarray*}
&& \tfrac{z_r^{2\br{n - 1}}}{\prod\limits_{\tiny{\substack{j = 1 \\ j \neq r}}}^{n}\br{z_r^2 - z_j^2}}\Rsub{c}\Rsub{r}^{-1} - \sum\limits_{\substack{i = 1 \\ i \neq c}}^{n}z_i^2\tfrac{z_r^{2\br{n - 2}}}{\prod\limits_{\tiny{\substack{j = 1 \\ j \neq r}}}^{n}\br{z_r^2 - z_j^2}}\Rsub{c}\Rsub{r}^{-1} + \ldots + \prod\limits_{\substack{i = 1 \\ i \neq c}}^{n}z_i^2\tfrac{\br{- 1}^{n + 1}}{\prod\limits_{\tiny{\substack{j = 1 \\ j \neq r}}}^{n}\br{z_r^2 - z_j^2}}\Rsub{c}\Rsub{r}^{-1} \\
&=& \br{z_r^{2\br{n - 1}} - z_r^{2\br{n - 2}}\sum\limits_{\substack{i = 1 \\ i \neq c}}^{n}z_i^2 + \ldots + \br{- 1}^{n + 1}\prod\limits_{\substack{i = 1 \\ i \neq c}}^{n}z_i^2}\tfrac{1}{\prod\limits_{\tiny{\substack{j = 1 \\ j \neq r}}}^{n}\br{z_r^2 - z_j^2}}\Rsub{c}\Rsub{r}^{-1} = \begin{cases} \ve{0}_{2\times2}, & c \neq r \\ \ve{I}_{2}, & c = r. \end{cases}
\end{eqnarray*}
Now, the main result can be stated.

\begin{proposition}[Pole placement]
Consider the matrix $\AMAT := \JMAT - \mv{l}\cyvec^{\top}$ with $\JMAT$ as in~\eqref{eq:definitions of x, J and Jbar - appendix} and $\cyvec$ as in~\eqref{eq:state space dynamics of overall IM - appendix}. If and only if the feedback vector $\mv{l}$ is chosen as 
\begin{equation}
	\mv{l} = \SMAT \,\widetilde{\mv{p}}_{\mm{A}}^*, 
\label{eq:feedback gain vector l of parallelized mSOGIs - appendix}
\end{equation}
then the desired characteristic polynomial
\begin{equation}
\chi_{\mm{A}}^*\br{s} := \prod\limits_{i = 1}^{2n}\br{s - p_i^*}
\label{eq:des_char_poly - appendix}
\end{equation}
and characteristic polynomial
\begin{equation}
\chi_{\AMAT}\br{s} = \prod\limits_{i = 1}^{n}\br{s^2 + \nu_i^2} - \sum\limits_{i = 1}^{n}\ggainp{i}\nu_i^2\prod\limits_{\substack{k = 1 \\ k \neq i}}^{n}\br{s^2 + \nu_{k}^2} + s\sum\limits_{i = 1}^{n}\kgainp{i}\nu_i\prod\limits_{\substack{k = 1 \\ k \neq i}}^{n}\br{s^2 + \nu_k^2}
\label{eq:char_poly - appendix}
\end{equation}
have identical coefficients and, hence, $\AMAT = \JMAT - \mv{l}\cyvec^{\top}$ is a Hurwitz matrix with eigenvalues $p_i^* \in \Cneg$, $i \in \bc{1,\ldots,2n}$, as specified in~\eqref{eq:des_char_poly - appendix}. 
\end{proposition}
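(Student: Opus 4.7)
The plan is to compute $\chi_{\AMAT}(s)$ in closed form, split it into its parts of even and odd degree in $s$, and recognize each part as a Lagrange-type interpolation problem whose unique solution is exactly \eqref{eq:feedback gain vector l of parallelized mSOGIs - appendix}. Because $\JMAT=\blockdiag\br{\Jbar,\nu_2\Jbar,\ldots,\nu_n\Jbar}$ is block-diagonal and $\mv{l}\cyvec^{\top}$ is a rank-one update, I would first invoke the matrix determinant lemma to obtain $\chi_{\AMAT}(s)=\det\bs{s\In{2n}-\JMAT}\br{1+\cyvec^{\top}\br{s\In{2n}-\JMAT}^{-1}\mv{l}}$. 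The first factor equals $P(s):=\prod_{i=1}^{n}\br{s^2+\nu_i^2}$, and the $i$-th $2\times 2$ diagonal block of $\br{s\In{2n}-\JMAT}^{-1}$ is the standard inverse with determinant $s^2+\nu_i^2$. With $\mv{l}_i=\nu_i\br{\kgainp{i},\ggainp{i}}^{\top}$ and $\mv{c}_i=(1,0)^{\top}$, a direct computation gives $\cyvec^{\top}\br{s\In{2n}-\JMAT}^{-1}\mv{l}=\sum_i\br{s\nu_i\kgainp{i}-\nu_i^2\ggainp{i}}/\br{s^2+\nu_i^2}$, and clearing denominators recovers \eqref{eq:char_poly - appendix} exactly.

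The key structural observation is that $P(s)$ and every $Q_i(s):=\prod_{k\neq i}\br{s^2+\nu_k^2}$ are polynomials in $s^2$, so the odd-degree coefficients of $\chi_{\AMAT}(s)$ depend only on the gains $\kgainp{i}$, while the even-degree coefficients depend only on the gains $\ggainp{i}$. Decomposing the desired polynomial as $\chi^{*}_{\AMAT}(s)=\chi^{*}_{\text{even}}(s)+\chi^{*}_{\text{odd}}(s)$, coefficient matching thus decouples into the two independent polynomial identities
\begin{equation*}
s\sum_{i=1}^{n}\kgainp{i}\nu_i Q_i(s)=\chi^{*}_{\text{odd}}(s),\qquad \sum_{i=1}^{n}\ggainp{i}\nu_i^2 Q_i(s)=P(s)-\chi^{*}_{\text{even}}(s).
\end{equation*}
Evaluating both identities at $s^2=-\nu_j^2$ for each $j\in\{1,\ldots,n\}$ exploits the Lagrange-basis property $Q_i(s)\big|_{s^2=-\nu_j^2}=\delta_{ij}\prod_{k\neq j}\br{\nu_k^2-\nu_j^2}$, which is nonzero precisely because $\nu_i\neq\nu_j$ for $i\neq j$. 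Each sum then collapses to its $i=j$ term and yields $\kgainp{j}$ and $\ggainp{j}$ in closed form.

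To recast this solution as $\mv{l}=\SMAT\,\widetilde{\mv{p}}^{*}_{\AMAT}$, I would next verify the block-inverse identity relating \eqref{eq:inverse matrix S^-1} and \eqref{eq:matrix S fully given}; the verification reduces to the classical Lagrange identity $\sum_i z_i^{2(n-1-c)}/\prod_{j\neq i}\br{z_i^2-z_j^2}=\delta_{c,0}$ (for $c=0,\ldots,n-1$) propagated blockwise through the diagonal $\Rsub{i}$ factors. Because $\SMAT^{-1}$ is then invertible, the two decoupled linear systems above can be stacked into the single equation $\SMAT^{-1}\mv{l}=\widetilde{\mv{p}}^{*}_{\AMAT}$: the diagonal $\Rsub{i}$ blocks absorb exactly the factors needed to convert the Lagrange expressions for $\kgainp{i}$ and $\ggainp{i}$ into the entries $\nu_i\kgainp{i}$ and $\nu_i\ggainp{i}$ of $\mv{l}_i$, and the entries of $\widetilde{\mv{p}}^{*}_{\AMAT}$ in \eqref{eq:coefficient vector p_prime_des of desired polynomial} are, up to the alternating signs produced by expanding $\prod_i\br{s-p_i^{*}}$, exactly the coefficients of $\chi^{*}_{\AMAT}(s)-P(s)$. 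Multiplying through by $\SMAT$ yields \eqref{eq:feedback gain vector l of parallelized mSOGIs - appendix}, and the ``only if'' direction is immediate: invertibility of $\SMAT^{-1}$ means the $2n$ scalar equations in $2n$ scalar unknowns admit only one solution.

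The conceptual core---pole placement reduces to two decoupled Lagrange interpolation problems in the $\nu_i^2$---is quick. The main obstacle I anticipate is purely bookkeeping: correctly identifying which coefficients of $\chi^{*}_{\AMAT}(s)-P(s)$ populate which entries of $\widetilde{\mv{p}}^{*}_{\AMAT}$ (as written in \eqref{eq:coefficient vector p_prime_des of desired polynomial}, with its alternating nested sums and the subtraction of the elementary symmetric functions of the $\nu_i^2$), and tracking all signs through the $\Rsub{i}$ and $\Ssub{c,r}$ factors without slip.
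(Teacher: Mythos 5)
Your proof is correct, and its backbone is the same as the paper's: match the coefficients of $\chi_{\AMAT}$ against those of $\chi_{\mm{A}}^*$, recognize the resulting $2n\times 2n$ linear system as having the Lagrange-structured matrix $\SMAT^{-1}$ of \eqref{eq:inverse matrix S^-1}, and invert it explicitly; the ``if and only if'' then follows from uniqueness of the solution. Where you genuinely deviate is in organization, in two ways that each buy something. First, you actually derive \eqref{eq:char_poly - appendix} via the matrix determinant lemma, $\chi_{\AMAT}(s)=\det\bs{s\In{2n}-\JMAT}\br{1+\cyvec^{\top}\br{s\In{2n}-\JMAT}^{-1}\mv{l}}$ with $\cyvec_i^{\top}\br{s\In{2}-\nu_i\Jbar}^{-1}\mv{l}_i=\br{s\nu_i\kgainp{i}-\nu_i^2\ggainp{i}}/\br{s^2+\nu_i^2}$; the paper simply quotes this characteristic polynomial without proof, so your argument is more self-contained. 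Second, instead of verifying entrywise that the explicit matrix \eqref{eq:matrix S fully given} inverts \eqref{eq:inverse matrix S^-1} (the paper's route, which rests on the identity $\prod_{j\neq c}\br{z_r^2-z_j^2}=0$ for $c\neq r$), you split $\chi_{\AMAT}$ into even and odd parts --- so the $\kgainp{i}$ and $\ggainp{i}$ decouple --- and evaluate the two resulting polynomial identities at $s^2=-\nu_j^2$, collapsing each sum to a single term. This is the dual face of the same Lagrange-basis fact, but it yields $\kgainp{j}$ and $\ggainp{j}$ in closed form directly and makes the role of the distinctness hypothesis $\nu_i\neq\nu_j$ transparent; the cost, as you note yourself, is the residual bookkeeping needed to repackage those closed forms as $\mv{l}=\SMAT\,\widetilde{\mv{p}}_{\mm{A}}^*$ with $\widetilde{\mv{p}}_{\mm{A}}^*$ as in \eqref{eq:coefficient vector p_prime_des of desired polynomial}, which is exactly the step the paper's formulation handles for free by working with the stacked coefficient vector from the start. (Minor caveat on the odd identity: evaluation at $s^2=-\nu_j^2$ means $s=\pm\jmath\nu_j$, so you should first divide $\chi^{*}_{\text{odd}}(s)$ by $s$ --- legitimate since it is an odd polynomial --- before substituting; this is cosmetic and does not affect correctness.)
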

\begin{proof}
For arbitrary $k_i$ and $g_i$ in the feedback gain vector $\mv{l}$ as in~\eqref{eq:feedback gain vector l of parallelized mSOGIs - appendix}, recall the characteristic polynomial $\chi_{\mm{A}}$ given in \eqref{eq:char_poly - appendix} and collect its coefficients in the following coefficient vector 
\begin{eqnarray}
\cpol \!&=&\! \br{\!\sum\limits_{i = 1}^{n}\kgainp{i}\nu_i, \sum\limits_{i = 1}^{n}\nu_i^2 \!\!-\! \ggainp{i}\nu_i^2, \sum\limits_{i = 1}^{n}\kgainp{i}\nu_i\sum\limits_{\substack{j = 1 \\ j \neq i}}^{n}\nu_j^2, \sum\limits_{i = 1}^{n}\nu_i^2\sum\limits_{j = i + 1}^{n}\nu_j^2 \!\!-\! \ggainp{i}\nu_i^2\sum\limits_{\substack{j = 1 \\ j \neq i}}^{n}\nu_j^2, \ldots, \sum\limits_{i = 1}^{n}\kgainp{i}\nu_i\prod\limits_{\substack{j = 1 \\ j \neq i}}^{n}\nu_j^2, \prod\limits_{i = 1}^{n}\nu_i^2 \!\!-\!\! \sum\limits_{i = 1}^{n}\ggainp{i}\prod\limits_{j = 1}^{n}\nu_j^2\!}^{\!\!\!\top}\!\!\! \notag \\
&=& \br{0, \sum\limits_{i = 1}^{n}\nu_i^2, 0, \sum\limits_{i = 1}^{n}\nu_i^2\sum\limits_{j = i + 1}^{n}\nu_j^2, \ldots, 0, \prod\limits_{i = 1}^{n}\nu_i^2}^{\top} \notag \\
\label{eq:coeff_vector}
&& + \br{\sum\limits_{i = 1}^{n}\kgainp{i}\nu_i, - \sum\limits_{i = 1}^{n}\ggainp{i}\nu_i^2, \sum\limits_{i = 1}^{n}\kgainp{i}\nu_i\sum\limits_{\substack{j = 1 \\ j \neq i}}^{n}\nu_j^2, - \sum\limits_{i = 1}^{n}\ggainp{i}\nu_i^2\sum\limits_{\substack{j = 1 \\ j \neq i}}^{n}\nu_j^2, \ldots, \sum\limits_{i = 1}^{n}\kgainp{i}\nu_i\prod\limits_{\substack{j = 1 \\ j \neq i}}^{n}\nu_j^2, - \sum\limits_{i = 1}^{n}\ggainp{i}\prod\limits_{j = 1}^{n}\nu_j^2}^{\top}.
\end{eqnarray}
A comparison with the desired polynomial in \eqref{eq:des_char_poly - appendix}, having the coefficient vector
\begin{equation*}
\mv{p}_{\mm{A}}^* := \br{- \sum\limits_{i = 1}^{2n}p_i^*,\;\; \sum\limits_{i = 1}^{2n}p_i^*\sum\limits_{j = i + 1}^{2n}p_j^*,\;\; - \sum\limits_{i = 1}^{2n}p_i^*\sum\limits_{j = i + 1}^{2n}p_j^*\sum\limits_{k = j + 1}^{2n}p_k^*,\;\; \ldots,\;\; \prod\limits_{i = 1}^{2n}p_i^*}^{\top},
\end{equation*}
leads to the linear system of equations
\begin{eqnarray}
\overset{\eqref{eq:coeff_vector}}{\Longrightarrow} \mv{p}_{\mm{A}}^* - \begin{pmatrix} 0 \\ \sum\limits_{i = 1}^{n}\nu_i^2 \\ \vdots \\ 0 \\ \prod\limits_{i = 1}^{n}\nu_i^2 \end{pmatrix} 
&=& \underbrace{\begin{bmatrix} \Rsub{1}^{-1} & \Rsub{2}^{-1} & \ldots & \Rsub{n}^{-1} \\ \sum\limits_{\substack{i = 1 \\ i \neq 1}}^{n}\nu_i^2\Rsub{1}^{-1} & \sum\limits_{\substack{i = 1 \\ i \neq 2}}^{n}\nu_i^2\Rsub{2}^{-1} & \cdots & \sum\limits_{\substack{i = 1 \\ i \neq n}}^{n}\nu_i^2\Rsub{n}^{-1} \\ \vdots & \vdots & \ddots & \vdots \\ \prod\limits_{\substack{i = 1 \\ i \neq 1}}^{n}\nu_i^2\Rsub{1}^{-1} & \prod\limits_{\substack{i = 1 \\ i \neq 2}}^{n}\nu_i^2\Rsub{2}^{-1} & \cdots & \prod\limits_{\substack{i = 1 \\ i \neq n}}^{n}\nu_i^2\Rsub{n}^{-1} \end{bmatrix}}_{\stackrel{\eqref{eq:inverse matrix S^-1}}{=}\mm{S}^{-1}}\mv{l}, \quad \text{with } \Rsub{i}^{-1} = \begin{bmatrix} 1 & 0 \\ 0 & - \nu_i \end{bmatrix} .
\end{eqnarray}
Inserting $\mv{l}$ as in~\eqref{eq:feedback gain vector l of parallelized mSOGIs - appendix} and invoking the preliminary result in~\eqref{eq:matrix S fully given}, one indeed obtains $\mv{p}_{\mm{A}}^* = \cpol$. Or in other words, the feedback gain vector to achieve pole placement is given by
\begin{eqnarray}
\overset{\eqref{eq:matrix S fully given}}{\Longrightarrow} \mv{l} &=& \underbrace{\begin{bmatrix} \tfrac{\nu_1^{2\br{n - 1}}}{\prod\limits_{\substack{i = 1 \\ i \neq 1}}^{n}\br{\nu_1^2 - \nu_i^2}}\Rsub{1} & - \tfrac{\nu_1^{2\br{n - 2}}}{\prod\limits_{\substack{i = 1 \\ i \neq 1}}^{n}\br{\nu_1^2 - \nu_i^2}}\Rsub{1} & \cdots & \tfrac{\br{- 1}^{n + 1}}{\prod\limits_{\substack{i = 1 \\ i \neq 1}}^{n}\br{\nu_1^2 - \nu_i^2}}\Rsub{1} \\ \tfrac{\nu_2^{2\br{n - 1}}}{\prod\limits_{\substack{i = 1 \\ i \neq 2}}^{n}\br{\nu_2^2 - \nu_i^2}}\Rsub{2} & - \tfrac{\nu_2^{2\br{n - 2}}}{\prod\limits_{\substack{i = 1 \\ i \neq 2}}^{n}\br{\nu_2^2 - \nu_i^2}}\Rsub{2} & \cdots & \tfrac{\br{- 1}^{n + 1}}{\prod\limits_{\substack{i = 1 \\ i \neq 2}}^{n}\br{\nu_2^2 - \nu_i^2}}\Rsub{2} \\ \vdots & \vdots & \ddots & \vdots \\ \tfrac{\nu_n^{2\br{n - 1}}}{\prod\limits_{\substack{i = 1 \\ i \neq n}}^{n}\br{\nu_n^2 - \nu_i^2}}\Rsub{n} & - \tfrac{\nu_n^{2\br{n - 2}}}{\prod\limits_{\substack{i = 1 \\ i \neq n}}^{n}\br{\nu_n^2 - \nu_i^2}}\Rsub{n} & \cdots & \tfrac{\br{- 1}^{n + 1}}{\prod\limits_{\substack{i = 1 \\ i \neq n}}^{n}\br{\nu_n^2 - \nu_i^2}}\Rsub{n} \end{bmatrix}}_{=:\, \SMAT}\underbrace{\br{\mv{p}_{\mm{A}}^* - \begin{pmatrix} 0 \\ \sum\limits_{i = 1}^{n}\nu_i^2 \\ \vdots \\ 0 \\ \prod\limits_{i = 1}^{n}\nu_i^2 \end{pmatrix}}}_{=:\, \widetilde{\mv{p}}_{\mm{A}}^*}.
\end{eqnarray}
Clearly, if and only if $\mv{p}_{\mm{A}}^* = \cpol$ holds, the eigenvalues of $\AMAT=\JMAT - \mv{l}\cyvec^{\top}$ are given by $p_i^* \in \Cneg$, $i \in \bc{1,\ldots,2n}$, as specified in~\eqref{eq:des_char_poly - appendix} and $\AMAT$ is a Hurwitz matrix. This completes the proof.
\end{proof}

\subsection{Generalization of the adaption law of the mFLL for the parallelized mSOGIs}\label{sec:generalization-of-the-adaption-law-of-the-mfll-for-the-parallelized-msogis}

\subsubsection{Preliminaries}

To ease the understanding of the following derivations, preliminary calculations are introduced. 

First, consider a stable transfer function in the frequency domain, given by
$$
\mathcal{G}\br{\jmath\omega} := \tfrac{n\br{\omega}}{d\br{\omega}} = \tfrac{\Re\br{n\br{\omega}} + \jmath\Im\br{n\br{\omega}}}{\Re\br{d\br{\omega}} + \jmath\Im\br{d\br{\omega}}} = \tfrac{\Re\br{n\br{\omega}}\Re\br{d\br{\omega}} + \Im\br{n\br{\omega}}\Im\br{d\br{\omega}}}{\Re^2\br{d\br{\omega}} + \Im^2\br{d\br{\omega}}} + \jmath\tfrac{\Re\br{d\br{\omega}}\Im\br{n\br{\omega}} - \Re\br{n\br{\omega}}\Im\br{d\br{\omega}}}{\Re^2\br{d\br{\omega}} + \Im^2\br{d\br{\omega}}}.
$$
Its amplitude and phase responses are given by
\begin{equation}
\label{eq:amp_pha_resp}
A_{\mathcal{G}}\br{\omega} = \sqrt{\tfrac{\Re^2\br{n\br{\omega}} + \Im^2\br{n\br{\omega}}}{\Re^2\br{d\br{\omega}} + \Im^2\br{d\br{\omega}}}}
\quad \text{and} \quad 
\Phi_{\mathcal{G}}\br{\omega} = \artan{\tfrac{\Re\br{d\br{\omega}}\Im\br{n\br{\omega}} - \Re\br{n\br{\omega}}\Im\br{d\br{\omega}}}{\Re\br{n\br{\omega}}\Re\br{d\br{\omega}} + \Im\br{n\br{\omega}}\Im\br{d\br{\omega}}}}
\quad \text{, respectively}.
\end{equation}
Moreover, by invoking the following trigonometric identities~\cite[Sect.~4.3--4.4]{1964_Abramowitz_HandbookofMathematicalFunctionsWithFormulasGraphsandMathematicalTables}
\begin{align}
\label{eq:trig_iden}
\left.\begin{array}{rclrcl}
\sine{\alpha \pm \beta} &=& \sine{\alpha}\cosine{\beta} \pm \cosine{\alpha}\sine{\beta}, & \cosine{\alpha \pm \beta} &=& \cosine{\alpha}\cosine{\beta} \mp \sine{\alpha}\sine{\beta}, \\
\sine{\artan{\frac{y}{x}}} &=& \frac{y}{\sqrt{x^2 + y^2}} \qqquad \text{and} & \cosine{\artan{\frac{y}{x}}} &=& \frac{x}{\sqrt{x^2 + y^2}},
\end{array}\right\}
\end{align}
the following expressions are obtained 
\begin{equation}
\label{eq:a_cos_sin_phi}
\left.\begin{array}{rcl}
A_{\mathcal{G}}\br{\omega}\cosine{\Phi_{\mathcal{G}}\br{\omega}} \!\!\!\!\!\!&\overset{\eqref{eq:amp_pha_resp}, \eqref{eq:trig_iden}}{=}&\!\!\!\!\!\! \vspace{0.1cm}\sqrt{\tfrac{\Re^2\br{n\br{\omega}} + \Im^2\br{n\br{\omega}}}{\Re^2\br{d\br{\omega}} + \Im^2\br{d\br{\omega}}}}\tfrac{\Re\br{n\br{\omega}}\Re\br{d\br{\omega}} + \Im\br{n\br{\omega}}\Im\br{d\br{\omega}}}{\sqrt{\br{\Re\br{n\br{\omega}}\Re\br{d\br{\omega}} + \Im\br{n\br{\omega}}\Im\br{d\br{\omega}}}^2 + \br{\Re\br{d\br{\omega}}\Im\br{n\br{\omega}} - \Re\br{n\br{\omega}}\Im\br{d\br{\omega}}}^2}} \\
&=&\!\!\!\!\!\! \vspace{0.1cm}\tfrac{\Re\br{n\br{\omega}}\Re\br{d\br{\omega}} + \Im\br{n\br{\omega}}\Im\br{d\br{\omega}}}{\Re^2\br{d\br{\omega}} + \Im^2\br{d\br{\omega}}}
\quad \text{and} \\
A_{\mathcal{G}}\br{\omega}\sine{\Phi_{\mathcal{G}}\br{\omega}} \!\!\!\!\!\!&\overset{\eqref{eq:amp_pha_resp}, \eqref{eq:trig_iden}}{=}&\!\!\!\!\!\! \vspace{0.1cm}\sqrt{\tfrac{\Re^2\br{n\br{\omega}} + \Im^2\br{n\br{\omega}}}{\Re^2\br{d\br{\omega}} + \Im^2\br{d\br{\omega}}}}\tfrac{\Re\br{d\br{\omega}}\Im\br{n\br{\omega}} - \Re\br{n\br{\omega}}\Im\br{d\br{\omega}}}{\sqrt{\br{\Re\br{n\br{\omega}}\Re\br{d\br{\omega}} + \Im\br{n\br{\omega}}\Im\br{d\br{\omega}}}^2 + \br{\Re\br{d\br{\omega}}\Im\br{n\br{\omega}} - \Re\br{n\br{\omega}}\Im\br{d\br{\omega}}}^2}} \\
&=&\!\!\!\!\!\! \tfrac{\Re\br{d\br{\omega}}\Im\br{n\br{\omega}} - \Re\br{n\br{\omega}}\Im\br{d\br{\omega}}}{\Re^2\br{d\br{\omega}} + \Im^2\br{d\br{\omega}}}.
\end{array}\right\}
\end{equation}

Second, let $t \in \R$ and $\nu, \omega > 0$ and consider the integral given by
\begin{eqnarray*}
	\int\limits_{t}^{t + \tfrac{2\pi}{\nu\omega}}\cosine{\nu\omega\tau + \varphi_1}\cosine{\nu\omega\tau + \varphi_2}\dx\tau &\overset{\eqref{eq:trig_iden}}{=}& \int\limits_{t}^{t + \tfrac{2\pi}{\nu\omega}}\br{\cos^2\br{\nu\omega\tau}\cosine{\varphi_1}\cosine{\varphi_2} + \sin^2\br{\nu\omega\tau}\sine{\varphi_1}\sine{\varphi_2}}\dx\tau \\
	&&- \int\limits_{t}^{t + \tfrac{2\pi}{\nu\omega}}\sine{\nu\omega\tau}\cosine{\nu\omega\tau}\sine{\varphi_1 + \varphi_2}\dx\tau.
\end{eqnarray*}
According to~\cite[p.~163f]{2000_Rade_SpringersMathematischeFormeln}, it follows that
\begin{equation}
\label{eq:integral_cosine_squared}
\int\limits_{t}^{t + \tfrac{2\pi}{\nu\omega}}\cosine{\nu\omega\tau + \varphi_1}\cosine{\nu\omega\tau + \varphi_2}\dx\tau = \tfrac{\pi}{\nu\omega}\cosine{\varphi_1}\cosine{\varphi_2} + \tfrac{\pi}{\nu\omega}\sine{\varphi_1}\sine{\varphi_2} \overset{\eqref{eq:trig_iden}}{=} \tfrac{\pi}{\nu\omega}\cosine{\varphi_1 - \varphi_2}.
\end{equation}

\subsubsection{Steady-state analysis (amplitude and phase responses) of parallelized mSOGIs}
\label{sec_app_amp_pha_resp}
For constant $\widehat{\omega}>0$ and some $i \in \{1,\dots, n\}$, from Figures \ref{fig:whole_model} and \ref{fig:esogi_and_poles}\,(a) the transfer functions for the $i$-th in-phase signal ($\Ytra{i}\br{s}$), the $i$-th quadrature signal ($\Qtra{i}\br{s}$) and the overall estimation error ($\Eytra\br{s}$) of the closed-loop observer system~\eqref{eq:observer_sogi} are obtained as follows 
\begin{eqnarray*}
\Ytra{i}\br{s} &:=& \tfrac{\yhp{i}\br{s}}{y\br{s}} = \tfrac{\br{\kgainp{i}\nu_i\omegah s - \ggainp{i}\nu_i^2\omegah^2}\prod\limits_{\tiny\substack{k = 1 \\ k \neq i}}^{n}\br{s^2 + \nu_k^2\omegah^2}}{\prod\limits_{k = 1}^{n}\br{s^2 + \nu_k^2\omegah^2} - \sum\limits_{k = 1}^{n}\ggainp{k}\nu_k^2\omegah^2\prod\limits_{\tiny\substack{l = 1 \\ l \neq k}}^{n}\br{s^2 + \nu_l^2\omegah^2} + \sum\limits_{k = 1}^{n}\kgainp{k}\nu_k\omegah s\prod\limits_{\tiny\substack{l = 1 \\ l \neq k}}^{n}\br{s^2 + \nu_l^2\omegah^2}} \\
\Qtra{i}\br{s} &:=& \tfrac{\qhp{i}\br{s}}{y\br{s}} = \tfrac{\br{\ggainp{i}\nu_i\omegah s + \kgainp{i}\nu_i^2\omegah^2}\prod\limits_{\tiny\substack{k = 1 \\ k \neq i}}^{n}\br{s^2 + \nu_k^2\omegah^2}}{\prod\limits_{k = 1}^{n}\br{s^2 + \nu_k^2\omegah^2} - \sum\limits_{k = 1}^{n}\ggainp{k}\nu_k^2\omegah^2\prod\limits_{\tiny\substack{l = 1 \\ l \neq k}}^{n}\br{s^2 + \nu_l^2\omegah^2} + \sum\limits_{k = 1}^{n}\kgainp{k}\nu_k\omegah s\prod\limits_{\tiny\substack{l = 1 \\ l \neq k}}^{n}\br{s^2 + \nu_l^2\omegah^2}} \\
\Eytra\br{s} &:=& \tfrac{\ey\br{s}}{y\br{s}} = \tfrac{\prod\limits_{k = 1}^{n}\br{s^2 + \nu_k^2\omegah^2}}{\prod\limits_{k = 1}^{n}\br{s^2 + \nu_k^2\omegah^2} - \sum\limits_{k = 1}^{n}\ggainp{k}\nu_k^2\omegah^2\prod\limits_{\tiny\substack{l = 1 \\ l \neq k}}^{n}\br{s^2 + \nu_l^2\omegah^2} + \sum\limits_{k = 1}^{n}\kgainp{k}\nu_k\omegah s\prod\limits_{\tiny\substack{l = 1 \\ l \neq k}}^{n}\br{s^2 + \nu_l^2\omegah^2}}.
\end{eqnarray*}
By invoking~\eqref{eq:amp_pha_resp} from the preliminaries above and defining
\begin{equation}
\label{eq:abbreviations}
\xi\br{\omega} := \sum\limits_{k = 1}^{n}\kgainp{k}\nu_k\omegah\omega\prod\limits_{\tiny\substack{l = 1 \\ l \neq k}}^{n}\br{\nu_l^2\omegah^2 - \omega^2} \quad \text{and} \quad \zeta\br{\omega} := \prod\limits_{k = 1}^{n}\br{\nu_k^2\omegah^2 - \omega^2} - \sum\limits_{k = 1}^{n}\ggainp{k}\nu_k^2\omegah^2\prod\limits_{\tiny\substack{l = 1 \\ l \neq k}}^{n}\br{\nu_l^2\omegah^2 - \omega^2},
\end{equation}

their respective amplitude and phase responses can be computed as follows
\begin{eqnarray}
\AYtra{i}\br{\omega} &=& \nu_i\omegah\prod\limits_{\tiny\substack{k = 1 \\ k \neq i}}^{n}\br{\nu_k^2\omegah^2 - \omega^2}\sqrt{\tfrac{\kgainp{i}^2\omega^2 + \ggainp{i}^2\nu_i^2\omegah^2}{\br{\prod\limits_{k = 1}^{n}\br{\nu_k^2\omegah^2 - \omega^2} - \sum\limits_{k = 1}^{n}\ggainp{k}\nu_k^2\omegah^2\prod\limits_{\tiny\substack{l = 1 \\ l \neq k}}^{n}\br{\nu_l^2\omegah^2 - \omega^2}}^2 + \br{\sum\limits_{k = 1}^{n}\kgainp{k}\nu_k\omegah\omega\prod\limits_{\tiny\substack{l = 1 \\ l \neq k}}^{n}\br{\nu_l^2\omegah^2 - \omega^2}}^2}} \notag \\
\label{eq:amp_resp_yi}
&\overset{\eqref{eq:abbreviations}}{=}& \nu_i\omegah\prod\limits_{\tiny\substack{k = 1 \\ k \neq i}}^{n}\br{\nu_k^2\omegah^2 - \omega^2}\sqrt{\tfrac{\kgainp{i}^2\omega^2 + \ggainp{i}^2\nu_i^2\omegah^2}{\zeta^2\br{\omega} + \xi^2\br{\omega}}}; \\
\PYtra{i}\br{\omega} &=& \artan{\tfrac{\kgainp{i}\omega\br{\prod\limits_{k = 1}^{n}\br{\nu_k^2\omegah^2 - \omega^2} - \sum\limits_{k = 1}^{n}\ggainp{k}\nu_k^2\omegah^2\prod\limits_{\tiny\substack{l = 1 \\ l \neq k}}^{n}\br{\nu_l^2\omegah^2 - \omega^2}} + \ggainp{i}\nu_i\omegah\sum\limits_{k = 1}^{n}\kgainp{k}\nu_k\omegah\omega\prod\limits_{\tiny\substack{l = 1 \\ l \neq k}}^{n}\br{\nu_l^2\omegah^2 - \omega^2}}{\kgainp{i}\omega\sum\limits_{k = 1}^{n}\kgainp{k}\nu_k\omegah\omega\prod\limits_{\tiny\substack{l = 1 \\ l \neq k}}^{n}\br{\nu_l^2\omegah^2 - \omega^2} - \ggainp{i}\nu_i\omegah\br{\prod\limits_{k = 1}^{n}\br{\nu_k^2\omegah^2 - \omega^2} - \sum\limits_{k = 1}^{n}\ggainp{k}\nu_k^2\omegah^2\prod\limits_{\tiny\substack{l = 1 \\ l \neq k}}^{n}\br{\nu_l^2\omegah^2 - \omega^2}}}} \notag \\
\label{eq:pha_resp_yi}
&\overset{\eqref{eq:abbreviations}}{=}& \artan{\tfrac{\kgainp{i}\omega\zeta\br{\omega} + \ggainp{i}\nu_i\omegah\xi\br{\omega}}{\kgainp{i}\omega\xi\br{\omega} - \ggainp{i}\nu_i\omegah\zeta\br{\omega}}}; \\
\AQtra{i}\br{\omega} &=& \nu_i\omegah\prod\limits_{\tiny\substack{k = 1 \\ k \neq i}}^{n}\br{\nu_k^2\omegah^2 - \omega^2}\sqrt{\tfrac{\ggainp{i}^2\omega^2 + \kgainp{i}^2\nu_i^2\omegah^2}{\br{\prod\limits_{k = 1}^{n}\br{\nu_k^2\omegah^2 - \omega^2} - \sum\limits_{k = 1}^{n}\ggainp{k}\nu_k^2\omegah^2\prod\limits_{\tiny\substack{l = 1 \\ l \neq k}}^{n}\br{\nu_l^2\omegah^2 - \omega^2}}^2 + \br{\sum\limits_{k = 1}^{n}\kgainp{k}\nu_k\omegah\omega\prod\limits_{\tiny\substack{l = 1 \\ l \neq k}}^{n}\br{\nu_l^2\omegah^2 - \omega^2}}^2}} \notag \\
\label{eq:amp_resp_qi}
&\overset{\eqref{eq:abbreviations}}{=}& \nu_i\omegah\prod\limits_{\tiny\substack{k = 1 \\ k \neq i}}^{n}\br{\nu_k^2\omegah^2 - \omega^2}\sqrt{\tfrac{\ggainp{i}^2\omega^2 + \kgainp{i}^2\nu_i^2\omegah^2}{\zeta^2\br{\omega} + \xi^2\br{\omega}}}; \\
\PQtra{i}\br{\omega} &=& \artan{\tfrac{\ggainp{i}\omega\br{\prod\limits_{k = 1}^{n}\br{\nu_k^2\omegah^2 - \omega^2} - \sum\limits_{k = 1}^{n}\ggainp{k}\nu_k^2\omegah^2\prod\limits_{\tiny\substack{l = 1 \\ l \neq k}}^{n}\br{\nu_l^2\omegah^2 - \omega^2}} - \kgainp{i}\nu_i\omegah\sum\limits_{k = 1}^{n}\kgainp{k}\nu_k\omegah\omega\prod\limits_{\tiny\substack{l = 1 \\ l \neq k}}^{n}\br{\nu_l^2\omegah^2 - \omega^2}}{\ggainp{i}\omega\sum\limits_{k = 1}^{n}\kgainp{k}\nu_k\omegah\omega\prod\limits_{\tiny\substack{l = 1 \\ l \neq k}}^{n}\br{\nu_l^2\omegah^2 - \omega^2} + \kgainp{i}\nu_i\omegah\br{\prod\limits_{k = 1}^{n}\br{\nu_k^2\omegah^2 - \omega^2} - \sum\limits_{k = 1}^{n}\ggainp{k}\nu_k^2\omegah^2\prod\limits_{\tiny\substack{l = 1 \\ l \neq k}}^{n}\br{\nu_l^2\omegah^2 - \omega^2}}}} \notag \\
\label{eq:pha_resp_qi}
&\overset{\eqref{eq:abbreviations}}{=}& \artan{\tfrac{\ggainp{i}\omega\zeta\br{\omega} - \kgainp{i}\nu_i\omegah\xi\br{\omega}}{\ggainp{i}\omega\xi\br{\omega} + \kgainp{i}\nu_i\omegah\zeta\br{\omega}}}; \\
\label{eq:amp_resp_ey}
\AEytra\br{\omega} &=& \tfrac{\prod\limits_{k = 1}^{n}\br{\nu_k^2\omegah^2 - \omega^2}}{\sqrt{\br{\prod\limits_{k = 1}^{n}\br{\nu_k^2\omegah^2 - \omega^2} - \sum\limits_{k = 1}^{n}\ggainp{k}\nu_k^2\omegah^2\prod\limits_{\tiny\substack{l = 1 \\ l \neq k}}^{n}\br{\nu_l^2\omegah^2 - \omega^2}}^2 + \br{\sum\limits_{k = 1}^{n}\kgainp{k}\nu_k\omegah\omega\prod\limits_{\tiny\substack{l = 1 \\ l \neq k}}^{n}\br{\nu_l^2\omegah^2 - \omega^2}}^2}} \overset{\eqref{eq:abbreviations}}{=} \tfrac{\prod\limits_{k = 1}^{n}\br{\nu_k^2\omegah^2 - \omega^2}}{\sqrt{\zeta^2\br{\omega} + \xi^2\br{\omega}}} \\
\label{eq:pha_resp_ey}
\PEytra\br{\omega} &=& \artan{\tfrac{- \sum\limits_{k = 1}^{n}\kgainp{k}\nu_k\omegah\omega\prod\limits_{\tiny\substack{l = 1 \\ l \neq k}}^{n}\br{\nu_l^2\omegah^2 - \omega^2}}{\prod\limits_{k = 1}^{n}\br{\nu_k^2\omegah^2 - \omega^2} - \sum\limits_{k = 1}^{n}\ggainp{k}\nu_k^2\omegah^2\prod\limits_{\tiny\substack{l = 1 \\ l \neq k}}^{n}\br{\nu_l^2\omegah^2 - \omega^2}}} \overset{\eqref{eq:abbreviations}}{=} \artan{\tfrac{- \xi\br{\omega}}{\zeta\br{\omega}}}.
\end{eqnarray}
Hence, for an input signal of the form 
$$
y(t) = \sum\limits_{\nu \in \mathbb{H}_n}a_{\nu}\cosine{\nu\omega t + \phi_{\nu,0}},
$$
the estimated in-phase and quadrature signal and the overall error, in \emph{quasi-steady state}, for all $i \in \{1,\dots,n\}$, are given by
\begin{equation}
\left.
\begin{array}{rcl}
\yhp{i}(t) & = & \sum\limits_{\nu \in \mathbb{H}_n}\AYtra{i}\br{\nu\omega}a_{\nu}\cosine{\nu\omega t + \phi_{\nu,0} + \PYtra{i}\br{\nu\omega}}, \\
\qhp{i}(t) & = & \sum\limits_{\nu \in \mathbb{H}_n}\AQtra{i}\br{\nu\omega}a_{\nu}\cosine{\nu\omega t + \phi_{\nu,0} + \PQtra{i}\br{\nu\omega}}, \quad \text{ and } \\
\ey(t) & = &\sum\limits_{\nu \in \mathbb{H}_n}\AEytra\br{\nu\omega}a_{\nu}\cosine{\nu\omega t + \phi_{\nu,0} + \PEytra\br{\nu\omega}},
\end{array}
\right\}
\label{eq:yhat_i(t), qhat_i(t) and e_y(t) in quasi-steady state}
\end{equation}
respectively.

\subsubsection{Sign-correct adaption law}
\label{sec_fll_principle}
Now, the main result of this appendix can be presented. 
\begin{proposition}[Sign-correct adaption over one period]
Let $\omega>0$ and $T_i :=\tfrac{2\pi}{\nu_i\omega}$ for $\nu_i \in \mathbb{H}_n$ and $i \in \{1,\dots,n\}$. Consider system~\eqref{eq:observer_sogi - appendix} with $\omegah > 0$ and introduce the integral 
\begin{equation}
 \int\limits_{t}^{t + T_i}\eyssp{i}\br{\tau}\lamvec^{\top}\xsogissp{i,i}\br{\tau}\dx\tau
 \label{eq:integral}
\end{equation}
where $\eyssp{i}$ is the $i$-th error component and $\xsogissp{i,i}$ is the $i$-th component of the $i$-th state vector in quasi-steady state (indicated by the subscript "$\infty$"). Then, the following holds
\begin{equation}
 \forall\, \lamvec \in \R^{2n} \in \bc{\Bigl.\mv{\alpha}\,\Bigr|\, \lvec^{\top}\underbrace{\blockdiag\br{\mm{O}_{2\times2},\, \ldots,\, \Jbar,\, \ldots,\, \mm{O}_{2\times2}}}_{=: \Jsub{i}}\mv{\alpha} < 0} \colon \int\limits_{t}^{t + T_i}\eyssp{i}\br{\tau}\lamvec^{\top}\xsogissp{i,i}\br{\tau}\dx\tau \begin{cases} > 0, &\omegah < \omega \\ = 0, &\omegah = \omega \\ < 0, &\omegah > \omega. \end{cases}
 \label{eq:sign-correct adaption condition}
\end{equation}
Moreover, if $\lamvec = \Jsub{i}^{-1}\lvec$, then the integral~\eqref{eq:integral} over one period $T_i$ attains its maximal (or minimal, resp.) value and the phases of $\eyssp{i}(t)$ and $\xsogissp{i,i}(t)$ are identical.
\end{proposition}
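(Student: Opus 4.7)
The plan is to reduce the integral \eqref{eq:integral} to a closed-form frequency-domain expression by exploiting the sinusoidal quasi-steady-state structure established in Appendix~\ref{sec_app_amp_pha_resp}. First, I would substitute the quasi-steady-state representations \eqref{eq:yhat_i(t), qhat_i(t) and e_y(t) in quasi-steady state}: the $i$-th harmonic contributions to $\eyssp{i}(\cdot)$ and to the sub-state vector $\xsogissp{i,i}(\cdot)$ are cosines of the common angular frequency $\nu_i\omega$, with amplitudes and phases given by \eqref{eq:amp_resp_yi}--\eqref{eq:pha_resp_ey}. Since $T_i=2\pi/(\nu_i\omega)$ is exactly one period, the orthogonality identity \eqref{eq:integral_cosine_squared} evaluates the integral in closed form as
\[
\tfrac{\pi a_{\nu_i}^2}{\nu_i\omega}\,\AEytra(\nu_i\omega)\bigl[\lambda_{2i-1}\AYtra{i}(\nu_i\omega)\cos(\PEytra - \PYtra{i}) + \lambda_{2i}\AQtra{i}(\nu_i\omega)\cos(\PEytra - \PQtra{i})\bigr],
\]
with all phases evaluated at $\nu_i\omega$.

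Second, to expose the sign structure, I would rewrite this right-hand side in phasor form as $\tfrac{T_i a_{\nu_i}^2}{2}\,\mathrm{Re}\bigl[\Eytra(j\nu_i\omega)\,\overline{\lambda_{2i-1}\Ytra{i}(j\nu_i\omega)+\lambda_{2i}\Qtra{i}(j\nu_i\omega)}\bigr]$ (invoking the preliminary identity \eqref{eq:a_cos_sin_phi}) and study the rational function $(\lambda_{2i-1}\Ytra{i}(s)+\lambda_{2i}\Qtra{i}(s))/\Eytra(s)$. The common numerator factor $\prod_{k\ne i}(s^2+\nu_k^2\omegah^2)$ and the common denominator of the three transfer functions cancel, leaving a simple rational function with denominator $s^2+\nu_i^2\omegah^2$ and a numerator linear in $s$ whose coefficients are linear in $(k_i,g_i,\lambda_{2i-1},\lambda_{2i})$. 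At $s=j\nu_i\omega$ the denominator becomes $\nu_i^2(\omegah^2-\omega^2)$, which both vanishes exactly when $\omegah=\omega$ (establishing the zero case of \eqref{eq:sign-correct adaption condition}) and carries the sign of the frequency mismatch.

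Third, after collecting terms, the integral simplifies to a positive constant times $\omegah^2\,(\lvec^{\top}\Jsub{i}\lamvec)/(\omegah^2-\omega^2)$: the quantity $-\lambda_{2i-1}g_i+\lambda_{2i}k_i$ that appears in the numerator is identified, up to a factor of $-1/\nu_i$, with $\lvec^{\top}\Jsub{i}\lamvec$ by direct computation of the bilinear form $(\nu_ik_i,\nu_ig_i)\bar{\ve{J}}(\lambda_{2i-1},\lambda_{2i})^{\top}$. The three sign cases of \eqref{eq:sign-correct adaption condition} then follow immediately from the hypothesis on $\lvec^{\top}\Jsub{i}\lamvec$ combined with the sign of $\omegah^2-\omega^2$. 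For the optimality claim, the same formula shows that the integral is a linear functional of the block $(\lambda_{2i-1},\lambda_{2i})$, so by a Cauchy--Schwarz argument its extremum on a norm-sphere in that block is attained precisely when $(\lambda_{2i-1},\lambda_{2i})$ is parallel to $(g_i,-k_i)$, i.e.\ when $\lamvec=\Jsub{i}^{-1}\lvec$; under this choice the cosine of the phase difference equals $\pm 1$, so $\eyssp{i}$ and $\lamvec^{\top}\xsogissp{i,i}$ are in-phase or counter-phase.

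The main obstacle is the careful sign bookkeeping in the second and third steps: tracking the signs arising from $\bar{\ve{J}}^{-1}=-\bar{\ve{J}}$, from the complex conjugation in $\mathrm{Re}[\,\cdot\,\overline{\cdot}\,]$, and from the factor $\omegah^2-\omega^2$ simultaneously is fiddly. Once the integral is expressed as the real part of a simple rational function evaluated at $s=j\nu_i\omega$, however, the remainder of the argument is elementary algebra, requiring no tools beyond the transfer-function formulas of Appendix~\ref{sec_app_amp_pha_resp} and the orthogonality identity \eqref{eq:integral_cosine_squared}.
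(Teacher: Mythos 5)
Your proposal is correct and follows essentially the same route as the paper: substitute the quasi-steady-state sinusoids of Appendix~\ref{sec_app_amp_pha_resp}, integrate the product of two same-frequency cosines over one period via \eqref{eq:integral_cosine_squared}, and reduce the result to a positive factor times $(\omega^2-\widehat{\omega}^2)\,\lvec^{\top}\Jsub{i}\lamvec$. The only real difference is organisational: where the paper grinds through the sum-of-sinusoids and $\arctan\!2$-addition identities \eqref{eq:sum_sines} to collapse $\lamvecp{i}^{\top}\xsogissp{i,i}$ into a single cosine before integrating, you package the period average as $\tfrac{1}{2}\mathrm{Re}\bigl[\Eytra(\jmath\nu_i\omega)\,\overline{N(\jmath\nu_i\omega)}\bigr]=\tfrac{1}{2}|\Eytra(\jmath\nu_i\omega)|^2\,\mathrm{Re}\bigl[N(\jmath\nu_i\omega)/\Eytra(\jmath\nu_i\omega)\bigr]$ and exploit the cancellation of the common factors in $N/\Eytra$, which is markedly cleaner and yields the same closed form. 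Two small points to tighten. First, the parenthetical claim that the vanishing of the denominator $\nu_i^2(\widehat{\omega}^2-\omega^2)$ ``establishes the zero case'' is imprecise: a vanishing denominator makes the ratio blow up; the integral vanishes at $\widehat{\omega}=\omega$ because the prefactor $|\Eytra(\jmath\nu_i\omega)|^2$ carries a \emph{double} zero in $(\widehat{\omega}^2-\omega^2)$ that dominates the simple pole, leaving a net factor $(\omega^2-\widehat{\omega}^2)$ --- which you do use correctly in your final step. Second, your computation (like the paper's own proof) forces the admissible set to be $\{\mv{\alpha}\mid \lvec^{\top}\Jsub{i}\mv{\alpha}>0\}$ --- consistent with the optimal choice $\lamvec=\Jsub{i}^{-1}\lvec$, for which $\lvec^{\top}\Jsub{i}\lamvec=\nu_i^2(k_i^2+g_i^2)>0$ --- whereas the proposition as printed states $<0$; saying the sign cases ``follow from the hypothesis'' glosses over this discrepancy, which originates in the paper's statement rather than in your argument but should be made explicit.
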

\begin{proof}
Define $\lamvec := (0,\, 0,\, \ldots,\, \underbrace{\lamyp{i},\, \lamqp{i}}_{=: \lamvecp{i}^{\top} \in \R^{2}},\, \ldots,\, 0,\, 0)^{\top} \in \R^{2n}$ and observe that
\begin{eqnarray}
\label{eq:cond_lamvec}
\lamvecp{i}^{\top}\xsogissp{i,i}(t) & \stackrel{\eqref{eq:yhat_i(t), qhat_i(t) and e_y(t) in quasi-steady state}}{=}& \lamqp{i}\AQtra{i}\br{\nu_i\omega}a_{\nu_i}\cosine{\nu_i\omega t + \PQtra{i}\br{\nu_i\omega}} + \lamyp{i}\AYtra{i}\br{\nu_i\omega}a_{\nu_i}\cosine{\nu_i\omega t + \PYtra{i}\br{\nu_i\omega}}.
\end{eqnarray}
By invoking the trigonometric identities~\cite[p.~125]{2000_Rade_SpringersMathematischeFormeln}
\begin{equation}
\label{eq:sum_sines}
\left.\begin{array}{l}
\sum\limits_{i = 1}^{n}a_i\cosine{\alpha_i} = \sqrt{\sum\limits_{i = 1}^{n}\sum\limits_{i = 1}^{n}a_ia_j \cdot \cosine{\alpha_i - \alpha_j}}\cosine{\alpha_1 + \artan{\tfrac{\sum\limits_{i = 1}^{n}a_i\sine{\alpha_i - \alpha_1}}{\sum\limits_{i = 1}^{n}a_i\cosine{\alpha_i - \alpha_1}}}} \qquad \text{and} \\
\artan{\tfrac{y_1}{x_1}} + \artan{\tfrac{y_2}{x_2}} = \artan{\tfrac{\tfrac{y_1}{x_1} + \tfrac{y_2}{x_2}}{1 - \tfrac{y_1y_2}{x_1x_2}}},
\end{array}\right\}
\end{equation}
it follows
\begin{eqnarray}
\lamvecp{i}^{\top}\xsogissp{i,i}(t) &\overset{\eqref{eq:cond_lamvec},\eqref{eq:sum_sines}}{=}& \sqrt{\lamqp{i}^2\AQtra{i}^2\br{\nu_i\omega}a_{\nu_i}^2 + \lamyp{i}^2\AYtra{i}^2\br{\nu_i\omega}a_{\nu_i}^2 + 2\lamqp{i}\lamyp{i}\AQtra{i}\br{\nu_i\omega}\AYtra{i}\br{\nu_i\omega}a_{\nu_i}^2\cosine{\PQtra{i}\br{\nu_i\omega} - \PYtra{i}\br{\nu_i\omega}}} \notag \\
&&\cdot \cosine{\nu_i\omega t + \PQtra{i}\br{\nu_i\omega} + \artan{\tfrac{\lamyp{i}\AYtra{i}\br{\nu_i\omega}a_{\nu_i}\sine{\PYtra{i}\br{\nu_i\omega} - \PQtra{i}\br{\nu_i\omega}}}{\lamqp{i}\AQtra{i}\br{\nu_i\omega}a_{\nu_i} + \lamyp{i}\AYtra{i}\br{\nu_i\omega}a_{\nu_i}\cosine{\PYtra{i}\br{\nu_i\omega} - \PQtra{i}\br{\nu_i\omega}}}}} \notag \\
&\overset{\substack{\eqref{eq:trig_iden}, \eqref{eq:a_cos_sin_phi} \\ \eqref{eq:amp_resp_yi} - \eqref{eq:pha_resp_qi}}}{=}& 
a_{\nu_i}\tfrac{\nu_i^2\omegah\prod\limits_{\tiny\substack{k = 1 \\ k \neq i}}^{n}\br{\nu_k^2\omegah^2 - \nu_i^2\omega^2}}{\sqrt{\zeta^2\br{\nu_i\omega} + \xi^2\br{\nu_i\omega}}}\sqrt{\lamqp{i}^2\br{\ggainp{i}^2\omega^2 + \kgainp{i}^2\omegah^2} + \lamyp{i}^2\br{\kgainp{i}^2\omega^2 + \ggainp{i}^2\omegah^2} + 2\lamyp{i}\lamqp{i}\kgainp{i}\ggainp{i}\br{\omega^2 - \omegah^2}} \notag \\
&&\cdot\cosine{\nu_i\omega t + \artan{\tfrac{\ggainp{i}\omega\zeta\br{\nu_i\omega} - \kgainp{i}\omegah\xi\br{\nu_i\omega}}{\ggainp{i}\omega\xi\br{\nu_i\omega} + \kgainp{i}\omegah\zeta\br{\nu_i\omega}}} + \artan{\tfrac{\lamyp{i}\tfrac{\br{\kgainp{i}^2 + \ggainp{i}^2}\omegah\omega}{\sqrt{\br{\ggainp{i}^2\omega^2 + \kgainp{i}^2\omegah^2}}}}{\lamqp{i}\sqrt{\ggainp{i}^2\omega^2 + \kgainp{i}^2\omegah^2} + \lamyp{i}\tfrac{\kgainp{i}\ggainp{i}\br{\omega^2 - \omegah^2}}{\sqrt{\br{\ggainp{i}^2\omega^2 + \kgainp{i}^2\omegah^2}}}}}} \notag \\
&\overset{\eqref{eq:sum_sines}}{=}& a_{\nu_i}\tfrac{\nu_i^2\omegah\prod\limits_{\tiny\substack{k = 1 \\ k \neq i}}^{n}\br{\nu_k^2\omegah^2 - \nu_i^2\omega^2}}{\sqrt{\zeta^2\br{\nu_i\omega} + \xi^2\br{\nu_i\omega}}}\sqrt{\br{\lamyp{i}\kgainp{i} + \lamqp{i}\ggainp{i}}^2\omega^2 + \br{\lamyp{i}\ggainp{i} - \lamqp{i}\kgainp{i}}^2\omegah^2} \notag \\
&&\cdot \cosine{\nu_i\omega t + \artan{\tfrac{\tfrac{\ggainp{i}\omega\zeta\br{\nu_i\omega} - \kgainp{i}\omegah\xi\br{\nu_i\omega}}{\ggainp{i}\omega\xi\br{\nu_i\omega} + \kgainp{i}\omegah\zeta\br{\nu_i\omega}} + \tfrac{\lamyp{i}\omegah\omega\br{\kgainp{i}^2 + \ggainp{i}^2}}{\lamqp{i}\br{\ggainp{i}^2\omega^2 + \kgainp{i}^2\omegah^2} + \lamyp{i}\br{\kgainp{i}\ggainp{i}\br{\omega^2 - \omegah^2}}}}{1 - \tfrac{\ggainp{i}\omega\zeta\br{\nu_i\omega} - \kgainp{i}\omegah\xi\br{\nu_i\omega}}{\ggainp{i}\omega\xi\br{\nu_i\omega} + \kgainp{i}\omegah\zeta\br{\nu_i\omega}}\tfrac{\lamyp{i}\omegah\omega\br{\kgainp{i}^2 + \ggainp{i}^2}}{\lamqp{i}\br{\ggainp{i}^2\omega^2 + \kgainp{i}^2\omegah^2} + \lamyp{i}\br{\kgainp{i}\ggainp{i}\br{\omega^2 - \omegah^2}}}}}} \notag \\
&=& a_{\nu_i}\tfrac{\nu_i^2\omegah\prod\limits_{\tiny\substack{k = 1 \\ k \neq i}}^{n}\br{\nu_k^2\omegah^2 - \nu_i^2\omega^2}}{\sqrt{\zeta^2\br{\nu_i\omega} + \xi^2\br{\nu_i\omega}}}\sqrt{\br{\lamyp{i}\kgainp{i} + \lamqp{i}\ggainp{i}}^2\omega^2 + \br{\lamyp{i}\ggainp{i} - \lamqp{i}\kgainp{i}}^2\omegah^2} \notag \\
\label{eq:cond_lamvec_ss_trig_iden}
&&\cdot \cosine{\nu_i\omega t + \artan{\tfrac{\lamqp{i}\br{\ggainp{i}\omega\zeta\br{\nu_i\omega} - \kgainp{i}\omegah\xi\br{\nu_i\omega}} + \lamyp{i}\br{\ggainp{i}\omegah\xi\br{\nu_i\omega} + \kgainp{i}\omega\zeta\br{\nu_i\omega}}}{\lamqp{i}\br{\ggainp{i}\omega\xi\br{\nu_i\omega} + \kgainp{i}\omegah\zeta\br{\nu_i\omega}} + \lamyp{i}\br{\kgainp{i}\omega\xi\br{\nu_i\omega} - \ggainp{i}\omegah\zeta\br{\nu_i\omega}}}}}.
\end{eqnarray}
Now, multiplying $\lamvecp{i}^{\top}\xsogissp{i,i}(t)$ with $\eyssp{i}(t)$ yields
\begin{align}
&\eyssp{i}(t)\lamvecp{i}^{\top}\xsogissp{i,i}(t) \overset{\eqref{eq:amp_resp_ey}, \eqref{eq:pha_resp_ey}, \eqref{eq:cond_lamvec_ss_trig_iden}}{=} a_{\nu_i}^2\tfrac{\nu_i^4\omegah\prod\limits_{\tiny\substack{k = 1 \\ k \neq i}}^{n}\br{\nu_k^2\omegah^2 - \nu_i^2\omega^2}^2\br{\omegah^2 - \omega^2}}{\zeta^2\br{\nu_i\omega} + \xi^2\br{\nu_i\omega}}\sqrt{\br{\lamyp{i}\kgainp{i} + \lamqp{i}\ggainp{i}}^2\omega^2 + \br{\lamyp{i}\ggainp{i} - \lamqp{i}\kgainp{i}}^2\omegah^2} \notag \\
\label{eq:just_before_integration}
&\cosine{\nu_i\omega\tau + \artan{\tfrac{- \xi\br{\nu_i\omega}}{\zeta\br{\nu_i\omega}}}}\cosine{\nu_i\omega\tau + \artan{\tfrac{\lamqp{i}\br{\ggainp{i}\omega\zeta\br{\nu_i\omega} - \kgainp{i}\omegah\xi\br{\nu_i\omega}} + \lamyp{i}\br{\ggainp{i}\omegah\xi\br{\nu_i\omega} + \kgainp{i}\omega\zeta\br{\nu_i\omega}}}{\lamqp{i}\br{\ggainp{i}\omega\xi\br{\nu_i\omega} + \kgainp{i}\omegah\zeta\br{\nu_i\omega}} + \lamyp{i}\br{\kgainp{i}\omega\xi\br{\nu_i\omega} - \ggainp{i}\omegah\zeta\br{\nu_i\omega}}}}}.
\end{align}
Solving the integral~\eqref{eq:integral} over one period $T_i = \tfrac{2\pi}{\nu_i\omega}$ gives
\begin{align}
&\int\limits_{t}^{t + \tfrac{2\pi}{\nu_i\omega}}\eyssp{i}\br{\tau}\lamvecp{i}^{\top}\xsogissp{i,i}\br{\tau}\dx\tau \overset{\eqref{eq:just_before_integration}}{=} a_{\nu_i}^2\tfrac{\nu_i^4\omegah\prod\limits_{\tiny\substack{k = 1 \\ k \neq i}}^{n}\br{\nu_k^2\omegah^2 - \nu_i^2\omega^2}^2\br{\omegah^2 - \omega^2}}{\zeta^2\br{\nu_i\omega} + \xi^2\br{\nu_i\omega}}\sqrt{\br{\lamyp{i}\kgainp{i} + \lamqp{i}\ggainp{i}}^2\omega^2 + \br{\lamyp{i}\ggainp{i} - \lamqp{i}\kgainp{i}}^2\omegah^2} \notag \\
&\int\limits_{t}^{t + \tfrac{2\pi}{\nu_i\omega}}\cosine{\nu_i\omega\tau + \artan{\tfrac{- \xi\br{\nu_i\omega}}{\zeta\br{\nu_i\omega}}}}\cosine{\nu_i\omega\tau + \artan{\tfrac{\lamqp{i}\br{\ggainp{i}\omega\zeta\br{\nu_i\omega} - \kgainp{i}\omegah\xi\br{\nu_i\omega}} + \lamyp{i}\br{\ggainp{i}\omegah\xi\br{\nu_i\omega} + \kgainp{i}\omega\zeta\br{\nu_i\omega}}}{\lamqp{i}\br{\ggainp{i}\omega\xi\br{\nu_i\omega} + \kgainp{i}\omegah\zeta\br{\nu_i\omega}} + \lamyp{i}\br{\kgainp{i}\omega\xi\br{\nu_i\omega} - \ggainp{i}\omegah\zeta\br{\nu_i\omega}}}}}\dx\tau \notag \\
\overset{\eqref{eq:integral_cosine_squared}}{=}& a_{\nu_i}^2\tfrac{\nu_i^4\omegah\prod\limits_{\tiny\substack{k = 1 \\ k \neq i}}^{n}\br{\nu_k^2\omegah^2 - \nu_i^2\omega^2}^2\br{\omegah^2 - \omega^2}}{\zeta^2\br{\nu_i\omega} + \xi^2\br{\nu_i\omega}}\sqrt{\br{\lamyp{i}\ggainp{i} - \lamqp{i}\kgainp{i}}^2\omegah^2 + \br{\lamyp{i}\kgainp{i} + \lamqp{i}\ggainp{i}}^2\omega^2} \notag \\
&\tfrac{\pi}{\nu_i\omega}\cosine{\artan{\tfrac{- \xi\br{\nu_i\omega}}{\zeta\br{\nu_i\omega}}} - \artan{\tfrac{\lamqp{i}\br{\ggainp{i}\omega\zeta\br{\nu_i\omega} - \kgainp{i}\omegah\xi\br{\nu_i\omega}} + \lamyp{i}\br{\ggainp{i}\omegah\xi\br{\nu_i\omega} + \kgainp{i}\omega\zeta\br{\nu_i\omega}}}{\lamqp{i}\br{\ggainp{i}\omega\xi\br{\nu_i\omega} + \kgainp{i}\omegah\zeta\br{\nu_i\omega}} + \lamyp{i}\br{\kgainp{i}\omega\xi\br{\nu_i\omega} - \ggainp{i}\omegah\zeta\br{\nu_i\omega}}}}} \notag \\
\overset{\eqref{eq:sum_sines}}{=}& a_{\nu_i}^2\tfrac{\pi\nu_i^3\omegah\prod\limits_{\tiny\substack{k = 1 \\ k \neq i}}^{n}\br{\nu_k^2\omegah^2 - \nu_i^2\omega^2}^2\br{\omegah^2 - \omega^2}}{\omega\br{\zeta^2\br{\nu_i\omega} + \xi^2\br{\nu_i\omega}}}\sqrt{\br{\lamyp{i}\ggainp{i} - \lamqp{i}\kgainp{i}}^2\omegah^2 + \br{\lamyp{i}\kgainp{i} + \lamqp{i}\ggainp{i}}^2\omega^2}\cosine{\artan{\tfrac{- \br{\lamqp{i}\ggainp{i} + \lamyp{i}\kgainp{i}}\omega}{\br{\lamqp{i}\kgainp{i} - \lamyp{i}\ggainp{i}}\omegah}}} \notag \\
\label{eq:solved_integral}
\overset{\eqref{eq:trig_iden}}{=}& a_{\nu_i}^2\tfrac{\pi\nu_i^3\omegah^2\prod\limits_{\tiny\substack{k = 1 \\ k \neq i}}^{n}\br{\nu_k^2\omegah^2 - \nu_i^2\omega^2}^2\br{\omegah^2 - \omega^2}}{\omega\br{\zeta^2\br{\nu_i\omega} + \xi^2\br{\nu_i\omega}}}\br{\lamqp{i}\kgainp{i} - \lamyp{i}\ggainp{i}} = a_{\nu_i}^2\tfrac{\pi\nu_i^2\omegah^2\prod\limits_{\tiny\substack{k = 1 \\ k \neq i}}^{n}\br{\nu_k^2\omegah^2 - \nu_i^2\omega^2}^2\br{\omega^2 - \omegah^2}}{\omega\br{\zeta^2\br{\nu_i\omega} + \xi^2\br{\nu_i\omega}}}\lvec^{\top}\Jsub{i}\lamvec.
\end{align}
Since $\omega > 0$, observe that only $\omega^2 - \omegah^2$ can change its sign in~\eqref{eq:solved_integral}; all other terms of the nominator and denominator are positive. Hence, only for $\lvec^{\top}\Jsub{i}\lamvec > 0$, the following condition is satisfied
$$
 \forall \mv{\lambda} \in \left\{\left. \mv{\alpha} \in \R^{2n} \right| \mv{l}^\top\mm{J}_i\mv{\lambda} > 0\right\} \colon \int\limits_{t}^{t + T_i}\eyssp{i}\br{\tau}\lamvec^{\top}\xsogissp{i,i}\br{\tau}\dx\tau \begin{cases} > 0, &\omegah < \omega \\ = 0, &\omegah = \omega \\ < 0, &\omegah > \omega, \end{cases};
$$
which proves assertion~\eqref{eq:sign-correct adaption condition}. 
To optimize~\eqref{eq:solved_integral} (to obtain maximal or minimal value), $\lamvec = \Jsub{i}^{- 1}\lvec$ must hold, since
$$
 \lvec^{\top}\Jsub{i}\lamvec = \lvec^{\top}\lvec = \kgainp{i}^2 + \ggainp{i}^2.
$$
Moreover, for $\lamvec = \Jsub{i}^{- 1}\lvec$, the phase is given by
$$
 \Phi_{\lamvec^{\top}\xsogissp{i,i}(t)} \stackrel{\eqref{eq:cond_lamvec_ss_trig_iden}}{=} \artan{\tfrac{\lamqp{i}\br{\ggainp{i}\omega\zeta\br{\nu_i\omega} - \kgainp{i}\omegah\xi\br{\nu_i\omega}} + \lamyp{i}\br{\ggainp{i}\omegah\xi\br{\nu_i\omega} + \kgainp{i}\omega\zeta\br{\nu_i\omega}}}{\lamqp{i}\br{\ggainp{i}\omega\xi\br{\nu_i\omega} + \kgainp{i}\omegah\zeta\br{\nu_i\omega}} + \lamyp{i}\br{\kgainp{i}\omega\xi\br{\nu_i\omega} - \ggainp{i}\omegah\zeta\br{\nu_i\omega}}}} \overset{\lamvec = \Jsub{i}^{- 1}\lvec}{=} \artan{\tfrac{- \xi\br{\nu_i\omega}}{\zeta\br{\nu_i\omega}}},
$$
which is identical to $\PEytra\br{\nu_i\omega}$ in~\eqref{eq:pha_resp_ey}. This completes the proof.
\end{proof}
Concluding, the choice $\lamvec = \Jsub{i}^{- 1}\lvec$ gives the optimal choice for the adaption law~\eqref{eq:mFLL with AWU and rate limiter} of the modified Frequency Locked Loop:~It guarantees sign-correct (phase-correct) and optimal adaption, in the sense that the maximal or minimal value of $\eyssp{i}\br{t}\lamvec^{\top}\xsogissp{i,i}\br{t}$ is fed to the adaption law.

\clearpage
\bibliographystyle{ieeetr}
\bibliography{./MyNewBIB_allinone}
\end{document}